\documentclass[10pt]{article}
\usepackage{graphicx} % Required for inserting images
\usepackage[pagebackref, colorlinks = true, linkcolor = blue, urlcolor  = blue, citecolor = red, bookmarks, hypertexnames=false]{hyperref}
\usepackage{authblk}

\usepackage[a4paper, margin=1.3in]{geometry}

\usepackage{todonotes}

\usepackage{amssymb}
\usepackage{amsmath}
\usepackage{amsthm}
\usepackage{dsfont}
\usepackage{fdsymbol}
\usepackage{tikz}
\usepackage{tikz-cd}
\usetikzlibrary{shapes,calc,decorations.markings, decorations.pathmorphing, 3d}
\usepackage{mathtools}
\usepackage{physics}
\usepackage{braket}
\usepackage{xcolor}
\usepackage[capitalise]{cleveref}
\usepackage[utf8]{inputenc}
\usepackage[most]{tcolorbox}
\usepackage[skip=5pt]{caption}

\usetikzlibrary{calc}
\usetikzlibrary{fadings}
\usetikzlibrary{shapes.misc, shapes.geometric}
\usetikzlibrary{decorations.pathreplacing,decorations.pathmorphing,angles,quotes}

\newtheoremstyle{newdefinition}{}{}{\normalfont}{}{\bfseries}{}{\newline}
{\thmname{#1} \thmnumber{#2}\thmnote{ (#3)}}

\newtheoremstyle{newplain}{}{}{\itshape}{}{\bfseries}{}{1em}
{\thmname{#1} \thmnumber{#2}\thmnote{ (#3)}}

\newtheoremstyle{newremark}{}{}{\normalfont}{}{\bfseries}{}{1em}
{\thmname{#1}}

\newtheorem{definition}{Definition}[section]

\theoremstyle{newplain}
\newtheorem{theorem}[definition]{Theorem}
\newtheorem{lemma}[definition]{Lemma}
\newtheorem{proposition}[definition]{Proposition}
\newtheorem{corollary}[definition]{Corollary}

\DeclareMathOperator{\Z}{\mathbb{Z}}
\DeclareMathOperator{\C}{\mathbb{C}}
\DeclareMathOperator{\E}{\mathbb{E}}

\DeclareMathOperator{\HH}{\mathcal{H}}

\DeclareMathOperator{\BB}{\mathcal{B}}

\DeclareMathOperator{\FF}{\mathcal{F}}
\DeclareMathOperator{\LL}{\mathcal{L}}

\DeclareMathOperator{\NN}{\mathcal{N}}

\DeclareMathOperator{\PP}{\mathcal{P}}

\DeclareMathOperator{\RR}{\mathcal{R}}
\DeclareMathOperator{\SSS}{\mathcal{S}}

\DeclareMathOperator{\supp}{supp}
\DeclareMathOperator{\diam}{diam}
\DeclareMathOperator{\dist}{dist}
\DeclareMathOperator{\Img}{Im}

\DeclareMathOperator{\gap}{gap}
\newcommand{\identity}{\ensuremath{\mathds{1}}}
\def\lsym{\Lambda} % set of qubits
\DeclareMathOperator{\strset}{{\mathbb{S}}}
\DeclareMathOperator{\plqset}{{\mathbb{P}}}
\newcommand{\bdy}{\partial}
\newcommand{\cobdy}{d}

\def\str{\medwhitestar}
\def\plq{\square}

\numberwithin{equation}{section}

\tikzset{ equation/.style={ baseline={([yshift=-1.5ex]current bounding box.center)} }, torus horizontal/.style = { decoration={ markings, mark=at position 0.5 with { \draw (-2pt,-2pt) -- (2pt,2pt); \draw (2pt,-2pt) -- (-2pt,2pt); }}, decorate}, torus vertical/.style = { decoration={ markings, mark=at position 0.5 with { \draw (-2pt,-2pt) -- (2pt,2pt); \draw (-3pt,-2pt) -- (1pt,2pt); }}, decorate} }

\title{Modified logarithmic Sobolev inequalities for\\ Abelian quantum double models}

 \author[1,2]{Sebastian Stengele}
 \author[3,4]{\'Angela Capel}
 \author[5,6]{Li Gao}
 \author[7]{Angelo Lucia}
 \author[8,9]{David Pérez-García}
 \author[10]{Antonio Pérez-Hernández}
 \author[11]{Cambyse Rouzé}
 \author[1,2]{Simone Warzel}

 \affil[1]{\small Departments of Mathematics and Physics, TU M\"{u}nchen, 85747 Garching, Germany} 
 \affil[2]{\small Munich Center for Quantum Science and Technology,
 80799 M\"{u}nchen, Germany}
 \affil[3]{\small Department of Applied Mathematics and Theoretical Physics, University of Cambridge, Wilberforce Road, Cambridge, CB3 0WA, United Kingdom}
 \affil[4]{\small Fachbereich Mathematik, Universität Tübingen, 72076 Tübingen, Germany}
 \affil[5]{\small School of Mathematics and Statistics, Wuhan University, Wuhan, 430072, China}
 \affil[6]{\small Wuhan Institute of Quantum Technology, Wuhan, 430075, China}
 \affil[7]{\small Dipartimento di Matematica, Politecnico di Milano, 20133 Milano, Italy}
 \affil[8]{\small Departamento de An\'{a}lisis Matemático y Matemática Aplicada, Universidad Complutense de Madrid, 28040 Madrid, Spain}
 \affil[9]{\small Instituto de Ciencias Matemáticas, 28049 Madrid, Spain}
 \affil[10]{\small Departamento de Matem\'{a}tica Aplicada I, Escuela T\'{e}cnica Superior de Ingenieros Industriales, Universidad Nacional de Educación a Distancia, 28040 Madrid, Spain}
 \affil[11]{\small Inria, Télécom Paris - LTCI, Institut Polytechnique de Paris, 91120 Palaiseau, France}

\date{May 18, 2026} % ArXiv might recompile at random dates, changing the date in the pdf, so its better to fix it.
\begin{document}

\maketitle
\begin{abstract}
    We establish rapid mixing for Davies Markov semigroups associated with 2D Abelian quantum double models at any positive temperature. A condition of Dobrushin-Shlosman (DS) type holds at any temperature, and we show that the latter implies a modified logarithmic Sobolev inequality for the Davies Lindbladian. A key step in the argument is to verify a strong martingale condition for the local conditional expectations of the Davies semigroup in the regime of validity of the DS condition.
\end{abstract}

\section{Introduction and results}

Modified logarithmic Sobolev inequalities (MLSI) \cite{kastoryano_QuantumLogarithmicSobolev_2013} are used to derive tight bounds on the mixing times of quantum systems described by a Markovian quantum semigroup. They estimate the relative entropy between an initial quantum state and the dynamics’ fixed point in terms of the entropy production \cite{spohn_Entropyproductionquantum_1978}.
Such MLSIs have only been established in limited settings, namely for the heat-bath dynamics of the generalized depolarizing semigroup~\cite{capel_Quantumconditionalrelative_2018,beigi_Quantumreversehypercontractivity_2020} and that of specific $1$D systems~\cite{bardet_ModifiedLogarithmicSobolev_2021} as well as for Schmidt-generators of 2-local, commuting Hamiltonians~\cite{capel_ModifiedLogarithmicSobolev_2021}. For the Davies dynamics of  $k$-local, $1$D Hamiltonians an MLSI was established at any positive temperature~\cite{bardet_RapidThermalizationSpin_2023,bardet_EntropyDecayDavies_2024} and, in arbitrary dimension, in the case of  2-local, commuting Hamiltonians at high enough temperature~\cite{kochanowski_RapidThermalizationDissipative_2025} as well as for non-interacting Hamiltonians conjugated with IQP circuits  \cite{bergamaschi_QuantumComputationalAdvantage_2024}. Very recently, an MLSI for the Davies dynamics of CSS codes has been proved~\cite{stengele_ModifiedlogarithmicSobolev_2025} under Dobrushin-Shlosman type conditions of the decay of correlations of the Gibbs state, which are known to be optimal in the classical case.

The main goal of this paper is to demonstrate the flexibility of the approach in \cite{stengele_ModifiedlogarithmicSobolev_2025} by adapting it to Abelian quantum double models. 
They constitute a paradigmatic family of two-dimensional, commuting Hamiltonians exhibiting topological order and anyonic excitations~\cite{kitaev_FaulttolerantQuantumComputation_2003}. It is known that the Davies dynamics of these models exhibit fast mixing \cite{komar_NecessityEnergyBarrier_2016, lucia_ThermalizationKitaevsQuantum_2023}, in the sense that the mixing time is at most polynomial in the system size. The more robust notion of rapid mixing, a mixing time at most polylogarithmic in system size, has been an open question. 
In this work, we show that the 2D Abelian quantum double models satisfy an MLSI, and hence rapid mixing, for any positive temperature.

\subsection{Abelian quantum double models}
The model is defined on $N\times N$ squares of the 
usual square lattice $\Z^2$ equipped with periodic boundary conditions, that is, a torus. %Subsets of the torus will, in general, have open boundaries. 
This lattice is composed of vertices, edges, and faces, which are called \emph{plaquettes}. We denote by $\lsym_{N}$ the set of edges of the $ N \times N $ torus. Each edge is assigned an orientation: for simplicity, we assume that all horizontal edges point to the left and vertical edges point downward, cf.~Figure~\ref{fig:basicsquare}. To investigate the dependence of the mixing times on the system size, we fix a family of increasing lattices $\FF=\{\lsym_{N_1}, \lsym_{N_2}, \ldots\}$.

The boundary of each plaquette $p$ is formed by four edges, which we denote by $\bdy p$. According to the orientation, we will denote by $\bdy_{+} p$ (resp. $\bdy_{-} p$) the subset of edges oriented in the counterclockwise (resp. clockwise) direction along $\bdy p$. At each vertex, four incident edges form a \emph{star} $ s$. We denote the set of edges of a star by $\cobdy s$. According to the orientation,  we will denote by $\cobdy_{+} s$ (resp. $\cobdy_{-} s$) the subset of edges that are pointed away from (resp. to) the vertex.

\begin{figure}[ht]
\centering
\begin{tikzpicture}[equation, scale=0.7]

    \draw[step=1.0,gray,thick] (0,0) grid (4,4);
    
    \begin{scope}
        \draw[postaction=torus horizontal] (0,0) -- (4,0);
        \draw[postaction=torus vertical] (4,0) -- (4,4);
        \draw[postaction=torus horizontal] (0,4) -- (4,4);
        \draw[postaction=torus vertical] (0,0) -- (0,4);
        
    \end{scope}
    \draw[<->, very thick] (0,-1) -- node[below]{$N$} (4,-1);
    \draw[<->, very thick] (5,0) --  node[right]{$N$} (5,4);

\begin{scope}[xshift=10cm, decoration={
    markings,
    mark=at position 0.7 with {\arrow[very thick]{latex}}}]
\foreach \x in {1,...,4} 
\foreach \y in {1,...,4}{

\begin{scope}[xshift=\x cm, yshift=\y cm]
\draw[postaction={decorate}, gray] (0,0)  -- (-1,0); 
\draw[postaction={decorate}, gray] (-1,0)  -- (-1,-1); 
\draw[postaction={decorate}, gray] (0,0)  -- (0,-1); 
\draw[postaction={decorate}, gray] (0,-1)  -- (-1,-1); 
\end{scope}
}
\end{scope}

\end{tikzpicture}
\caption{Left: the $N\times N$ lattice on the torus. Right: The orientation of the edges.}\label{fig:basicsquare}
\end{figure}

Let us now fix an arbitrary finite Abelian group $G$ with $ |G| $ elements. We will denote by $\overline{g}$ (instead of $g^{-1}$) the inverse of any $g \in G$, and by $1 \in G$ the identity. The character group of $G$ is the set $\hat G$ of all homomorphisms $\chi:G \to U(1)$, where $U(1)$ denotes the set of complex numbers of modulus one, endowed with the pointwise multiplication. The inverse of a character $\chi$ is denoted by $\overline{\chi}$ and satisfies $\overline{\chi}(g) = \overline{\chi(g)} = \chi(\overline{g})$, and the identity of $\hat{G}$ is the trivial character $\mathbf{1}_{G}$ that assigns the value one to every element of the group. Recall that for finite Abelian groups, $\hat{G}$ is isomorphic to $G$, and by Fourier transform one has the following orthogonality relations
\begin{equation}\label{equa:CharacterOrthogonal}
\frac{1}{|G|}\sum_{g \in G} \chi(g) = 
\begin{cases}
1 & \text{ if } \chi= \mathbf{1}_{G},\\
0 & \text{otherwise}.
\end{cases}
\qquad 
\frac{1}{|G|}\sum_{\chi \in \hat{G}} \chi(g) = 
\begin{cases}
1 & \text{ if } g=1,\\
0 & \text{otherwise}.
\end{cases}
\end{equation}

At each edge $e \in \lsym $ we attach the complex finite-dimensional Hilbert space $\mathcal{H}_{e} = \ell^2(G)$ with $\{ \ket{g} \mid g \in G\}$ denoting an orthonormal basis. The space of local observables is the linear maps
$ \mathcal{B}_{e} = \mathcal{B}(\mathcal{H}_{e}) \equiv \mathcal{M}_{|G|}(\C) $. 
For any subset $V \subseteq \lsym \in \mathcal{F} $, we then consider the tensor product Hilbert space and its observables
\[
    \mathcal{H}_{V} = \bigotimes_{e\in V} \mathcal{H}_e \quad ,\quad \mathcal{B}_{V} = \bigotimes_{e\in V} \mathcal{B}_e.
\]
As usual, we identify (isometrically) $\mathcal{B}_{V}$ with the subspace of $\mathcal{B}_{\lsym}$ consisting of elements with support in $V$ through $Q \mapsto Q \otimes \identity_{\lsym \setminus V}$. 

%The Hamiltonian of the quantum double model with gauge group $G$ on $\Lambda \in \FF$ is defined as
%\begin{equation}\label{equa:quantumDoubleHamiltonian}
%    H_\lsym := - \sum_{s} A_s - \sum_{p} B_p,
%\end{equation}
%\todo{clarify definition for general $ V \subset \Lambda $ see \eqref{eq:fvHam}SST: I moved it up}
%where the first sum runs over all \emph{stars} $s$  and the second sum runs over all \emph{plaquettes} $p$. 

Let $\strset_\Lambda $ and $\plqset_\Lambda $, respectively, stand for the set of stars and plaquettes in $ \Lambda \in \FF $. 
For any $V \subseteq \lsym \in \FF$, we let $\strset_{V}$ be the subset of all stars $s \in \strset$ such that $\cobdy s \cap V \neq \emptyset$. Similarly, $\plqset_{V}$ denotes the subset of all plaquettes $p \in \plqset$ with $\partial p \cap V \neq \emptyset$. The Hamiltonian of the quantum double model on $\HH_V$ is given by
\begin{equation}\label{eq:fvHam}
H_{V} \coloneqq - \sum_{s \in \strset_V} A_{s} - \sum_{p \in \plqset_V} B_{p} \ . 
\end{equation} 
Each operator $A_s$ and $B_p$ is supported in $\cobdy s$ and $\bdy p$, respectively.
% \[
%   \begin{tikzpicture}[equation, scale=0.7]   
%   \draw[step=1,gray,thin] (-1,-1) grid (1,1);
%   \draw[ultra thick, red] (-1,0) -- (1, 0);
%   \draw[ultra thick, red] (0,-1) -- (0,1);
% \end{tikzpicture}
% \hspace{3cm}
%   \begin{tikzpicture}[equation, scale=0.7]   
%   \draw[step=1,gray,thin] (-0.5,-0.5) grid (1.5,1.5);
%     \draw[ultra thick, blue] (0,0) -- (1, 0) -- (1,1) -- (0,1) -- cycle;
% \end{tikzpicture}
% \]
More precisely, for each $g\in G$ and each star $s\in \strset$, we define the (unitary) product operator $A_{s}(g)$, the sum of which defines the star operator:
\begin{equation}  \label{def:Astar}
\begin{tikzpicture}[equation]   
  \draw[step=0.8,gray,thin] (-0.8,-0.8) grid (0.8,0.8);
  \draw[ultra thick, red] (-0.4,0) -- (0.4, 0);
  \draw[ultra thick, red] (0,-0.4) -- (0,0.4);
  \shade[ball color=red] (-0.4,0) circle (0.8ex);
  \shade[ball color=red] (0.4,0) circle (0.8ex);
  \shade[ball color=red] (0,-0.4) circle (0.8ex);
  \shade[ball color=red] (0,0.4) circle (0.8ex);
\end{tikzpicture}
\hspace{1cm}
A_{s}(g) \coloneqq \,
  \begin{tikzpicture}[equation]  
    \draw (0,0) node {$\otimes$};
    \draw (0,0.5) node {$L^{\overline{g}}$};
    \draw (0.5,0) node {$L^{\overline{g}}$};
    \draw (0,-0.5) node {$L^{g}$};
    \draw(-0.5,0) node {$L^{g}$};
  \end{tikzpicture}
  \hspace{1cm}
  A_s \coloneqq \frac{1}{|G|} \sum_{g\in G} A_s(g)  \, .
\end{equation}
Here the \emph{left regular representation} ($L_g$) of $G$ acting on $\ell^2(G)$ is defined as
$ L^{g} \coloneqq \sum_{h \in G} \ketbra{gh}{h} $. 
In~\eqref{def:Astar},  $L^{{g}}$ is applied on the edges of $\cobdy_{+} s$ and  $L^{\overline{g}}$ on the edges of  $\cobdy_{-} s$.
Similarly, for each character $\chi \in \hat{G}$, we define the (unitary) product operator $B_{p}(\chi)$ that acts as $M^{\chi} \coloneqq \sum_{g \in G} \chi(g) \dyad{g} $ on the edges of $\partial_{+}p$,  and as $M^{\overline{\chi}}$ on the edges of $\partial_{-}p$:
\begin{equation}\label{def:Bplaq}
  \begin{tikzpicture}[equation]   
  \draw[step=1.0,gray,thin] (-0.5,-0.5) grid (1.5,1.5);
  \draw[ultra thick, blue, fill=blue!50!white] (0,0) -- (1, 0) -- (1,1) -- (0,1) -- (0,0); 
  %\draw[fill=blue, blue, ultra thick] (0.05,0.05) -- (0.95, 0.05) -- (0.95,0.95) -- (0.05,0.95) -- (0.05,0.05); 
  \shade[ball color=blue] (0.5,0) circle (0.8ex);
  \shade[ball color=blue] (1,0.5) circle (0.8ex);
  \shade[ball color=blue] (0.5,1) circle (0.8ex);
  \shade[ball color=blue] (0,0.5) circle (0.8ex);
\end{tikzpicture}
\hspace{1cm}
  B_p(\chi) = \, 
  \begin{tikzpicture}[equation]  
    \draw (0,0) node {$\otimes$};
    \draw (0,0.5) node {$M^\chi$};
    \draw(-0.5,0) node {$M^\chi$};
    \draw (0,-0.5) node {$M^{\overline{\chi}}$};
    \draw (0.5,0) node {$M^{\overline{\chi}}$};
  \end{tikzpicture}
  \hspace{1cm}
  B_p = \frac{1}{|\hat{G}|}\sum_{\chi \in \hat{G}} B_p(\chi) \, .
\end{equation}
It is straightforward to check that 
\begin{equation}\label{equa:MandLrelation}
M_v^\chi L_v^g = \chi(g) L_v^g M_v^\chi, \quad L_v^gL_v^{g'} = L_v^{gg'}, \quad M_v^{\chi}M_v^{\chi'} = M_v^{\chi\chi'}
\end{equation}
from which it follows that all $A_s(g)$ and $B_p(\chi)$ commute, and thus
\begin{equation*}
    [A_s,B_p]=0, \quad  [A_s, A_{s'}]=0, \quad  [B_p, B_{p'}]=0 \ .
\end{equation*}
%$[A_s(g),B_p(\chi)]=0$ and $[A_s,B_p]=0$ for all $s$,$p$,$g$ and $\chi$.
A more familiar form for the plaquette operators can be obtained by expanding each operator $M^{\chi}$ appearing in $B_{p}(\chi)$ and using the orthogonality relation \eqref{equa:CharacterOrthogonal}:
\[ B_p = 
\sum_{g_{1}, g_{2}, g_{3}, g_{4} \in G} \delta_{1}(g_{1} g_{2} \overline{g}_{3} \overline{g}_{4}) \,\,\,\,
  \begin{tikzpicture}[equation]  
    \draw (0,0) node {$\otimes$};
    \draw (0,0.5) node {$\dyad{g_{1}}$};
    \draw(-1,0) node {$\dyad{g_{2}}$};
    \draw (0,-0.5) node {$\dyad{g_{3}}$};
    \draw (1,0) node {$\dyad{g_{4}}$};
  \end{tikzpicture}
\]
where $\delta_{1}$ is Kronecker's delta  which takes value $1$ at $1 \in G$ and $0$ otherwise. 

\subsection{Davies Lindbladian and thermal equilibrium}

% In our argument, we will show and make use of the following decay of correlations condition, called the DS-condition.
% There exist $K,\xi>0$ such that for any two overlapping rectangles $UV, VW\subseteq \lsym$
% \begin{equation*}
%             \left\| \frac{\Tr_{VW} \hat{\rho}_{UVW}}{\Tr_{V} \hat{\rho}_{UV}} -1 \right\| \leq \epsilon \ .
% \end{equation*}
The Davies dynamics is the standard Markovian model for thermal noise in the weak-coupling regime~\cite{davies_GeneratorsDynamicalSemigroups_1979}. It is a quantum Markov semigroup generated by a Lindbladian, which is constructed from a set of \emph{bare jump operators}. For quantum double models they act on the Hilbert space of edges $e\in\lsym$: 
\[ \SSS_{e}=\{S_{e, i} | i \in I \} \subset \BB(\HH_{e}) ,\]
and are assumed to be 
closed under the adjoint operation, and ergodic in the sense that its commutant is trivial, 
\[ 
\{ S_{e, i}^{\dagger} | i \in I \} = \{ S_{e, i}| i \in I \} \quad \mathrm{and} \quad \SSS_{e}' = \BB(\HH_{e})' \ .
\]
Throughout, we additionally assume translation invariance in the sense that for any pair $e,e'\in \lsym$ of edges with identical orientation, the sets $ \SSS_{e} $ and $ \SSS_{e'} $ agree up to translation on the tensor-product Hilbert space. This 
restriction is necessary for the proof, but not fundamental to the construction. 

%For this paper, we will always assume single-edge, translation-invariant bare jump operators.
%The system Hamiltonian can be diagonalized
%\[ H = \sum_{\lambda} \lambda P_{\lambda}. \]
Let $\{P_\lambda\}_\lambda$ be the spectral projections of the Hamiltonian. Denote by $\Omega$ the set of all Bohr frequencies, namely, the set of all differences $\lambda - \lambda '$ of eigenvalues of $H_\Lambda$. For every operator $Q \in \BB(\HH_{\lsym})$ and $\omega \in \Omega$, we define 
\[ Q(\omega) = \sum_{\lambda, \lambda' \colon \lambda - \lambda' = \omega} P_{\lambda} Q P_{\lambda'}. \]
The Davies generator on $V\subseteq \lsym$ is then defined as
\begin{equation*}
    \LL_{V}^{(\beta)} = \sum_{e\in V} \mathcal{D}_e^{(\beta)} 
\end{equation*}
where for $e \in \lsym$:
\begin{equation}
    \mathcal{D}_e^{(\beta)}(O) = \sum_{i\in I, \omega \in \Omega} h_{e,i}^{(\beta)}(\omega) \left(S_{e,i}(\omega)^\dag O S_{e,i}(\omega) - \frac{1}{2}\left\{ S_{e,i}(\omega)^\dag S_{e,i}(\omega), O\right\} \right)\ ,
\end{equation}
where $\{\cdot,\cdot\}$ denotes the anticommutator. The operators $S_{e,i}(\omega)$ are called jump operators.
The jump rates $h_{e,i}^{(\beta)} (\omega)$ are assumed to be positive, real numbers, which satisfy the detailed balance condition 
\begin{equation}\label{eq:DB}
h_{e,i}^{(\beta)}(-\omega) = h_{e,i}^{(\beta)}(\omega) e^{-\beta \omega} \ .
\end{equation}
 To avoid issues with jump rates vanishing asymptotically at infinity, we assume that there exists $g>0$ such that uniformly in all sites and in the energy differences $ \omega $:
\begin{equation*}
    \inf_{e, i, \omega}\  h^{(\beta)}_{e,i}(\omega) e^{-\beta\omega/2} \geq g \ .
\end{equation*}

The Lindbladian $\LL_\Lambda^{(\beta)} $ generates the quantum Markov semigroup $\exp\big(t\LL_\Lambda^{(\beta)}\big)$ acting on $\BB(\HH_\Lambda)$.
The dual semigroup $\exp\big( t \LL_\Lambda^{(\beta),*}\big) $, taken with respect to the Hilbert-Schmidt scalar product, describes the evolution of states in the Schr\"{o}dinger picture. When clear from the context, we will regularly omit the superscript $ \beta $. 

Let us recall from \cite{davies_GeneratorsDynamicalSemigroups_1979,frigerio_StationaryStatesQuantum_1978} some helpful properties of the Davies Lindbladian $ \LL_V \equiv \LL_V^{(\beta)} $ for any $V \subseteq \lsym$. 
\begin{enumerate}
\item
The semigroup $e^{t \LL_V}$ is unital and completely positive.
The Gibbs state on $ \HH_\Lambda $ at inverse temperature $\beta$,
\begin{equation}
    \rho \equiv \rho_\Lambda^{(\beta)}\coloneqq \frac{e^{-\beta H_\Lambda}}{Z_\Lambda^{(\beta)}} , 
\end{equation}
where $Z_\Lambda^{(\beta)} = \Tr(e^{-\beta H_\Lambda})$ is the partition function, is a fixed point of the semigroup $e^{t \LL_V^*}$, which is unique when $V=\lsym$. 
\item
For all $s\in[0,1]$, $\LL_V$ is self-adjoint with respect to the scalar product
\begin{equation}\label{equa:weightedScalarProd}
\langle A,B\rangle_{\rho, s}:=  \Tr(\rho^{s} A^{\dagger} \rho^{1-s} B ). 
\end{equation}
For $s=1/2$ this is referred to as the KMS scalar product, while for $s=1$ it is the GNS scalar product. 
%It is known that if $\LL$ satisfies the above property for some $s \in [0,1/2) \cup (1/2, 1]$, then it satisfies this condition for every $s \in [0,1]$. 
The self-adjointness derives from~\eqref{eq:DB} and is also referred to as detailed balance.

Moreover, $-\LL_V$ is positive semidefinite and $\mathbb{C} \identity \subset \ker(\LL_V)$. Thus, using a simple diagonalization argument, the following limits exist  
\[ \E_V^\beta = \lim_{t \rightarrow \infty} e^{t \LL_V^{(\beta)} } \quad \mathrm{and} \quad \E^{\beta,*}_V = \lim_{t \rightarrow \infty} e^{t \LL^{(\beta),*}_V} \]
and define the orthogonal projection onto $\ker(\LL^{(\beta)}_V)$ and $\ker(\LL_V^{(\beta),*})$ with respect to the scalar product \eqref{equa:weightedScalarProd}. We will refer to these maps as the (Davies) \emph{conditional expectations}.
\end{enumerate}
\subsection{Mixing times and MLSI}

Our objective is to estimate the rate of convergence, in trace norm, for every initial state $\sigma \in \SSS(\HH_\Lambda)$:
\[ \|e^{t\LL^*_{\lsym}} (\sigma) - \rho\|_{1} . \]
That is, we want to bound the mixing time
\begin{equation*}
    t_{\mathrm{mix}}(\varepsilon) \coloneqq \inf \{t\geq 0 \ | \ \forall \sigma \in \SSS(\HH_{\lsym}):\ \|e^{t\LL^*_{\lsym}} (\sigma) - \rho\|_{1}\leq \varepsilon\}  \ .
\end{equation*}
There are two main approaches to this problem.
\begin{enumerate}
\item[(i)] \emph{Spectral gap}: The spectral gap of $\LL_{\lsym}$ is the difference between the two largest eigenvalues of $\LL_{\lsym}$.  It admits the following variational formula
\[ \gap(\LL_{\lsym}) = \inf_{Q}\frac{-\langle Q, \LL_{\lsym}(Q)  \rangle_{\rho,1}}{\langle Q, Q - \E_{\lsym}(Q)  \rangle_{\rho,1}}, \]
where the infimum is taken over all $Q \in \BB(\HH_{\lsym})$ with the convention that $\frac{0}{0}:=\infty$. It can then be shown that
\[ \|e^{t\LL^*_{\lsym}}(\sigma) - \rho\|_{1} \leq e^{-t \, \gap(\mathcal{L}_\lsym)}  \sqrt{\rho_{\mathrm{min}}^{-1}}. \]
Here $\rho_{\mathrm{min}}$ is the minimal eigenvalue of $\rho$.
If we assume a uniform lower bound on the gap, and 
 since the minimal eigenvalue $\rho_{\mathrm{min}}$ of the Gibbs state $\rho_{\lsym}$ of a local, commuting Hamiltonian is $O(e^{-|\lsym|})$, this yields a bound on the mixing time of order $O(\mathrm{poly}(|\lsym|))$, which is dubbed \emph{fast mixing}.
\item[(ii)] \emph{Modified Logarithmic Sobolev inequality}: The MLSI constant of the generator $\LL_{\Lambda}$ is defined in terms of the relative entropy $D(\sigma \| \sigma' ) = \tr \sigma (\ln \sigma  - \ln \sigma' ) $ between full-rank states $ \sigma, \sigma' $ by 
\[ \alpha(\LL_{\lsym}) = \frac{1}{2} \inf_{\sigma} \frac{-\frac{\dd}{\dd t}D(e^{t\LL^*_{\lsym}}(\sigma)  \| \rho) \big|_{t=0}}{D(\sigma \| \rho)}, \] %\E^*_{\lsym}(\sigma)
where the infimum is taken over all full-rank states $\sigma$ with $D(\sigma \| \rho) \neq 0$. It can be shown that
\[  \|e^{t\LL^*_{\lsym}} (\sigma) - \rho\|_{1} \leq e^{- \alpha(\LL_{\lsym})\, t} \sqrt{\ln(1/\rho_{\mathrm{min}})} \ . \]
Assuming a uniform lower bound on $\alpha$ then yields a bound on the mixing time of order $O(\mathrm{polylog}(|\lsym|))$ and is dubbed \emph{rapid mixing}.
\end{enumerate}

In this paper, we follow the second approach. Our main result is
\begin{theorem}\label{thm:main}
    The Davies Lindbladians of the 2D Abelian quantum double models satisfy an MLSI with a uniformly strictly positive constant for all strictly positive temperatures.
\end{theorem}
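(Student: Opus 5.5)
The proof will follow the strategy of \cite{stengele_ModifiedlogarithmicSobolev_2025} for CSS codes, adapted to the group structure of $G$. It has three stages: a Dobrushin--Shlosman (DS) type decay of correlations for the Gibbs state, a strong martingale (approximate tensorization) condition for the Davies conditional expectations $\E_V$, and finally an MLSI via a geometric recursion plus an MLSI constant for a finite block.

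\emph{DS condition.} Because all $A_s(g)$ and $B_p(\chi)$ commute, I will first block-diagonalize the model. I decompose $\HH_V$ into simultaneous eigenspaces of the star and plaquette projections, labelled by electric charges in $\hat G$ at vertices and magnetic fluxes in $G$ at plaquettes. In this decomposition the Gibbs state restricted to a rectangle becomes a classical product measure over local excitation labels, conditioned on global fusion constraints: the total charge and total flux in a region must be trivial, up to boundary terms. The reduced states $\Tr_{V}\rho_{UV}$ then depend on the boundary only through finitely many group-valued "charges" on the boundary. This is the Abelian analogue of the parity constraints in the CSS case. For independent excitations, each charge is a random walk on $\hat G$ or $G$ whose step distribution has full support at any $\beta<\infty$, so the walk mixes exponentially fast. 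This should yield, for overlapping rectangles $UV, VW$,
\[
\Bigl\| \tfrac{\Tr_{VW}\rho_{UVW}}{\Tr_V\rho_{UV}} - \identity\Bigr\| \le K e^{-\dist(U,W)/\xi},
\]
at every positive temperature. I will prove this using the character orthogonality relations \eqref{equa:CharacterOrthogonal} to write the fusion constraints as Fourier sums and bound the nontrivial characters geometrically.

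\emph{Strong martingale condition.} Next I will show that for overlapping rectangles $A=UV$, $B=VW$,
\[
\|\E_A\E_B-\E_{A\cup B}\|_{\infty\to\infty}\ \text{(in a suitable weighted norm)}\le K' e^{-\dist(U,W)/\xi'} .
\]
I will identify the fixed-point algebras $\ker\LL_V$ explicitly. Because the jump operators are single-edge and ergodic, $\ker \LL_V$ is generated by operators outside $V$ together with the "logical/boundary" operators: ribbon (string) operators $L^g$- and $M^\chi$-products along paths crossing $\partial V$ that commute with every term in $H_V$. This uses translation invariance and the commutant assumption $\SSS_e'=\BB(\HH_e)'$. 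With this description, $\E_V$ acts as a conditional expectation onto an explicit algebra. The difference $\E_A\E_B-\E_{A\cup B}$ then reduces to comparing the distributions of boundary charges under $\rho$ conditioned on different regions, which is controlled by the DS ratio above.

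\emph{MLSI.} I will then invoke the approximate tensorization of relative entropy argument of \cite{stengele_ModifiedlogarithmicSobolev_2025}: strong martingale decay implies $D(\sigma\|\E^*_{A\cup B}\sigma)\le (1+\epsilon)\bigl(D(\sigma\|\E^*_A\sigma)+D(\sigma\|\E^*_B\sigma)\bigr)$. Iterating this over a coarse-grained covering of the torus by two families of well-separated rectangles of fixed size reduces the global MLSI to MLSIs for $\LL_R$ on rectangles $R$ of bounded size. Those local MLSIs hold with a uniform constant: by translation invariance there are only finitely many shapes, each with a strictly positive constant, since the local generator is primitive on its fixed-point algebra and the rates satisfy $h\ge g e^{\beta\omega/2}$. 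The main obstacle is the second stage. In the non-$\mathbb{Z}_2$ case the kernel of $\LL_V$ carries nontrivial ribbon operators with both group and character labels. One must check that their conditional distributions factor so that only the DS quantity appears; I would first attempt this in the Fourier basis of $\hat G$.
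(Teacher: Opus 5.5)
Your first stage agrees with the paper in substance. The character computation you sketch is what yields the closed formula $\Tr_R e^{-\beta H_R}=\kappa_R\big((1-a_R)\identity+a_R|G|A_{\strset_R}\big)\big((1-b_R)\identity+b_R|G|B_{\plqset_R}\big)$ with $a_R,b_R\le(\gamma_\beta/(1+\gamma_\beta))^{m_R}$. The DS ratio then splits into four such marginals whose prefactors $\kappa_R$ cancel. You still need to handle an overlap $V$ with two components, which occurs for wrapping rectangles.

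\textbf{The real gap is the second stage.} You leave it open, and it is where the proof actually lives. Your description of $\ker\LL_V$ is wrong. Since $\sum_\omega S(\omega)=S$ and the single-edge jumps generate $\BB(\HH_e)$, every kernel element commutes with all of $\BB(\HH_V)$. In fact
\[
\ker\LL_V=\identity_V\otimes\BB(\HH_{\lsym\setminus V})\cap\{A_s,B_p\}'_{s\in\strset_V,\,p\in\plqset_V}.
\]
So ribbon operators crossing $\bdy V$ are not in the kernel. Neither are arbitrary operators outside $V$: for instance, $M^\chi_e$ on an edge $e\notin V$ of a star $s\in\strset_V$ conjugates $A_s$ into $A_s(\chi)$ or $A_s(\overline{\chi})$.

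Proving this identification takes real work: a Haar average over local unitaries to recover $H_V$ up to the identity, twirls separating star and plaquette parts, and an inclusion--exclusion over partial traces (using star/plaquette connectivity) to isolate each $A_s$ and $B_p$. The payoff is a set of explicit formulas:
\begin{itemize}
\item $\E^0_V=\PP^\str_V\circ\PP^\plq_V\circ\frac{\Tr_V}{d_V}$;
\item $\E^\beta_V(O)=d_V\,\E^0_V(\hat{\rho}_V^{1-s}O\hat{\rho}_V^{s})$, which uses that all Gibbs marginals commute with every $A_s(\chi)$ and $B_p(h)$;
\item the exact factorization $\E^0_{RR'}=\E^0_R\circ\E^0_{R'}$.
\end{itemize}
With $\PP_U=\PP^\str_U\circ\PP^\plq_U$ these give
\begin{align*}
\E^\beta_{UVW}(O)&=\Tr_U(\Tr_{VW}(\hat{\rho}_{UVW})\,\PP_U\circ\E^\beta_{VW}(O)),\\
\E^\beta_{UV}\circ\E^\beta_{VW}(O)&=\Tr_U(\Tr_{V}(\hat{\rho}_{UV})\,\PP_U\circ\E^\beta_{VW}(O)).
\end{align*}
Both marginals commute with the range of $\PP_U\circ\E^\beta_{VW}$, so the DS operator inequality transfers directly to the complete-positivity order between $\E^\beta_{UVW}$ and $\E^\beta_{UV}\circ\E^\beta_{VW}$. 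That CP order is the input the approximate tensorization step needs. A difference bound in an unspecified norm does not suffice, since an operator-norm bound does not give the purely multiplicative entropy inequality you invoke next. ``Comparing boundary charge distributions'' does not replace this argument.

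\textbf{Your third stage also fails as stated.} Iterating the two-rectangle inequality over a fixed-size two-family covering $\mathcal A,\mathcal B$ does not give a constant independent of system size. Comparing $\E_{\mathcal A}\circ\E_{\mathcal B}$ with $\E_\lsym$ is not an instance of the pairwise condition, and the DS errors from the order $N^2$ block boundaries multiply. Naive iteration also compounds the $(1+\epsilon)$ factors and double counts the overlaps.

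The paper instead uses a multiscale recursion. A rectangle at scale $2L$ is split into $UV$ and $VW$ with overlap width of order $\sqrt L$; this is why the strong martingale condition is only needed for $\diam(W)\le 2\dist(U,W)^2$. The position of $V$ is averaged so that double counting costs only $O(L^{-1/3})$. This gives $\alpha(2L)\ge(1+24L^{-1/3})^{-1}\alpha(L)$, whose product over dyadic scales converges. Only then is the problem reduced to finitely many rectangle shapes of size $L_0$, which are handled via a spectral-gap bound and the complete entropic inequalities of Gao et al.
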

%\LG{Shall We add a math statement like: Namely, we prove that for any $\beta>0$, there exists a constant $c_\beta>0$ such that for $\Lambda$ of any size, $\alpha(\LL_{\lsym}^{(\beta)})\ge c_\beta>0 $. SST:added }
Namely, we prove that for any $\beta>0$, there exists an absolute constant $\alpha_\beta>0$ such that for $\Lambda$ of any size, $\alpha(\LL_{\lsym}^{(\beta)})\ge \alpha_\beta>0 $. We hence establish rapid mixing for any 2D Abelian quantum double model. This improves previous results on fast mixing in \cite{komar_NecessityEnergyBarrier_2016, lucia_ThermalizationKitaevsQuantum_2023}. 

\subsection{Comments and outlook}
The proof of the main theorem follows the strategy developed in \cite{stengele_ModifiedlogarithmicSobolev_2025}, in which 
there are two broad steps. 
In the first step, one derives explicit expressions of the Davies conditional expectations $ \E_{R}^\beta $ projecting on the kernel of $ \LL_R^{(\beta)} $, and uses them to establish an approximate factorization 
$\E_{R}^\beta\circ \E_{R'}^\beta\approx \E_{RR'}^\beta $ on overlapping rectangles. This condition, which we call \emph{strong martingale condition}, is then the input to the second step, a multi-scale analysis. The first step, which is the main focus of this paper, is the only part that differs substantially from the strategy in \cite{stengele_ModifiedlogarithmicSobolev_2025}. One key difference is that, in \cite{stengele_ModifiedlogarithmicSobolev_2025} we split the dynamics into star and plaquette Lindbladians, which are treated separately. While this allows for statements about loss of quantum information in the 3D toric code down to zero temperature, it also complicates matters. Since we are only considering the 2D quantum double model, this split is unnecessary, and we omit it for simplicity. However, we do not see any obstacles in splitting, if one wishes to do so. Treating the full Lindbladian as a whole, though, has the added benefit that we can treat slightly more general jump operators. Any finite set of single-edge operators generating the full algebra works, whereas in \cite{stengele_ModifiedlogarithmicSobolev_2025} only $\{X,Z\}$ was considered. 
A second key difference is that restricting to the 2D double models yields a simpler expression of the kernel of $\LL_\Lambda^{(\beta)}$. We give a simplified proof that gets rid of the ``partition by support'' entirely.

Adapting the approach in~\cite{stengele_ModifiedlogarithmicSobolev_2025} to  Abelian quantum double models on D-dimensional lattices is mostly a matter of notation. The DS-condition (Definition~\ref{def:DS}), which requires uniform decay of correlations in the Gibbs state,  would again be optimal for implying rapid mixing. However, the regime of temperatures in which the DS-condition applies would generally only cover the entire high-temperature regime up to a potential phase transition, and not all temperatures as for $ D=2 $ in \cref{thm:main}.

Generalizing this result to non-Abelian quantum double models, even in two dimensions, is not straightforward. 
A key difficulty is that the methods used here rely heavily on the commutativity of the Gibbs state marginals and on the special structure of the dual group in the Abelian case. In the absence of these properties, deriving a tractable expression for the conditional expectations becomes challenging.
\\

The paper is organized as follows. In \cref{sec:equilibrium} we recall some properties of the Gibbs state of quantum double models and show a variant of decay of correlations called the DS-condition. In the next section, \cref{sec:kernel}, we derive a simple expression of the kernel of the Lindbladian. It is given as an intersection of the set of operators with certain support and a commutant of stars and plaquettes.
In \cref{sec:condexp} we use this expression to derive the infinite time limits of the Lindbladian as a combination of pinchings and traces.
In \cref{sec:strongmartingale} we assemble the DS-condition and the explicit expressions to show the strong martingale condition.
Finally, in \cref{sec:MLSI} we sketch a proof of the fact that the strong martingale condition implies a uniformly bounded MLSI constant. %For details, we refer the reader to \cite{stengele_ModifiedlogarithmicSobolev_2025}

\section{Equilibrium}\label{sec:equilibrium}

\subsection{Some preparations}
Let us start with a few helpful properties of the star and plaquette operators, which we will use throughout the proof.
It is easy to see from (\ref{def:Astar}\,-\ref{equa:MandLrelation}) that for every  $p \in \plqset$, star $s \in \strset$, $g,g' \in G$ and $\chi, \chi' \in \hat{G}$ 
\begin{align}\label{equa:}
& A_{s}(g)^{\dagger} = A_{s}(\overline{g}) \quad , \quad A_{s}(g) A_{s}(g') = A_{s}(gg')\\
& B_{p}(\chi)^{\dagger} = B_{p}(\overline{\chi}) \quad , \quad B_{p}(\chi) B_{p}(\chi') = B_{p}(\chi \chi') .  \notag
\end{align}

In addition to the usual star and plaquette operators, we will make use of the following generalized ones:
\begin{equation}\label{def:genAB}
    A_s(\chi) \coloneqq \frac{1}{|G|}\sum_{g\in G} \chi(g) A_s(g) \quad , \quad  B_p(g) \coloneqq \frac{1}{|\hat{G}|}\sum_{\chi\in \hat{G}} \chi(g) B_p(\chi)
\end{equation}
It is not hard to check that these operators are commuting orthogonal projections, and moreover, they satisfy
\begin{align}
& A_{s}(\chi)^{\dagger} = A_{s}(\chi) \quad , \quad  A_{s}(\chi) A_{s}(\chi') = A_{s}(\chi) \delta_{\chi, \chi'} \quad , \quad \sum_{\chi \in \hat{G}}A_s(\chi) = \identity, \\
& B_{p}(g)^{\dagger} = B_{p}(g) \quad, \quad B_{p}(g) B_{p}(g') = B_{p}(g) \delta_{g,g'} \quad , \quad \sum_{g\in  G}B_p(g) = \identity.  \notag
\end{align}
Thus, each family $(A_{s}(\chi))_{\chi}$ and $(B_{p}(g))_{g}$ forms a resolution of the identity. The local terms of the quantum double Hamiltonian \eqref{eq:fvHam} correspond to
\[ A_{s} = A_{s}(\mathbf{1}_{G}) \quad , \quad B_{p} = B_{p}(1). \]
Let us observe the following useful relations. For every star $s$, edge $e$ and characters $\chi, \chi' \in \hat{G}$ we have
\begin{equation}
 M_e^\chi A_s(\chi') M_e^{\overline{\chi}} = \begin{cases}
    A_s(\chi' \chi) & e\in \cobdy_+ s\\
    A_s(\chi' \overline{\chi}) & e\in \cobdy_- s\\
    A_s(\chi') & e\notin \cobdy s
    \end{cases}
\end{equation}
Analogously, for every plaquette $p$, edge $e$ and group elements $g,h \in G$ we have
\begin{equation}
L_e^g B_p(h) L_e^{\overline{g}} = \begin{cases}
    B_p(h \overline{g}) & e\in \bdy_+ p\\
    B_p(h g) & e\in \bdy_- p\\
    B_p(h) & e\notin \bdy p
    \end{cases}
\end{equation}

% For any $R \subset \lsym$, let $\strset_{R}$ be the subset of all $s \in \strset$ such that $\cobdy s \cap R \neq \emptyset$, and let $\plqset_{R}$ be the subset of all plaquettes  $p \in \plqset$  such that $\partial p \cap R \neq \emptyset$. We denote
% \begin{equation}\label{eq:fvHam}
% H_{R} = - \sum_{s \in \strset_R} A_{s} - \sum_{p \in \plqset_R} B_{p}. 
% \end{equation} 

%
The marginals of the Gibbs state associated with subsets play a prominent role in our analysis.
We mostly restrict attention to star- and plaquette-connected sets. 
\begin{definition}
We say that $R\subseteq \lsym$ is \emph{star-connected} if, for any $e,e'\in R$, there exists a path $e=e_0, s_1,e_1,\ldots, e_{n-1},s_n,e_n=e'$ of edges and stars satisfying $e_i\in R$ and $e_{i-1},e_{i}\in \cobdy s_i$ for all $i=1,\ldots,  n$. Analogously, we say that $R$ is \emph{plaquette-connected} if, for any $e,e'\in R$ there exists a path $e=e_0, p_1, e_1, \ldots, e_{n-1}, p_n, e_n=e'$ with $e_i\in R$ and $e_{i-1},e_{i}\in \bdy p_i$ for all $i=1,\ldots,  n$.  
\end{definition}
%\LG{One may consider add a figure to illustrate the definition above. Of course, it's not necessary if short in time.}
The following is an important property of partial traces of the Gibbs factor over such subsets.
\begin{proposition}\label{prop:marginaldoublecommutant}
 Let $R \subseteq R'\subseteq \lsym$ where $R$ is star- and plaquette-connected. Then,
    \begin{equation*}
        \Tr_{R}(e^{-\beta H_{R'}}) \in \{A_s(\chi), B_p(h)\}_{\chi\in \hat{G}, h\in G, s\in \strset_{R'}, p\in\plqset_{R'}}^{''},
    \end{equation*}
    where $\{\cdot\}^{''}$ denotes the bicommutant. 
    
    In particular, any two marginals of the Gibbs state commute. Moreover, if $R'=R$, we have \begin{equation}\label{equa:marginalcloseIdentity}
    \kappa_{R} \left( 1+ \epsilon_{R} \right)^{-2} \, \identity \leq \Tr_{R}\left( e^{-\beta H_{R}}\right) \leq \kappa_{R}\left( 1+ \epsilon_{R} \right)^2  \identity  
    \end{equation}
with
\begin{align*}
\kappa_{R} \coloneqq |G|^{|R|} (1+\gamma_{\beta})^{|\strset_R| + |\plqset_R|}  \quad & , \quad \gamma_{\beta} \coloneqq \frac{e^{\beta}-1}{|G|}, \\ 
\epsilon_{R} \coloneqq e^{\beta}|G|\left(\tfrac{\gamma_{\beta}}{1+\gamma_{\beta}}\right)^{m_R} \quad & , \quad m_{R} \coloneqq \min\{ |\strset_{R}|, |\plqset_{R}| \} . 
\end{align*}   
    
\end{proposition}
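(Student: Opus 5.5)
The plan is to expand the Gibbs factor explicitly in terms of the unitaries $A_s(g)$ and $B_p(\chi)$, and to compute the partial trace term by term. Since all $A_s$, $B_p$ are commuting projections, $e^{\beta A_s}=\identity+(e^\beta-1)A_s=\identity+\gamma_\beta\sum_{g\in G}A_s(g)$, and likewise $e^{\beta B_p}=\identity+\gamma_\beta\sum_{\chi}B_p(\chi)$. Grouping the $g=1$ term with $\identity$, each factor reads $(1+\gamma_\beta)\identity+\gamma_\beta\sum_{g\neq 1}A_s(g)$, and similarly for plaquettes. I will split $e^{-\beta H_{R'}}=X\,Y$, where $Y$ collects the factors for $s\in\strset_R$, $p\in\plqset_R$. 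The remaining factor $X$ is supported off $R$, commutes with $Y$, and is a polynomial in the $A_s(g),B_p(\chi)$. Hence $\Tr_R(XY)=X\Tr_R(Y)$, and it suffices to understand $\Tr_R(Y)$.

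Expanding $Y$ gives a sum over labelings $s\mapsto g_s\in G$, $p\mapsto\chi_p\in\hat G$. The weight of a labeling is $(1+\gamma_\beta)$ per label equal to $1$ and $\gamma_\beta$ otherwise, and the operator is $\prod_s A_s(g_s)\prod_p B_p(\chi_p)$. On each edge $e\in R$ this operator acts, up to a phase, as $L^{a}M^{\psi}$, where $a$ is the oriented product of the labels of the two stars containing $e$ and $\psi$ that of the two plaquettes. Since $\Tr(L^aM^\psi)=|G|\,\delta_{a,1}\delta_{\psi,\mathbf 1_G}$, a term survives only if, for every $e\in R$, the two stars at $e$ carry the same label and the two plaquettes at $e$ carry the same character. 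Here I use the orientation convention: $L^{g}$ on $\cobdy_+$ and $L^{\overline g}$ on $\cobdy_-$, so the condition is $g_v\overline{g_w}=1$. By star-connectedness, all stars of $\strset_R$ then carry a common label $g$. By plaquette-connectedness, all plaquettes of $\plqset_R$ carry a common $\chi$.

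For such a term the $L$'s cancel on each edge of $R$, and likewise the $M$'s. So the operator $W_g:=\prod_{s\in\strset_R}A_s(g)$ already acts trivially on $R$, and similarly for $V_\chi:=\prod_{p\in\plqset_R}B_p(\chi)$. This gives
\[
\Tr_R(Y)=|G|^{|R|}\Big(\sum_{g}c_g W_g\Big)\Big(\sum_{\chi}d_\chi V_\chi\Big),
\]
with $c_1=(1+\gamma_\beta)^{|\strset_R|}$, $c_g=\gamma_\beta^{|\strset_R|}$ for $g\neq1$, and analogously for $d_\chi$. The right-hand side is a polynomial in the $A_s(g),B_p(\chi)$. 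Each of these is a linear combination of the projections $A_s(\chi')$, $B_p(h)$ via Fourier inversion, so it lies in the algebra they generate. That algebra equals its bicommutant by the von Neumann bicommutant theorem in finite dimensions, which proves the first claim. Commutativity of marginals follows because this algebra is commutative.

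For $R'=R$ we have $X=\identity$ and $\Tr_R(e^{-\beta H_R})=\kappa_R\,\mathcal A\,\mathcal B$. Here $\mathcal A=\identity+t^{|\strset_R|}\sum_{g\neq1}W_g$ with $t=\gamma_\beta/(1+\gamma_\beta)<1$, and $\mathcal B$ is analogous. The $W_g$ form a commuting unitary representation of $G$, so $W_g=\sum_{\chi}\chi(g)Q_\chi$ with orthogonal projections $Q_\chi$. The eigenvalues of $\mathcal A$ are therefore $1+(|G|-1)t^{n}$ (for $\chi=\mathbf 1_G$) or $1-t^{n}$, since $\sum_{g\neq1}\chi(g)=-1$ for $\chi\neq\mathbf 1_G$; here $n=|\strset_R|\geq m_R$. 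The upper bound $1+|G|t^{m_R}\le 1+\epsilon_R$ is immediate. The lower bound is the delicate step, because one might fear $\mathcal A$ is not positive. The spectral computation shows it is, and it remains to check $(1-t^{m_R})(1+\epsilon_R)\ge1$. This follows from $1-t^{m_R}\ge 1-t=|G|/(|G|+e^\beta-1)$, which gives $e^\beta|G|(1-t^{m_R})\ge e^\beta|G|^2/(|G|+e^\beta-1)\ge1$. Since $\mathcal A$ and $\mathcal B$ commute, multiplying the two bounds yields the sandwich with $(1+\epsilon_R)^{\pm2}$.
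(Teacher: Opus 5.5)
Your proof is correct and follows the same route as the paper: split off the factor $e^{-\beta(H_{R'}-H_R)}$ supported outside $R$, use the explicit product formula for $\Tr_R(e^{-\beta H_R})$, pass from $A_s(g),B_p(\chi)$ to $A_s(\chi),B_p(h)$ by Fourier inversion and apply the bicommutant theorem, then read the operator bounds off the spectrum. The only difference is that you derive the explicit formula yourself (via the labeling expansion, $\Tr(L^aM^\psi)=|G|\,\delta_{a,1}\delta_{\psi,\mathbf{1}_G}$, and star/plaquette connectivity), whereas the paper imports it from \cite[Theorem 40]{lucia_SpectralGapBounds_2025}; you also spell out the eigenvalues $1+(|G|-1)t^{n}$ and $1-t^{n}$ that the paper calls straightforward.
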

\begin{proof}
    Since all star and plaquette operators commute and $H_{R'}-H_R$ has no support on $R$, we can rewrite 
    \begin{equation*}
         \Tr_{R}(e^{-\beta H_{R'}}) =  e^{-\beta (H_{R'}-H_R)}\Tr_{R}(e^{-\beta H_{R}}) \ .
    \end{equation*}
    As the star and plaquette operators are projections, the first factor is an element of the algebra of stars and plaquettes connected to $R'$. Following  \cite[Theorem 40]{lucia_SpectralGapBounds_2025} the second term admits an explicit expression 
    \begin{equation}\label{equa:explicitMarginalQDM} 
    \Tr_{R}\left( e^{-\beta H_{R}}\right) = \kappa_{R} \left( \left(1- a_{R}\right) \identity + a_{R}  |G| A_{\strset_R}  \right) \left( \left(1- b_{R}\right) \identity + b_{R} |G|  B_{\plqset_R}  \right) 
\end{equation}
where $\kappa_{R}$ and $\gamma_{\beta}$ are as in the statement of the proposition, 
    \[  a_{R} = \left(\tfrac{\gamma_{\beta}}{1+\gamma_{\beta}}\right)^{|\strset_R|} \quad , \quad b_{R}=\left(\tfrac{\gamma_{\beta}}{1+\gamma_{\beta}}\right)^{|\plqset_R|} ,  \]
and the operators in the right side  are defined by
\[ A_{\strset_R} := \frac{1}{|G|} \sum_{g \in G} \prod_{s \in \strset_{R}} A_{s}(g) \quad \text{and} \quad B_{\plqset_R} := \frac{1}{|G|} \sum_{\chi \in \hat{G}} \prod_{p \in \plqset_{R}} B_{p}(\chi) .\]
The proof of~\eqref{equa:explicitMarginalQDM}  relies on the fact that these operators
are mutually commuting orthogonal projectors.

     Consequently, $\Tr_{R}( e^{-\beta H_{R}})$ is an element of the Abelian algebra generated by
    \begin{equation*}
        \left\{A_s(g), B_p(\chi) \mid s\in \strset_{R}, p\in \plqset_{R}, g\in G, \chi\in \hat{G}\right\}\ .
    \end{equation*}
    Note that these operators are the building blocks of the actual stars~\eqref{def:Astar} and plaquettes~\eqref{def:Bplaq} and should not be confused with $A_s(\chi)$ and $B_p(h)$.
    However, the latter span an algebra containing the former, since:
    \begin{equation}
        \sum_{\chi\in \hat{G}} \chi(\overline{g_0}) A_s(\chi) = \frac{1}{|G|} \sum_{g\in G} A_s(g) \sum_{\chi \in \hat{G}} \chi(g\overline{g_0}) = A_s(g_0)
    \end{equation}
    for any $g_0\in G$ and analogously for plaquettes.
    Thus, $ \Tr_{R}(e^{-\beta H_{R'}})$ is an element of the algebra generated by
    \begin{equation*}
         \{A_s(\chi), B_p(h)\}_{\chi\in \hat{G}, h\in G, s\in \strset_{R'}, p\in\plqset_{R'}} .
    \end{equation*}
    The first claim then follows from the bicommutant theorem.
    
    The fact that any two marginals commute follows from the fact that all $A_s(\chi), B_p(h)$ commute.

     For a proof of \eqref{equa:marginalcloseIdentity}, we note that it is straightforward  from~\eqref{equa:explicitMarginalQDM} and $|\strset_{R}|, |\plqset_{R}| \geq m_{R}$ that
     \[ \kappa_{R}(1-c_{R})^2 \cdot \identity \leq \Tr_{R}\left( e^{-\beta H_{R}}\right) \leq \kappa_{R}(1+|G|c_{R})^2\cdot \identity \quad \text{ where }c_{R}:=\left( \frac{\gamma_{\beta}}{1+\gamma_{\beta}}\right)^{m_{R}}. \]
     Finally, observe that
    \[ \frac{1}{1-c_{R}} = 1+\frac{\left(\tfrac{\gamma_{\beta}}{1+\gamma_{\beta}}\right)^{m_{R}}}{1-\left(\tfrac{\gamma_{\beta}}{1+\gamma_{\beta}}\right)^{m_{R}}} \leq 1+\frac{\left(\tfrac{\gamma_{\beta}}{1+\gamma_{\beta}}\right)^{m_{R}}}{1-\left(\tfrac{\gamma_{\beta}}{1+\gamma_{\beta}}\right)} = 1+(1+\gamma_{\beta}) \left(\tfrac{\gamma_{\beta}}{1+\gamma_{\beta}}\right)^{m_{R}}\leq 1+\epsilon_{R}, \]
    and $1+|G|c_{R} \leq 1+ \epsilon_{R}$. 
\end{proof}

\subsection{Marginals and the DS-condition}
The main result in this section concerns the decay of correlations of the Gibbs state. For our purposes, it is sufficient to consider this decay on rectangles, which we characterize next.

\begin{definition}
A rectangular region $\mathfrak{R}$ with side lengths $0<l_1,l_2< N$ is a set 
\begin{equation*}
\mathfrak{R}=\{a,a+1,\ldots , a+l_1\}\times \{b, b+1, \ldots b+l_2\}\subseteq \Z_N\times \Z_N
\end{equation*}
for some $a,b\in \Z_N$, where we allow periodic boundaries.
A rectangle $R\subseteq \lsym$ is then the set of all edges with both ends in a rectangular region $\mathfrak{R}$. 
Note that, by definition, we only consider rectangles that have a non-zero width and height.
Observe also that any rectangle is both star- and plaquette-connected. 
For any set $V\subseteq R$, we define its (outer) diameter $\diam(V)$ to be the side length of the smallest square that fully contains $V$. Its inner diameter $\diam_{-}(V)$ is the side length of the smallest square that is fully contained in $V$ and $0$ if no such square exists.
\end{definition}

The following is the analogue of Dobrushin and Shlosman's uniform decay condition, characterizing what is known as the Dobrushin-Shlosman high-temperature regime \cite{dobrushin_CompletelyAnalyticalInteractions_1987}. In the absence of a static phase transition, this regime extends down to arbitrary non-zero temperature. As Theorem~\ref{thm:DS} shows, this is the case for the 2D Abelian quantum double model. The condition involves the marginals
    \begin{equation} \label{eq:rhohatdef}
        \hat{\rho}_R \coloneqq \frac{e^{-\beta H_\lsym}}{\Tr_R e^{-\beta H_\lsym}}
    \end{equation}
    of the Gibbs state on a set $R\subseteq \lsym$.

\begin{definition}[DS-condition]\label{def:DS}
    The Gibbs state of a family $\FF$ of double models  satisfies the DS-condition at inverse temperature $\beta$ if there exist a minimal system $\lsym_0\in \FF$, a minimal distance $d_0\in[2,\infty)$,  and constants $K,\xi >0$ such that for any $\lsym \in \FF$ with $\lsym \supseteq \lsym_0$ and any two overlapping rectangles $UV,VW\subseteq \lsym$ with $\dist(U,W)\geq d_0$ and $\diam_{-}(U), \diam_{-}(W)\geq1$, the marginals of the Gibbs state satisfy:
    \begin{equation}\label{eq:DScond}
        (1-K e^{-\xi \dist(U,W)} )\Tr_{V} \hat{\rho}_{UV} \leq \Tr_{VW} \hat{\rho}_{UVW} \leq  (1+K e^{-\xi \dist(U,W)})\Tr_{V} \hat{\rho}_{UV} .
    \end{equation}
    % \begin{equation}
    %     \left\| \frac{\Tr_{VW} \hat{\rho}_{UVW}}{\Tr_{V} \hat{\rho}_{UV}} -1 \right\| \leq K e^{-\xi \dist(U,W)} \ .
    % \end{equation}
\end{definition}
To simplify the argument in this paper, we use a different but equivalent condition compared to \cite{stengele_ModifiedlogarithmicSobolev_2025}, where we defined
\begin{equation*}
    \left\| \frac{\Tr_{VW} \hat{\rho}_{UVW}}{\Tr_{V} \hat{\rho}_{UV}} -1 \right\| \leq K e^{-\xi \dist(U,W)} \ .
\end{equation*}
The equivalence derives from the fact that all marginals commute.

\begin{theorem}\label{thm:DS}
    The Gibbs state of the 2D Abelian quantum double model satisfies the DS-condition for all positive temperatures $\beta \in [0,\infty)$. More specifically, for any two overlapping rectangles $UV$ and $VW$ with $\dist(U,W) \geq 2 $   and $\diam_{-}(U), \diam_{-}(W)\geq1$ 
    it holds that
    % \begin{equation*}
    %    \left(1 + K \left( \tfrac{\gamma_{\beta}}{1+\gamma_{\beta}}\right)^{\dist(U,W)} \right)^{-1}
    %    \Tr_{V} \hat{\rho}_{UV} 
    %    \leq \Tr_{VW} \hat{\rho}_{UVW} 
    %    \leq   \left(1+  K \left( \tfrac{\gamma_{\beta}}{1+\gamma_{\beta}}\right)^{\dist(U,W)} \right)
    %    \Tr_{V} \hat{\rho}_{UV} 
    % \end{equation*}
     \begin{equation*}
       \left(1 + K e^{-\xi \dist(U,W)} \right)^{-1}
       \Tr_{V} \hat{\rho}_{UV} 
       \leq \Tr_{VW} \hat{\rho}_{UVW} 
       \leq   \left(1+  K e^{-\xi \dist(U,W)} \right)
       \Tr_{V} \hat{\rho}_{UV} 
    \end{equation*}
    with
    \begin{equation*}
        K = 2^{10} e^{10\beta} |G|^{10} \quad \text{ and } \quad e^{-\xi} = \frac{\gamma_\beta}{1+ \gamma_\beta} = \frac{e^{\beta}-1}{e^{\beta}+|G|-1} \ .
    \end{equation*}
    %\[ \left\| \frac{\Tr_{VW} \hat{\rho}_{UVW}}{\Tr_{V} \hat{\rho}_{UV}}  - \identity \right\| \leq  2^{10} e^{\beta} |G|\left( \tfrac{\gamma_{\beta}}{1+\gamma_{\beta}}\right)^{\dist(U,W)} \quad \text{ where } \gamma_{\beta} = \frac{e^{\beta}-1}{|G|}. \]
\end{theorem}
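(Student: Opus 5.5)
The plan is to reduce both conditional marginals to ratios of partial traces of local Gibbs factors, and then apply \eqref{equa:marginalcloseIdentity} from Proposition~\ref{prop:marginaldoublecommutant} to every factor.

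\textbf{Step 1: reduce to local Gibbs factors.} By Proposition~\ref{prop:marginaldoublecommutant}, all partial traces of Gibbs factors commute, so I can manipulate quotients freely. Write $e^{-\beta H_\lsym} = e^{-\beta (H_\lsym - H_{UVW})} e^{-\beta H_{UVW}}$. The first factor has no support on $UVW$ and commutes with everything in sight. Therefore
\[
\Tr_{VW}\hat\rho_{UVW} = \frac{\Tr_{VW} e^{-\beta H_{UVW}}}{\Tr_{UVW} e^{-\beta H_{UVW}}},
\qquad
\Tr_{V}\hat\rho_{UV} = \frac{\Tr_{V} e^{-\beta H_{UV}}}{\Tr_{UV} e^{-\beta H_{UV}}}.
\]
The first factor cancels in both quotients.

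\textbf{Step 2: bound the denominators.} Both $UVW$ and $UV$ are rectangles, hence star- and plaquette-connected. So \eqref{equa:marginalcloseIdentity} gives each denominator as a multiple of the identity, namely $\kappa$ times a factor in $[(1+\epsilon)^{-2},(1+\epsilon)^2]$.

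\textbf{Step 3: bound the numerators.} I split $H_{UVW} = H_{UV} + (H_{UVW}-H_{UV})$, where the second term collects the stars and plaquettes touching $W$ but not $UV$. Because $\dist(U,W)\ge 2$, none of these terms touch $U$. The key claim is that, up to multiplicative errors of order $(1+\epsilon_V)^{\pm 2}$ with $m_V \gtrsim \dist(U,W)$,
\[
\Tr_{VW} e^{-\beta H_{UVW}} \approx \Tr_V e^{-\beta H_{UV}} \cdot \frac{\Tr_{VW} e^{-\beta H_{VW}}}{\Tr_V e^{-\beta H_V}}.
\]
I would justify this through the explicit formula \eqref{equa:explicitMarginalQDM}. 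Partially tracing the product of commuting projector expansions factorizes, except for the global terms $A_{\strset}$ and $B_{\plqset}$. Those carry weights $(\gamma_\beta/(1+\gamma_\beta))^{|\strset|}$, which are exponentially small in the width of the traced region separating $U$ from $W$. Concretely, I would expand $e^{-\beta H}=\prod_s (1+(e^\beta-1)A_s)\prod_p(\cdots)$, trace out $V$ and $W$, and group the terms according to whether the surviving star/plaquette configuration connects $U$ to the far side of $W$. The connecting configurations are suppressed by $(\gamma_\beta/(1+\gamma_\beta))^{\dist(U,W)}$. The remaining quotient $\Tr_{VW}e^{-\beta H_{VW}}/\Tr_{UV\dots}$ is again controlled by \eqref{equa:marginalcloseIdentity}, applied to the rectangles $VW$ and $V$. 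When $V$ itself is not a rectangle, I would instead use the rectangle $W'$ obtained from $VW$ minus a strip of width $\dist(U,W)$ adjacent to $U$.

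\textbf{Step 4: assemble the constants.} Collecting the constant factors $\kappa$, which cancel exactly since they depend only on counts, leaves at most five factors of $(1+\epsilon)^{\pm 2}$. Each satisfies $\epsilon \le e^\beta|G|\,(\gamma_\beta/(1+\gamma_\beta))^{\dist(U,W)}$. Using $\prod_{i=1}^{10}(1+x) \le 1 + 2^{10}x$ for $x \le 1$ then yields the constant $K = 2^{10}e^{10\beta}|G|^{10}$. Such a constant is consistent with ten factors of $e^\beta|G|$, and the case $x>1$ is trivial.

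\textbf{Main obstacle.} The hard part is the approximate factorization in Step 3, with error controlled by $m_R$ of a region whose stars and plaquettes are all at least $\dist(U,W)$ in number. This needs a careful choice of auxiliary rectangles and a check that the $\kappa$ prefactors match exactly.
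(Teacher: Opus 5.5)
Your Steps 1, 2 and 4 are sound, but Step 3, which you flag as the crux, is not proved. The cluster-expansion sketch is never carried out. Moreover, \eqref{equa:explicitMarginalQDM} only describes $\Tr_R e^{-\beta H_R}$, so it says nothing directly about $\Tr_{VW}e^{-\beta H_{UVW}}$.

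The missing observation is that no approximation is needed here. You peeled off $H_{UVW}-H_{UV}$, which is supported on $W$ and so cannot be pulled out of $\Tr_{VW}$. Peel off $H_{UVW}-H_{VW}$ instead. Every star and plaquette in it misses $VW$ by definition of $\strset_{VW}$ and $\plqset_{VW}$, so, since all terms commute,
\[
\Tr_{VW}e^{-\beta H_{UVW}} = e^{-\beta(H_{UVW}-H_{VW})}\,\Tr_{VW}e^{-\beta H_{VW}}, \qquad \Tr_{V}e^{-\beta H_{UV}} = e^{-\beta(H_{UV}-H_{V})}\,\Tr_{V}e^{-\beta H_{V}} .
\]
Also, $\dist(U,W)\geq 2$ gives $H_{UV}+H_{VW}=H_{UVW}+H_V$, i.e.\ $H_{UVW}-H_{VW}=H_{UV}-H_V$. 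Hence your key claim holds with equality. Combined with Step 1 it yields
\[
\frac{\Tr_{VW}\hat{\rho}_{UVW}}{\Tr_{V}\hat{\rho}_{UV}} = \Tr_{VW}(e^{-\beta H_{VW}})\,\Tr_{UVW}(e^{-\beta H_{UVW}})^{-1}\,\Tr_{V}(e^{-\beta H_{V}})^{-1}\,\Tr_{UV}(e^{-\beta H_{UV}}) ,
\]
a product of four commuting marginals of the form $\Tr_R e^{-\beta H_R}$. This is exactly the paper's decomposition. All the smallness comes from each factor being within $(1+\epsilon_R)^{\pm 2}$ of $\kappa_R\identity$ by \eqref{equa:marginalcloseIdentity}. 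The extra $(1+\epsilon_V)^{\pm 2}$ you budget for a "factorization error" in Step 3 is spurious.

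The second gap is the case $V=V_1V_2$ with two components. Then $V$ is not star- and plaquette-connected, so \eqref{equa:marginalcloseIdentity} does not apply to $\Tr_V e^{-\beta H_V}$. Your auxiliary rectangle $W'$ does not help, because it is this fixed operator that has to be bounded. Instead, use $\diam_{-}(U),\diam_{-}(W)\geq 1$ to get $\dist(V_1,V_2)\geq 2$. Then no star or plaquette meets both components, and
\[
\Tr_V e^{-\beta H_V}=\Tr_{V_1}(e^{-\beta H_{V_1}})\,\Tr_{V_2}(e^{-\beta H_{V_2}}), \qquad \kappa_V=\kappa_{V_1}\kappa_{V_2}, \qquad m_{V_i}\geq\dist(U,W).
\]
This gives $(1+\epsilon)^{\pm 4}$ for the $V$ factor. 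Together with $(1+\epsilon)^{\pm 2}$ for each of $UV$, $VW$, $UVW$, this is where the exponent $10$ in $K$ comes from. The $\kappa$'s then cancel via $\kappa_{UVW}\kappa_V=\kappa_{UV}\kappa_{VW}$. This follows from inclusion--exclusion for $|R|$, $|\strset_R|$ and $|\plqset_R|$, again because no star or plaquette meets both $U$ and $W$. With these two repairs, your Step 4, including the elementary inequality, goes through as written.
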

\begin{proof}
A distance of at least $2$ between $U$ and $W$ implies that $UVW$ is a rectangle and that no star or plaquette is connected to both $U$ and $W$. Thus, $H_{UV}+H_{VW}=H_{UVW}+H_{V}$.
The argument follows the ideas of \cite[Theorem 42]{lucia_SpectralGapBounds_2025}. We factorize
\begin{align*} 
\Tr_{VW} \hat{\rho}_{UVW} 
& = \Tr_{VW}(e^{-\beta H_\lsym}) \Tr_{UVW}(e^{-\beta H_\lsym})^{-1}\\
& = \Tr_{VW}(e^{-\beta H_{VW}}) e^{-\beta (H_\lsym - H_{VW})} e^{\beta (H_\lsym - H_{UVW})} \Tr_{UVW}(e^{-\beta H_{UVW}})^{-1}\\
& = e^{\beta(H_{VW} - H_{UVW})} \Tr_{VW}(e^{-\beta H_{VW}}) \Tr_{UVW}(e^{-\beta H_{UVW}})^{-1}\ ,
\end{align*}
and similarly
\begin{align*} 
(\Tr_{V} \hat{\rho}_{UV})^{-1} 
& =  \Tr_{UV}(e^{-\beta H_\lsym}) \Tr_{V}(e^{-\beta H_\lsym})^{-1}\\
& = \Tr_{UV}(e^{-\beta H_{UV}}) e^{-\beta (H_\lsym - H_{UV})}  e^{\beta (H_\lsym - H_{V})} \Tr_{V}(e^{-\beta H_{V}})^{-1}  \\
& = e^{\beta(H_{UV} - H_{V})}  \Tr_{UV}(e^{-\beta H_{UV}}) \Tr_{V}(e^{-\beta H_{V}})^{-1}.
\end{align*}
Multiplying both expressions and using that
\[H_{UV} + H_{VW} - H_{UVW} - H_{V} = 0\ ,\]
we deduce the following decomposition, where all factors commute
\begin{equation}\label{equa:firstdecompDS}
\frac{\Tr_{VW} \hat{\rho}_{UVW}}{\Tr_{V}(\hat{\rho}_{UV})} = \Tr_{VW}(e^{-\beta H_{VW}}) \Tr_{UVW}(e^{-\beta H_{UVW}})^{-1} \Tr_{V}(e^{-\beta H_{V}})^{-1} \Tr_{UV}(e^{-\beta H_{UV}}) \ . 
\end{equation}
Next, we estimate the four factors using \eqref{equa:marginalcloseIdentity}. Note that for $R \in \{ UV,VW, UVW\}$, we have $m_{R} \geq \dist(U,W)$, and therefore
\begin{equation}\label{equa:firstEstimateDS} 
\kappa_{R} \left( 1+ e^{\beta} |G|e^{-\xi\dist(U,W)} \right)^{-2} \, \identity \leq \Tr_{R}\left( e^{-\beta H_{R}}\right) \leq \kappa_{R}\left( 1+ e^{\beta} |G|e^{-\xi\dist(U,W)} \right)^2  \identity 
\end{equation}
For $R=V$ we distinguish two cases. If $V$ is a rectangle, then the above estimates \eqref{equa:firstEstimateDS}  also work in this case. 
However, if $V$ is the union of two rectangles $V=V_{1}V_{2}$, we observe $\dist(V_1, V_2)\geq 2$ since the inner diameters of $U$ and $W$ are both at least $1$. Thus $\strset_{V_{1}} \cap \strset_{V_{2}} = \plqset_{V_{1}} \cap \plqset_{V_{2}} = \emptyset$.
In this case, we split
\[ \Tr_{V}(e^{-\beta H_{V}}) = \Tr_{V_{1}}(e^{-\beta H_{V_{1}}}) \Tr_{V_{2}}(e^{-\beta H_{V_{2}}}),\]
and apply \eqref{equa:firstEstimateDS}  on each of the two factors, which yields
\begin{equation}\label{equa:secondEstimateDS}
\kappa_{V} \left( 1+ e^{\beta} |G|e^{-\xi\dist(U,W)}\right)^{-4} \, \identity \leq \Tr_{V}\left( e^{-\beta H_{V}}\right) \leq \kappa_{V}\left( 1+ e^{\beta} |G|e^{-\xi\dist(U,W)} \right)^4 \identity 
\end{equation}
where we used that $\kappa_{V} = \kappa_{V_{1}} \cdot \kappa_{V_{2}}$. Applying \eqref{equa:firstEstimateDS} and \eqref{equa:secondEstimateDS} on \eqref{equa:firstdecompDS}, we obtain
\[ 
\left( 1+ e^{\beta} |G|e^{-\xi \dist(U,W)} \right)^{-10} \identity
\leq \frac{\kappa_{UVW} \cdot \kappa_{V}}{\kappa_{UV} \cdot \kappa_{VW}} 
\frac{\Tr_{VW}( \hat{\rho}_{UVW})}{\Tr_{V}(\hat{\rho}_{UV})} 
\leq \left( 1+ e^{\beta} |G|e^{-\xi \dist(U,W)} \right)^{10} \identity. \]
We now use that $\kappa_{UVW} \kappa_{V} = \kappa_{UV} \kappa_{VW}$, which follows from the elementary relations 
\begin{align*}
|UV| + |VW| & = |UVW| + |V|,\\
 |\strset_{UV}| + |\strset_{VW}| & = |\strset_{UVW}| + |\strset_{V}|,\\
 |\plqset_{UV}| + |\plqset_{VW}| & = |\plqset_{UVW}| + |\plqset_{V}|,
\end{align*}
which in turn are consequences of the fact that $\dist(U,W) \geq 2$, so that no plaquette or star intersect both $U$ and $W$. 
We finally apply the elementary inequality 
\[ (1+e^{\beta}|G| e^{- \xi \dist(U,W)})^{10} \leq 1 +2^{10} e^{10 \beta} |G|^{10} e^{-\xi \dist(U,W)} \] to arrive at
\begin{equation*}
 \left(1+ 2^{10} e^{10 \beta} |G|^{10} e^{-\xi\dist(U,W)}\right)^{-1} \identity \leq \frac{\Tr_{VW} \hat{\rho}_{UVW}}{\Tr_{V} \hat{\rho}_{UV}} \leq \left(1+ 2^{10} e^{10 \beta} |G|^{10}e^{-\xi\dist(U,W)}\right) \identity. 
\end{equation*}
The proof is completed by using $( 1 + a)^{-1} \geq 1 - a $ for any $ a \geq 0 $. 
\end{proof}

\section{Kernel of the Lindbladian}\label{sec:kernel}
One key insight that enables our result is an explicit expression for the kernel of the local Davies Lindbladians.
Its specific form allows us to write down the infinite-temperature conditional expectation explicitly as a projection onto this kernel. 
As we will show later, this projection factorizes, giving an exact tensorization of the relative entropy.
For finite temperature, the factorization is no longer exact and depends on the correlations of the Gibbs state as expressed in the DS-condition.

We start by expressing the kernels of the local Davies Lindbladians as simple commutants: 
\begin{theorem}\label{theorem:kernelNice}
    Let $R\subseteq \lsym$ be a star- and plaquette-connected set of size at least $2$ and let $\LL_R$ be the Davies Lindbladian of $H$ restricted to $R$. Then
    \begin{equation}\label{eq:kernelNice}
        \ker(\LL_R) = \identity_R \otimes \BB(\HH_{\lsym \setminus R}) \cap \left\{ A_s , B_p \right\}_{s\in \strset_R, p \in \plqset_R}' \, .
    \end{equation}
\end{theorem}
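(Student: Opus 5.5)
We prove the two inclusions separately, using the standard characterization of the kernel of a detailed-balance (KMS-symmetric) Davies generator: since $-\LL_R$ is positive semidefinite with respect to $\langle\cdot,\cdot\rangle_{\rho,1/2}$ and is a sum of the positive terms $-\mathcal{D}_e$, we have $\ker(\LL_R)=\bigcap_{e\in R}\ker(\mathcal{D}_e)$. Moreover, the Dirichlet form of each $\mathcal{D}_e$ is a positive combination (all rates are positive) of terms $\|[S_{e,i}(\omega),O]\|^2_{\rho}$. Hence
\[
\ker(\LL_R)=\{S_{e,i}(\omega)\,:\,e\in R,\ i\in I,\ \omega\in\Omega\}' .
\]
The whole task is therefore to identify this commutant.

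\emph{Inclusion ``$\supseteq$''.} Let $O\in\identity_R\otimes\BB(\HH_{\lsym\setminus R})$ commute with all $A_s,B_p$, $s\in\strset_R$, $p\in\plqset_R$. Write $H_\lsym=H_R+(H_\lsym-H_R)$, where $H_\lsym-H_R$ is supported off $R$ and commutes with everything. Since $S_{e,i}(\omega)$ is obtained from $S_{e,i}$ by Bohr-frequency filtering, it is computed as $\frac1T\int_0^T e^{-i\omega t}e^{itH}S_{e,i}e^{-itH}\,dt$ (limit). The operator $e^{itH}S_{e,i}e^{-itH}$ for $e\in R$ only involves $S_{e,i}$ and the star/plaquette projections adjacent to $e$, all of which are in $\strset_R,\plqset_R$. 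Hence $S_{e,i}(\omega)$ lies in the algebra generated by $\BB(\HH_R)$ and $\{A_s,B_p\}_{s\in\strset_R,p\in\plqset_R}$. The operator $O$ commutes with $\BB(\HH_R)$ by its support and with these projections by assumption. So $O$ is in the commutant, i.e. in the kernel.

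\emph{Inclusion ``$\subseteq$''.} This is the main obstacle. Take $O$ commuting with all $S_{e,i}(\omega)$. Summing over $\omega$ gives $[O,S_{e,i}]=0$ for all $e\in R$, $i\in I$. Ergodicity $\SSS_e'=\BB(\HH_e)'$ then yields $O\in\identity_e\otimes\BB(\HH_{\lsym\setminus e})$ for each $e\in R$, hence $O\in\identity_R\otimes\BB(\HH_{\lsym\setminus R})$. It remains to show $[O,A_s]=[O,B_p]=0$ for $s\in\strset_R$, $p\in\plqset_R$.

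For this we decompose the jump operators into their frequency components. Using the resolutions of identity $\{A_s(\chi)\}$, $\{B_p(g)\}$ and the conjugation relations above, every single-edge operator is a combination of $L_e^gM_e^\chi$. Such an operator shifts the syndromes of the (at most two) stars in $\cobdy e$ and plaquettes in $\bdy e$, so $S_{e,i}(\omega)$ is a sum of terms $P\,S_{e,i}\,P'$, where $P,P'$ are products of syndrome projections on the adjacent stars and plaquettes. Since the family $\SSS_e$ generates $\BB(\HH_e)$, we aim to show that the algebra generated by the $S_{e,i}(\omega)$ contains operators like $A_s S_{e,i}(1-A_s)$ whose products yield $A_s$ itself. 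For instance, $A_s=\sum X^\dagger X$ for suitable frequency-filtered pieces moving a charge off and back onto $s$. This lets us recover $A_s$ and $B_p$ as elements of the generated algebra; then $O$ commutes with them.

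The delicate point is carrying out this construction with generic jump operators. Star-connectedness and $|R|\ge2$ are used here: for a star $s$ with only one edge $e$ in $R$, the energy of $s$ changes together with that of another star, so individual syndrome projections may not be separable by frequency alone. We would handle this by noting that $O$ commutes with $\BB(\HH_R)$, so it suffices to show that the generated algebra together with $\BB(\HH_R)$ contains each $A_s(\chi)$ multiplied by projections on neighbouring syndromes. Products over the connected path in $R$ then isolate $A_s$ and, symmetrically, $B_p$.
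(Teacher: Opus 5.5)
Your ``$\supseteq$'' direction and the first half of ``$\subseteq$'' (summing over $\omega$ and using ergodicity to get $O\in\identity_R\otimes\BB(\HH_{\lsym\setminus R})$) are correct and essentially the paper's argument. The gap is the remaining step, $[O,A_s]=[O,B_p]=0$. This is the real content of the theorem, and you only describe it as a plan (``we aim to show'', ``we would handle this'').

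As sketched, the plan would not work. A frequency component of a single-edge operator only resolves the \emph{joint} syndromes of the two stars (and two plaquettes) touching $e$. For instance, take $Q=M_e^\chi$ with $\chi\neq\mathbf 1_G$ and $e\in\cobdy s\cap\cobdy s'$. Then $Q(2)=M_e^\chi A_sA_{s'}$ and $Q(2)^\dagger Q(2)=A_sA_{s'}$, not $A_s$. Taking products along a path only produces smaller joint projections such as $A_sA_{s'}A_{s''}$, so ``products over the connected path'' can never isolate $A_s$. You would need sums and differences of such terms, combined with conjugations by or partial traces over elements of $\BB(\HH_R)$, plus a case analysis for stars with a single edge in $R$. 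None of this is carried out.

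There is also an unaddressed preliminary. Computing with components like $M_e^\chi(\omega)$, rather than those of the given $S_{e,i}$, requires that $O$ commute with $Q_R(\omega)$ for \emph{every} $Q_R\in\BB(\HH_R)$. This does not follow merely from $O$ commuting with $\BB(\HH_R)$ and the $S_{e,i}(\omega)$, because conjugation by local unitaries (e.g.\ $L_e^h$, which shifts plaquette fluxes) does not commute with frequency filtering. The paper obtains it from the convolution identity $(QQ')(\omega)=\sum_{\nu+\nu'=\omega}Q(\nu)Q'(\nu')$ together with $\mathrm{Alg}(\SSS_R)=\BB(\HH_R)$.

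The missing idea is the paper's way of avoiding syndrome bookkeeping altogether:
\begin{itemize}
\item Weight the frequency components by $\omega$. Then $O$ commutes with $\sum_\omega\omega\,Q_R(\omega)=[H_\lsym,Q_R]$, hence with $H_\lsym-U_RH_\lsym U_R^\dagger$ for every local unitary $U_R$.
\item Haar-averaging over $U_R$ shows that $O$ commutes with $H_R-\Tr_R(H_R)\otimes\identity_R/d_R$. By the partial-trace lemma this is $H_R$ up to a multiple of the identity.
\item Twirling with $L^{\mathbf g}$, respectively $M^{\boldsymbol\chi}$, over $R$ separates $\sum_{s\in\strset_R}A_s$ from $\sum_{p\in\plqset_R}B_p$.
\item Partial traces $\Tr_{R\setminus r}$ for $r\subseteq\cobdy s\cap R$, combined by inclusion--exclusion, isolate $\sum_{s':\,\cobdy s'\cap R=\cobdy s\cap R}A_{s'}$.
\end{itemize}
The last step is exactly where star-/plaquette-connectedness and $|R|\ge 2$ enter. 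They rule out two distinct stars having the same (necessarily one-edge) intersection with $R$, so the sum reduces to $A_s$, and likewise for plaquettes.
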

We will give the proof at the end of this section.
It is helpful to slightly rewrite this expression. Namely, one can replace the stars and plaquettes in the commutant by their generalized versions defined in~\eqref{def:genAB}.
\begin{corollary}\label{cor:kernelNice2}
    The kernel of $\LL_R$ can also be written as:
    \begin{equation}\label{eq:kernelNicefull}
        \ker(\LL_R) = \identity_R \otimes \BB(\HH_{\lsym \setminus R}) \cap \left\{ A_s(\chi),  B_p(g) \right\}_{\substack{s\in \strset_R, p \in \plqset_R\\ \chi \in \hat{G}, g\in G}}' \, .
    \end{equation}
\end{corollary}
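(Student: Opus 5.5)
Given Theorem~\ref{theorem:kernelNice}, the corollary reduces to an equality of commutants. We show that, inside $\BB(\HH_\lsym)$,
\[
\{A_s, B_p\}'_{s\in\strset_R,\,p\in\plqset_R} \;\cap\; \identity_R\otimes\BB(\HH_{\lsym\setminus R})
\;=\;
\{A_s(\chi), B_p(g)\}'_{s\in\strset_R,\,p\in\plqset_R,\,\chi\in\hat G,\,g\in G} \;\cap\; \identity_R\otimes\BB(\HH_{\lsym\setminus R}).
\]
The inclusion ``$\supseteq$'' is immediate, since $A_s=A_s(\mathbf{1}_G)$ and $B_p=B_p(1)$. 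Hence the generating set on the right contains that on the left, and its commutant is smaller.

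For ``$\subseteq$'', take $O\in\identity_R\otimes\BB(\HH_{\lsym\setminus R})$ commuting with all $A_s,B_p$ for $s\in\strset_R$, $p\in\plqset_R$. Fix a star $s\in\strset_R$ and a character $\chi$. We aim to show $[O,A_s(\chi)]=0$.

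The key observation is that $A_s(\chi)$ is a conjugate of $A_s=A_s(\mathbf{1}_G)$ by a unitary supported on a single edge of $R$. Since $s\in\strset_R$, there is an edge $e\in\cobdy s\cap R$. By the displayed conjugation relation $M_e^{\chi'}A_s(\mathbf{1}_G)M_e^{\overline{\chi'}}=A_s(\chi')$ or $A_s(\overline{\chi'})$, according to whether $e\in\cobdy_+s$ or $e\in\cobdy_-s$. Choosing $\chi'=\chi$ or $\chi'=\overline\chi$ accordingly gives
\[
A_s(\chi)=M_e^{\chi'}A_sM_e^{\overline{\chi'}}.
\]
Because $e\in R$, the unitary $M_e^{\chi'}$ is supported in $R$ and therefore commutes with $O$. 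It follows that
\[
OA_s(\chi)=M_e^{\chi'}OA_sM_e^{\overline{\chi'}}=M_e^{\chi'}A_sOM_e^{\overline{\chi'}}=A_s(\chi)O.
\]
The plaquette case is identical. For $p\in\plqset_R$, pick $e\in\bdy p\cap R$ and use $L_e^{h}B_p(1)L_e^{\overline h}=B_p(\overline h)$ or $B_p(h)$, choosing $h$ so that the result is $B_p(g)$. Again $L_e^h$ commutes with $O$, which gives $[O,B_p(g)]=0$.

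There is essentially no obstacle. The one point to check is that every star in $\strset_R$ and every plaquette in $\plqset_R$ meets $R$ in at least one edge, which holds by the definition of $\strset_R$ and $\plqset_R$. This is exactly why the support condition $O\in\identity_R\otimes\BB(\HH_{\lsym\setminus R})$ is needed: without it the generalized projectors would not follow from the ordinary ones.
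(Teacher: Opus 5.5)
Your proof is correct and matches the paper's argument: $\supseteq$ follows from $A_s=A_s(\mathbf{1}_G)$ and $B_p=B_p(1)$, and for $\subseteq$ you pick $e\in R\cap\cobdy s$ (resp.\ $e\in R\cap\bdy p$) and conjugate $A_s$ by $M_e^{\chi}$ (resp.\ $B_p$ by $L_e^{g}$). Such a single-edge unitary supported in $R$ commutes with $O$ because $O\in\identity_R\otimes\BB(\HH_{\lsym\setminus R})$, and the conjugate is the generalized projection $A_s(\chi)$ or $A_s(\overline{\chi})$ (resp.\ $B_p(g)$ or $B_p(\overline{g})$) depending on orientation.
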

\begin{proof}
    Comparing the right-hand sides of \eqref{eq:kernelNice} and \eqref{eq:kernelNicefull}, the inclusion $\supseteq$ is immediate.
    For the other inclusion $\subseteq$, we pick $O\in \ker(\LL_R)$. By \eqref{eq:kernelNice}, $O$ commutes with any $A_s$ with $s\in \strset_R$ and any $M_e^\chi$ with $e\in R$. 
    Since $s$ is supported on $R$, we can pick $e\in R \cap \cobdy s$. Then, $O$ also commutes with $M_e^\chi A_s M_e^{\overline{\chi}}$. This is either $A_s(\chi)$ or $A_s(\overline{\chi})$, depending on the orientations. Repeating this for all $\chi\in \hat{G}$ and all stars, we see that $O$ commutes with all $A_s(\chi)$. For plaquettes, the argument is identical.
\end{proof}

Partial traces of stars and plaquettes turn out to be a useful tool in the proof of \cref{theorem:kernelNice}, since they have the following simple expression. 
\begin{lemma}\label{lemma:traceoperator}
For every star or plaquette operator $O \in \{ A_s , B_p \}$ and every set $R\subseteq \lsym$, its partial trace over $R$ evaluates to
\begin{equation}
    \Tr_R(O) = \begin{cases}
        \frac{d_R}{|G|} \identity_{\Lambda \setminus R}, &\supp(O)\cap R \neq \emptyset\\
        d_R O, &\mathrm{else},
    \end{cases}
\end{equation}
where $d_R$ is the dimension of $\HH_R$.
\end{lemma}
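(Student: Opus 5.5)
We prove the lemma by expanding the star and plaquette operators in their group and character sums and evaluating partial traces factor by factor. By \eqref{def:Astar}, $A_s = \frac{1}{|G|}\sum_{g\in G} A_s(g)$, where each $A_s(g)$ is a tensor product of the single-edge operators $L^{g}$ or $L^{\overline g}$ on the edges of $\cobdy s$, and the identity elsewhere. Likewise, by \eqref{def:Bplaq}, $B_p = \frac{1}{|\hat G|}\sum_{\chi} B_p(\chi)$, where $B_p(\chi)$ is a product of the operators $M^{\chi}$ or $M^{\overline\chi}$ on $\bdy p$. Since the partial trace of a product operator factorizes over the edges, $\Tr_R$ of each term is a product of single-edge traces over the edges in $R$, multiplied by the remaining tensor factors outside $R$.

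The key single-edge facts come from the definitions $L^g=\sum_h \ketbra{gh}{h}$ and $M^\chi=\sum_g \chi(g)\dyad{g}$. First, $\Tr(L^g) = |G|\,\delta_{g,1}$, because $gh=h$ holds only for $g=1$. Second, $\Tr(M^\chi) = \sum_g \chi(g) = |G|\,\delta_{\chi,\mathbf 1_G}$ by \eqref{equa:CharacterOrthogonal}. On identity factors the trace is simply $|G|$. The same computation applies to the inverses $L^{\overline g}$ and $M^{\overline\chi}$.

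Now split into the two cases of the lemma. If $\supp(O)\cap R=\emptyset$, then $O$ acts as the identity on every edge of $R$, so $\Tr_R(O)=d_R\, O$ directly. If $\supp(O)\cap R\neq\emptyset$, consider the term $A_s(g)$. Its partial trace contains at least one factor $\Tr(L^{g})$ or $\Tr(L^{\overline g})$, which vanishes unless $g=1$. Hence only the term $g=1$ survives, and $A_s(1)=\identity$. That term contributes $\frac{1}{|G|}\Tr_R(\identity)=\frac{d_R}{|G|}\identity_{\lsym\setminus R}$. The plaquette case is identical: only $\chi=\mathbf 1_G$ survives, with the prefactor $1/|\hat G| = 1/|G|$.

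The argument requires no real obstacle. The one point to check is that each edge of $\supp(O)\cap R$ carries a nontrivial factor $L^{g^{\pm1}}$ or $M^{\chi^{\pm1}}$ for $g\neq1$ or $\chi\neq\mathbf 1_G$, so that a single vanishing trace kills the whole term. The other terms are then irrelevant, and the orientations play no role.
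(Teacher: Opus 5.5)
Your proof is correct and takes essentially the same route as the paper: you expand $A_s$ (resp.\ $B_p$) into its group (resp.\ character) sum and use $\Tr L^g = |G|\,\delta_{g,1}$ (resp.\ $\Tr M^\chi = |G|\,\delta_{\chi,\mathbf{1}_G}$) on an edge of $\supp(O)\cap R$, so that only the identity term survives. The only difference is that you write out the plaquette case via the character orthogonality relation, which the paper simply calls analogous.
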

\begin{proof}
    We present the star case, plaquettes are analogous. If $A_s$ is not supported on $R$ the statement is trivial. Otherwise, by definition:
    \begin{equation}
        \Tr_R(A_s) = \frac{1}{|G|} \sum_{g\in G} \Tr_R( A_s(g)) \ .
    \end{equation}
    Let $e\in R\cap \supp(A_s)$.  To compute the remaining partial trace, we use the definition~\eqref{def:Astar} of $ A_s(g) $  and the fact that
    \[
    \Tr_e L_e^g = \begin{cases}
        |G| , & g = 1 \\
        0 , &\mathrm{else.}
    \end{cases}
    \]
    Since $ L_e^1= \identity_e $, the claim thus follows. 
\end{proof}

We now present the first step in rewriting the kernel of the Lindbladian. We use the well-known fact that the kernel of a Lindbladian is the commutant of its jump operators. Since the bare jump operators generate the full local algebra by construction, a similar statement can be made for their Fourier transforms.
\begin{lemma}\label{lemma:kernelNice}
For every subset $R \subseteq \lsym$
\begin{equation*}
\ker{(\LL_{R})} = \{S(\omega) | S \in \SSS_{e},  \omega \in \Omega, e \in R \}'
=
\{Q_{R}(\omega) | Q_{R} \in \BB(\HH_{R})\,, \, \omega \in \Omega \}'.
\end{equation*}
\end{lemma}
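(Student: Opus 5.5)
The plan is to prove the two equalities in turn: first that the kernel is the commutant of the jump operators, then that this commutant coincides with the commutant of all Fourier components of local operators on $R$.

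\textbf{First equality.} I will invoke the standard characterization of the kernel of a detailed-balance (KMS-symmetric) Lindbladian as the commutant of its jump operators. Because $\LL_R$ is self-adjoint and $-\LL_R\geq 0$ with respect to $\langle\cdot,\cdot\rangle_{\rho,1}$, the quadratic form has a Dirichlet-form expression. For the Davies generator, using \eqref{eq:DB} and the GNS inner product, one obtains
\begin{equation*}
-\langle O,\LL_R(O)\rangle_{\rho,1} = \frac12\sum_{e\in R}\sum_{i,\omega} h_{e,i}(\omega)\,\Tr\bigl(\rho\,[S_{e,i}(\omega),O]^\dagger[S_{e,i}(\omega),O]\bigr).
\end{equation*}
The rates are strictly positive and $\rho$ is full rank. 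Hence $\LL_R(O)=0$ holds if and only if $[S_{e,i}(\omega),O]=0$ for all $e\in R$, $i\in I$ and $\omega$.

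To derive the Dirichlet-form expression, I will use two facts: $\rho S(\omega) = e^{\omega\beta}S(\omega)\rho$, up to the sign convention, and $S(\omega)^\dagger = (S^\dagger)(-\omega)$. Since $\SSS_e$ is closed under adjoints, these let me pair the $\omega$ and $-\omega$ terms. This bookkeeping is routine but sign-sensitive, and I will check it carefully.

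\textbf{Second equality, inclusion $\supseteq$.} This direction is immediate, since each $S_{e,i}$ with $e\in R$ lies in $\BB(\HH_R)$.

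\textbf{Second equality, inclusion $\subseteq$.} The map $Q\mapsto Q(\omega)$ is linear, and for commuting Hamiltonians it is compatible with products in a graded way:
\begin{equation*}
(Q_1Q_2)(\omega)=\sum_{\omega_1+\omega_2=\omega}Q_1(\omega_1)Q_2(\omega_2).
\end{equation*}
This follows by inserting $\sum_\lambda P_\lambda=\identity$ between $Q_1$ and $Q_2$. Consequently, the set of $Q\in\BB(\HH_\lsym)$ all of whose Fourier components commute with $O$ is a unital $*$-algebra. It is closed under adjoints because $Q(\omega)^\dagger=Q^\dagger(-\omega)$ and the set of jump operators is closed under adjoints. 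It is unital since $\identity(\omega)=\delta_{\omega,0}\identity$.

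Suppose $O$ commutes with all $S_{e,i}(\omega)$. Then this algebra contains every $S_{e,i}$ with $e\in R$. By the ergodicity assumption $\SSS_e'=\BB(\HH_e)'$ and the bicommutant theorem, the $*$-algebra generated by $\SSS_e$ is $\BB(\HH_e)$. So the algebra contains $\BB(\HH_e)$ for every $e\in R$, and therefore contains the algebra they generate, namely $\BB(\HH_R)$. Thus $O$ commutes with $Q_R(\omega)$ for all $Q_R\in\BB(\HH_R)$.

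\textbf{Main obstacle.} I expect the Dirichlet-form step to be the only delicate part, specifically getting the KMS identity right. If that route becomes messy, the fallback is Frigerio's theorem (cited as \cite{frigerio_StationaryStatesQuantum_1978}). It states that for a faithful invariant state under detailed balance, the fixed-point algebra equals the commutant of the jump operators and their adjoints. The adjoints are already included here by $S(\omega)^\dagger=S^\dagger(-\omega)$.
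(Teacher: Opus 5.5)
Your proof is correct. For the second equality it is the paper's argument. The paper applies the bicommutant theorem once to $\SSS_R=\bigcup_{e\in R}\SSS_e$ to get $\mathrm{Alg}(\SSS_R)=\BB(\HH_R)$, then uses the convolution formula of Proposition~\ref{Prop:productFourierCoefficients} to place each $Q_R(\omega)$ in the algebra generated by the $S(\omega)$. Your edge-by-edge version amounts to the same thing.

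For the first equality the paper does no computation. It cites the general theorem that the kernel of a Lindbladian with a full-rank invariant state is the commutant of its jump operators and their adjoints, which is your fallback. Your Dirichlet-form derivation makes this self-contained, but the step as you sketch it has a catch. Pairing the term $(e,i,\omega)$ with the index of $S_{e,i}^\dagger$ at $-\omega$ needs a relation between $h_{e,i}(\omega)$ and the rate attached to $S_{e,i}^\dagger$. The condition \eqref{eq:DB} as stated only relates rates with the same $i$, so for non-self-adjoint $S_{e,i}$ it does not literally supply this. You can avoid the bookkeeping with the identity
\[
\LL_R(O^\dagger O)-O^\dagger\LL_R(O)-\LL_R(O)^\dagger O=\sum_{e\in R}\sum_{i,\omega}h_{e,i}(\omega)\,[S_{e,i}(\omega),O]^\dagger[S_{e,i}(\omega),O].
\]
Apply $\Tr(\rho\,\cdot)$ and use only that $\rho$ is a full-rank invariant state; for $O\in\ker(\LL_R)$ this gives $[S_{e,i}(\omega),O]=0$. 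The converse is direct, since the set $\{S_{e,i}(\omega)\}$ is closed under adjoints.

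One small inaccuracy: the set of all $Q$ whose Fourier components commute with a fixed $O$ is a unital algebra, but your reason for $*$-closure does not hold. From $[Q(-\omega),O]=0$ you get $[Q^\dagger(\omega),O^\dagger]=0$, not $[Q^\dagger(\omega),O]=0$. You do not need $*$-closure, though. Since $\SSS_e$ is self-adjoint, the unital algebra it generates is already a $*$-algebra, and it equals $\SSS_e''=\BB(\HH_e)$.
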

\begin{proof}
The kernel of a Lindbladian with a full-rank fixed point is known \cite[Thm. 7.2]{wolf_QuantumChannelsOperations_2012} to be given by the commutant of the set of jump operators and their adjoints. This is the first equality. 
Let us denote by $\SSS_{R}$ the union of all $\SSS_{e}$ with $e \in R$. Then, $\SSS_{R}$ is closed under the adjoint operation and ergodic in the sense that 
\[ \SSS_{R}' = \bigcap_{e \in R} \SSS_{e}' = \identity_{R} \otimes \BB(\HH_{\lsym \setminus R})  \equiv \BB(\HH_{\lsym \setminus R}).\]
Therefore, by the bicommutant theorem 
\[ Alg(\SSS_R)  = \BB(\HH_{R}). \]
This means that every $Q_{R} \in \BB(\HH_R)$ can be written as a linear combination of products of elements of $\SSS_{R}$. Thus, by Proposition \ref{Prop:productFourierCoefficients},
\[ \{ Q_{R}(\omega) | Q_{R} \in \BB(\HH_{R}), \omega \in \Omega \} \subseteq Alg(\{ S(\omega) | S \in \SSS_{e}, \omega \in \Omega, e \in R \}), \]
and this yields that
\[ \{ S(\omega) | S \in \SSS_{e}, \omega \in \Omega, e \in R \}' \subseteq \{ Q_{R}(\omega) | Q_{R} \in \BB(\HH_{R}), \omega \in \Omega \}'. \]
The reversed inequality is trivial.
\end{proof}

We have the following useful observation regarding the operator Fourier transform $Q(\omega)$, which is a discrete version of the convolution theorem. 
\begin{proposition}\label{Prop:productFourierCoefficients}
Let $Q, Q' \in \BB(\HH_\lsym)$, then for every $\omega \in \Omega$
\[ (Q Q')(\omega) = \sum_{\nu, \nu' \in \Omega \colon \nu + \nu'=\omega} Q(\nu) Q'(\nu'). \]
\end{proposition}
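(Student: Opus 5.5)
The identity follows by expanding both operators in the spectral decomposition of $H_\lsym$ and collecting terms by energy difference. Write $H_\lsym = \sum_\lambda \lambda P_\lambda$, where the $P_\lambda$ are the spectral projections, so that $P_\lambda P_\mu = \delta_{\lambda\mu} P_\lambda$ and $\sum_\mu P_\mu = \identity$. By definition, $Q(\nu) = \sum_{\lambda-\lambda'=\nu} P_\lambda Q P_{\lambda'}$.

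First, insert the resolution of the identity between the two factors:
\[ (QQ')(\omega) = \sum_{\lambda-\lambda'=\omega} P_\lambda Q Q' P_{\lambda'} = \sum_{\lambda-\lambda'=\omega}\sum_{\mu} P_\lambda Q P_\mu Q' P_{\lambda'} . \]
Each triple $(\lambda,\mu,\lambda')$ of eigenvalues with $\lambda-\lambda'=\omega$ determines the pair $\nu=\lambda-\mu$ and $\nu'=\mu-\lambda'$. Both are Bohr frequencies, so they lie in $\Omega$, and they satisfy $\nu+\nu'=\omega$.

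Second, expand the right-hand side of the claimed identity:
\[ \sum_{\nu+\nu'=\omega} Q(\nu)Q'(\nu') = \sum_{\nu+\nu'=\omega}\ \sum_{\lambda-\mu=\nu}\ \sum_{\mu'-\lambda'=\nu'} P_\lambda Q P_\mu P_{\mu'} Q' P_{\lambda'} . \]
Orthogonality of the projections kills every term with $\mu\neq\mu'$, so we may set $\mu'=\mu$. What remains is a sum over triples $(\lambda,\mu,\lambda')$ with $\lambda-\mu=\nu$ and $\mu-\lambda'=\nu'$ for some $\nu,\nu'\in\Omega$ with $\nu+\nu'=\omega$. This set of triples is exactly $\{(\lambda,\mu,\lambda') : \lambda-\lambda'=\omega\}$, and each triple appears once, because $\nu$ and $\nu'$ are determined by the triple. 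The two expansions therefore coincide.

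The only point needing care is this bookkeeping: every triple must be counted exactly once, and no admissible $\nu$ may fall outside $\Omega$. The latter holds because $\Omega$ is defined as the set of all eigenvalue differences. There is also no convergence issue, since $\lsym$ is finite and all sums are finite.
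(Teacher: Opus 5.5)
Your proof is correct and follows the paper's argument: insert the resolution of the identity $\sum_\mu P_\mu$ between $Q$ and $Q'$, use $P_\mu P_{\mu'}=\delta_{\mu\mu'}P_\mu$, and regroup the triples $(\lambda,\mu,\lambda')$ by $\nu=\lambda-\mu$ and $\nu'=\mu-\lambda'$. Your explicit check that this regrouping is a bijection onto the pairs $(\nu,\nu')$ in $\Omega$ with $\nu+\nu'=\omega$ is, if anything, more careful than the paper's version, which passes through a sum over independent $\mu,\mu'$ whose off-diagonal terms vanish.
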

\begin{proof}
Using the decomposition of unity by the spectral projections $ P_\lambda $ of $ H_\Lambda $ one checks
\begin{align*} 
(Q  Q')(\omega) 
& = \sum_{\lambda, \lambda' \colon \lambda - \lambda' = \omega} P_{\lambda} Q Q' P_{\lambda'} = \sum_{\lambda, \lambda' \colon \lambda - \lambda' = \omega} \sum_{\mu} P_{\lambda} Q P_{\mu} Q' P_{\lambda'} \\
& = \sum_{\lambda, \lambda' \colon \lambda - \lambda' = \omega} \sum_{\mu, \mu'} (P_{\lambda} Q P_{\mu}) (P_{\mu'} Q' P_{\lambda'}) = \sum_{\lambda, \lambda', \mu, \mu' \colon (\lambda-\mu) + (\mu' -\lambda') = \omega}  (P_{\lambda} Q P_{\mu}) (P_{\mu'} Q' P_{\lambda'})\\
& = \sum_{\nu, \nu' \in \Omega \colon \nu + \nu' = \omega} Q(\nu) Q'(\nu')
\end{align*}
which finishes the proof.
\end{proof}

We are now ready to prove the main theorem in this section.
\begin{proof}[Proof of \cref{theorem:kernelNice}]
    We start with a proof of the inclusion $\supseteq$. 
    If $O \in \BB(\HH_{\lsym})$ commutes with all $A_s$ and $B_p$ for $s\in \strset_R$ and $p\in \plqset_R$, it also commutes with $H_R$.
    If $O$ in addition commutes with every $Q_{R}$, it commutes with
    \[ e^{\beta H_R}Q_{R} e^{-\beta H_R} = e^{\beta H_\Lambda} Q_{R} e^{-\beta H_\Lambda} \]
    for every $\beta \in \mathbb{R}$, since $H_\Lambda-H_R$ is not supported on $R$ and commutes with $Q_R$. Taking a suitable filter function $f_{\omega,\Lambda}(\beta)$, we deduce that $O$ commutes with
    \[ Q_{R}(\omega) = \int_{\mathbb{R}}f_{\omega,\Lambda}(\beta) (e^{\beta H_R} Q_{R} e^{-\beta H_R}) d \beta \]
    for every $\omega \in \Omega$. As a consequence, $O$ belongs to $\ker{(\LL_{R})}$.

    To simplify notation for the proof of the reverse inclusion $\subseteq$, let us first define the following equivalence relation on $\BB(\HH)$: $O\cong \tilde{O}$ if and only if $O= \tilde{O}+z\identity$ for some $z\in \C$. Since commutation relations are preserved under the addition of identity, we have $\{O\}'=\{\tilde{O}\}'$ whenever $O\cong \tilde{O}$.

    Any $O \in \ker{(\LL_{R})}$ commutes  by Lemma~\ref{lemma:kernelNice} with every $Q_{R}(\omega)$ with $Q_{R} \in \BB(\HH_{R})$ and $\omega \in \Omega$. In particular, $O$ commutes with
    \[ Q_{R} = \sum_{\omega \in \Omega}Q_{R}(\omega) \]
    and thus $O \in \identity_{R} \otimes \BB(\HH_{\Lambda \setminus R})$. Moreover, $O$ also commutes with every
    \[  \sum_{\omega \in \Omega} \omega Q(\omega) = [H_\Lambda, Q_{R}]. \]
    In particular, considering all local unitaries $Q_{R} = U_{R}$, we conclude that $O$ commutes with 
    \[ [H_\Lambda, U_{R}]U_{R}^{\dagger} = H_\Lambda - U_{R}H_\Lambda U_{R}^{\dagger}. \]
    Integrating with respect to the Haar measure over the unitaries, we have that $O$ commutes with
    \[ H_\Lambda - \int_{\mathcal{U}(R)} U_{R} H_\Lambda U_{R}^{\dagger} dU_{R} \]
    where the integral is with respect to the Haar measure on the local unitaries. This integration yields the normalized partial trace over the region $R$, 
    \begin{equation*}
        H_\Lambda - \int_{\mathcal{U}(R)} U_{R} H_\Lambda U_{R}^{\dagger} dU_{R} = H_\Lambda - \Tr_R(H_\Lambda)\otimes \frac{\identity_R}{d_R} = H_R - \Tr_R(H_R)\otimes \frac{\identity_R}{d_R} \cong H_R \ ,
    \end{equation*}
    where the equivalence is by Lemma~\ref{lemma:traceoperator}.
    Since $O$ thus commutes with $H_R$ and all elements of $\BB(\HH_R)$, it also commutes with 
    \begin{align*}
        H_R^\str \coloneqq \sum_{s\in \strset_R} A_s &\cong \frac{1}{|G|^R} \sum_{\mathbf{g}\in G^R} L^{\mathbf{g}} H_R L^{\overline{\mathbf{g}}} \\
        H_R^\plq \coloneqq \sum_{p\in \plqset_R} B_p &\cong \frac{1}{|\hat{G}|^R} \sum_{\boldsymbol{\chi}\in \hat{G}^R} M^{\boldsymbol{\chi}} H_R M^{\overline{\boldsymbol{\chi}}} 
    \end{align*}
    where we abbreviated
    \begin{align*}
        L^{\mathbf{g}} = \prod_{e\in R} L_e^{g(e)} \quad \mathrm{and} \quad M^{\boldsymbol{\chi}} = \prod_{e\in R} M_e^{\boldsymbol{\chi}(e)} \ .
    \end{align*}
    This is easy to check using the orthogonality relation \eqref{equa:CharacterOrthogonal}.

    Since both cases are equivalent, we will only consider the case of star operators for the rest of this proof.
    As $O$ commutes with $H_R^\str$ and all elements of $\BB(\HH_R)$, it also commutes with any partial trace $\Tr_U(H_R^\str)$ for any subset $U\subseteq R$. 
    We claim that for all $s\in \strset_R$
    \begin{equation}\label{eq:partition_telescope}
         A_s \cong \sum_{r\subseteq \cobdy s\cap R}(-1)^{|\cobdy s\cap R|-|r|} \frac{1}{ d_{R\setminus r}}\Tr_{R\setminus r}(H^\str_R)\ .
    \end{equation}

    To show this equivalence, let us introduce some short-hand notations:  For any $T\subseteq \strset_R$ let $H(T) \coloneqq \sum_{s\in T} A_s$ by slight abuse of notation. In particular, $H^\str_R=H(\strset_R)$.
    Then, for $U\subseteq R$ and $T\subseteq \strset_R$
    \begin{equation}
        \frac{1}{ d_{U}}\Tr_U(H(T)) =  \frac{1}{ d_{U}}\Tr_U(H(T\cap\strset_U)) +  \frac{1}{ d_{U}}\Tr_U(H(T\setminus\strset_U)) \cong  H(T\setminus \strset_U)
    \end{equation}
    since $\Tr_U(A_s)\propto \identity$ for any $s\in \strset_U$.
    Applying this to $H_R^\str$ yields 
    \begin{equation*}
    \frac{1}{ d_{R\setminus r}} \Tr_{R\setminus r}(H^\str_R) \cong H(\strset_R\setminus\strset_{R\setminus r})  =  H(\{s\in \strset_R| \cobdy s\cap R \subseteq r\}) \, .
    \end{equation*}
    The sum on the right-hand side of  \eqref{eq:partition_telescope} can thus be written as
    \begin{align*}
        \sum_{r\subseteq \cobdy s\cap R}(-1)^{|\cobdy s\cap R|-|r|} \frac{1}{ d_{R\setminus r}}\Tr_{R\setminus r}(H_R^\str) 
        &\cong  \sum_{r\subseteq \cobdy s\cap R} (-1)^{|\cobdy s\cap R|-|r|} \sum_{s'\in\strset_R, \cobdy s'\cap R\subseteq r} A_{s'}\\
        &=  \sum_{s'\in\strset_R, \cobdy s'\cap R = \cobdy s \cap R} A_{s'} \ ,
    \end{align*}
    where the last equality is by the inclusion-exclusion principle.
    It remains to show that the sum on the last line equals $A_s$.
    Recall that $R$ is star- and plaquette-connected and has size at least $|R|\geq 2$. Since no two different stars share more than a single edge, the latter implies $|\strset_R|>2$.
    Now, assume that there are two different stars $s_1, s_2 \in \strset_R$ satisfying $\cobdy s_1\cap R = \cobdy s_2\cap R$. 
    As they are in $\strset_R$, their overlap with $R$ cannot be empty. Thus,
    \begin{equation*}
        |\cobdy s_1\cap R |= |\cobdy s_2\cap R|=1 \ .
    \end{equation*}
    This would imply, however, that $s_1$ and $s_2$ are not connected to any other stars in $\strset_R$, contradicting the assumption of a connected region.
\end{proof}

\section{Conditional expectations}\label{sec:condexp}

For the convenience of the reader, let us recall the definition of a conditional expectation:
\begin{definition}
  Let $\NN\subseteq \BB(\HH_{\lsym})$ be a von Neumann subalgebra. A completely positive, unital map $E_{\NN}:\BB(\HH_{\lsym})\to \NN$ is called a conditional expectation  onto $\NN$ if
  \begin{align}\label{eq:condexpdef}
      \forall l, r \in \NN,  O \in \BB(\HH_{\lsym})&\colon \ E_{\NN}(l O r)= l E_{\NN}(O)r \ .
  \end{align}
\end{definition}
The infinite-time limits of the Markovian semigroup generated by a Lindbladian are conditional expectations, which satisfy
\begin{proposition}\label{prop:lindbladianprojector}
    The infinite time limit $\displaystyle\E^\beta_R =\lim_{t\to\infty} e^{t\LL_R^{(\beta)}}$ at inverse temperature $\beta\in [0,\infty)$ has the following properties:
    \begin{enumerate}
        \item It is unital,  completely positive, self-adjoint with respect to $\langle\cdot, \cdot \rangle_{s,\rho}$ for all $s\in [0,1]$, 
        \item It is a conditional expectation onto the von Neumann subalgebra $\ker{\LL_R}$. In particular, it is an orthogonal projection.
  \end{enumerate}
\end{proposition}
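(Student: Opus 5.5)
The plan is to derive everything from the spectral structure of $\LL_R$ in the weighted inner products, and then use the kernel structure from \cref{lemma:kernelNice}. First, note that $\LL_R=\sum_{e\in R}\mathcal{D}_e$ is self-adjoint with respect to $\langle\cdot,\cdot\rangle_{\rho,s}$ for every $s\in[0,1]$ and $-\LL_R\geq 0$, as recalled in the introduction. Because the space is finite-dimensional, we diagonalize $\LL_R$ in the KMS product: $\LL_R=\sum_j \lambda_j \Pi_j$ with $\lambda_j\le 0$. Then $e^{t\LL_R}=\sum_j e^{t\lambda_j}\Pi_j$ converges to the spectral projection $\Pi_0$ onto $\ker\LL_R$. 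This projection is orthogonal in every $\langle\cdot,\cdot\rangle_{\rho,s}$, since $\LL_R$ is self-adjoint in each of them. Self-adjointness of $\E^\beta_R$ in each $s$-product then follows, since each $e^{t\LL_R}$ is self-adjoint and self-adjointness is preserved under limits. Unitality and complete positivity also pass to the limit from $e^{t\LL_R}$, because the cone of CP unital maps is closed. This settles item 1.

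For item 2, first check that $\ker\LL_R$ is a von Neumann subalgebra. By \cref{lemma:kernelNice} it is a commutant of a self-adjoint set: $\SSS_e$ is closed under adjoints, and $S(\omega)^\dagger=S^\dagger(-\omega)$. So it is a unital $*$-algebra, closed in finite dimension. The range of $\E^\beta_R$ is exactly $\ker\LL_R$, and $\E^\beta_R$ acts as the identity on it. It remains to prove the bimodule property \eqref{eq:condexpdef}.

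The bimodule property is the main step. I would first use the GNS self-adjointness ($s=1$). For $l\in\ker\LL_R$, I claim that left multiplication by $l$ commutes with $\LL_R$, that is, $\LL_R(lO)=l\LL_R(O)$. This holds because $l$ commutes with every $S_{e,i}(\omega)$ and every $S_{e,i}(\omega)^\dagger$, so $l$ factors through each term of $\mathcal{D}_e$. The analogous identity $\LL_R(Or)=\LL_R(O)r$ holds for $r$ in the kernel. Consequently $e^{t\LL_R}(lOr)=l\,e^{t\LL_R}(O)\,r$ for all $t$, and letting $t\to\infty$ gives $\E^\beta_R(lOr)=l\,\E^\beta_R(O)\,r$.

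The one thing to verify carefully is that $l$ really commutes with the anticommutator term, so that $l$ commutes with $S^\dagger S$. This follows because $\ker\LL_R$ is the commutant of the adjoint-closed set of jump operators, so $l$ commutes with both $S$ and $S^\dagger$ and hence with their product. Finally, an idempotent conditional expectation that is self-adjoint in $\langle\cdot,\cdot\rangle_{\rho,s}$ is an orthogonal projection there, which gives the ``in particular'' clause.
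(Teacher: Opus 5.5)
Your proof is correct. For item 1 it matches what the paper has in mind. The paper gives no detailed argument, only that the claims are well known and follow from self-adjointness of $\LL_R$ together with unitality and complete positivity of the semigroup. Concretely, the limit is the spectral projection onto $\ker\LL_R$. That projection is a polynomial in $\LL_R$, hence self-adjoint in every $\langle\cdot,\cdot\rangle_{\rho,s}$, and unital complete positivity survives the limit.

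Where you genuinely differ is the bimodule property. The paper's hint points to the abstract route. There, $\E^\beta_R$ is a unital CP idempotent preserving the faithful state $\rho$, because
\[
\Tr(\rho\,\E^\beta_R(X))=\langle\identity,\E^\beta_R(X)\rangle_{\rho,s}=\langle\E^\beta_R(\identity),X\rangle_{\rho,s}=\Tr(\rho X).
\]
So for $l$ in the range, the Kadison--Schwarz inequality $\E^\beta_R(l^\dagger l)\geq l^\dagger l$ plus faithfulness of $\rho$ forces equality. This puts the range in the multiplicative domain, and Choi's theorem then gives both that the range is a $*$-algebra and the bimodule identity.

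You instead use the commutant description of the kernel from \cref{lemma:kernelNice}, i.e.\ the cited theorem of Wolf. This makes left and right multiplication by kernel elements commute with every $\mathcal{D}_e$, hence with $e^{t\LL_R}$ for all $t$. Your argument is shorter and more concrete, and it gives the bimodule property already at finite times. The abstract route needs no knowledge of the Lindblad form. It is in fact what underlies the commutant theorem you quote, so your proof outsources exactly this step to \cref{lemma:kernelNice}. There is no circularity, since that lemma does not depend on the proposition.

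One small point: your remark about first using GNS self-adjointness is never used and can be dropped.
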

These properties are well-known and follow from the self-adjointness of the Lindbladian together with the unitality and complete positivity of the semigroup.

We can derive the following useful expressions for the infinite- and finite-temperature cases.
\begin{theorem}\label{theorem:kernelcondexp}
    Let $R$ be a rectangle. The infinite-temperature conditional expectation onto $\ker(\LL_R)$ is given by
    \begin{equation}\label{eq:infiniteTcondexp}
        \E_R^{0} = \lim_{t\to\infty} e^{t\LL_R^{(0)}} = \PP^\str_R \circ \PP_R^\plq\circ \frac{\Tr_R}{d_R}
    \end{equation}
    where $\PP_R^\str = \prod_{s\in \strset_R}\PP_s^\str$ and $\PP^\plq_R=\prod_{p\in \plqset_R}\PP_p^\plq$ are products of star and plaquette pinchings:
    \begin{align*}
        \PP^\str_s = \prod_{\chi\in\hat{G}} \PP(A_s(\chi)) \quad \mathrm{and} \quad \PP^\plq_p  = \prod_{h\in G} \PP(B_p(h)) ,
    \end{align*}
    with the single operator pinching
    \begin{equation*}
        \PP(P): O\mapsto POP + (\identity-P)O(\identity-P)
    \end{equation*}
    for any star or plaquette operator $P$.
\end{theorem}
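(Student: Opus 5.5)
At $\beta=0$ the Gibbs state is $\rho=\identity/d_\lsym$, so all the scalar products $\langle\cdot,\cdot\rangle_{\rho,s}$ coincide with the normalized Hilbert--Schmidt product. By \cref{prop:lindbladianprojector}, $\E^0_R$ is the orthogonal projection onto $\ker(\LL_R^{(0)})$ with respect to this product. It therefore suffices to show three things about $\Phi\coloneqq \PP^\str_R\circ\PP^\plq_R\circ \frac{\Tr_R}{d_R}$ (with $\Tr_R(O)$ read as $\identity_R\otimes\Tr_R(O)$): $\Phi$ is self-adjoint for the Hilbert--Schmidt product, it is idempotent, and its range is exactly $\ker(\LL_R)$. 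For the range we use \cref{cor:kernelNice2}, a rectangle being star- and plaquette-connected of size at least $2$. This gives
\[
\ker(\LL_R)=\identity_R\otimes\BB(\HH_{\lsym\setminus R})\cap\{A_s(\chi),B_p(g)\}'_{s\in\strset_R,p\in\plqset_R,\chi\in\hat G,g\in G}.
\]

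\textbf{Step 1: algebraic structure of the individual maps.} Each single-operator pinching $\PP(P)$ with $P$ an orthogonal projection is Hilbert--Schmidt self-adjoint and idempotent. Its fixed points are exactly $\{P\}'$. Because all $A_s(\chi)$ and $B_p(h)$ commute, all of these pinchings commute with one another. Hence $\PP^\str_R\circ\PP^\plq_R$ is an orthogonal projection onto $\{A_s(\chi),B_p(h)\}'$. Since $(A_s(\chi))_\chi$ is a resolution of the identity, we can also write $\PP^\str_s(O)=\sum_\chi A_s(\chi)OA_s(\chi)$, and similarly for plaquettes. Likewise $\frac{\Tr_R}{d_R}$ is the orthogonal projection onto $\identity_R\otimes\BB(\HH_{\lsym\setminus R})$.

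\textbf{Step 2: the two projections commute.} This is the key point. If $\Tr_R/d_R$ commutes with every pinching $\PP(A_s(\chi))$ and $\PP(B_p(h))$ for $s\in\strset_R$, $p\in\plqset_R$, then $\Phi$ is a product of commuting orthogonal projections. It is then itself the orthogonal projection onto the intersection of their ranges, which is $\ker(\LL_R)$ by \cref{cor:kernelNice2}, and we are done. To prove commutation, I would use the expansion
\[
\PP^\str_s(O)=\sum_\chi A_s(\chi)OA_s(\chi)=\frac{1}{|G|}\sum_{g}A_s(g)\,O\,A_s(\overline g),
\]
which follows from the character orthogonality relations \eqref{equa:CharacterOrthogonal}. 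The plaquette pinching has the analogous expansion with $B_p(\chi)$.

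It is cleaner to commute the whole product $\PP^\str_R\circ\PP^\plq_R$ with $\Tr_R$ rather than each factor separately. Write $A_s(g)=A_s^{R}(g)\otimes A_s^{R^c}(g)$ as a tensor product of its parts inside and outside $R$. Then $\Tr_R\big(A_s(g)OA_s(\overline g)\big)=A^{R^c}_s(g)\,\Tr_R\big(A^R_s(g)OA^R_s(\overline g)\big)\,A^{R^c}_s(\overline g)$, and the inner conjugation is a unitary on $R$, so $\Tr_R$ is invariant under it. Thus $\Tr_R\circ\PP^\str_s$ equals conjugation averaged over the outer part only. On the other side, $\PP^\str_s\circ\Tr_R$ applies the full $A_s(g)$ to an operator of the form $\identity_R\otimes X$. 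The inner parts then cancel, since $A^R_s(g)A^R_s(\overline g)=\identity$, and only the outer conjugation remains. The two compositions therefore agree, and the same argument works for plaquettes.

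The main obstacle is making this last computation fully rigorous with consistent identifications, namely that $\Tr_R$ maps into $\identity_R\otimes\BB(\HH_{\lsym\setminus R})$, and checking that boundary stars and plaquettes, which are only partially supported in $R$, behave correctly. The tensor factorization above handles both cases. Once commutation is established, the remaining steps (self-adjointness and idempotency of $\Phi$, and the identification of its range with $\ker(\LL_R)$ via \cref{cor:kernelNice2}) follow directly from Step 1.
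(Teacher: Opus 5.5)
Your proposal is correct and has the same skeleton as the paper. Both sides are Hilbert--Schmidt orthogonal projections, the right-hand side is a product of commuting HS-self-adjoint conditional expectations, and the intersection of their ranges is $\ker(\LL_R)$ by \cref{cor:kernelNice2}.

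The difference is in the key commutation step. The paper works edge by edge:
\begin{itemize}
\item it writes $\Tr_e/d_e$ as an $L_e$-twirl composed with an $M_e$-twirl;
\item it notes that every $L_e^g$ commutes with every $A_s(\chi)$;
\item it uses the covariance $M_e^\xi A_s(\chi) M_e^{\overline{\xi}}=A_s(\chi\xi)$ to see that conjugation by $M_e^\xi$ only permutes the factors $\PP(A_s(\chi))$ of $\PP^\str_s$.
\end{itemize}
You instead put the group average on the pinching side. Via character orthogonality you rewrite $\PP^\str_s$ as the twirl $\frac{1}{|G|}\sum_g A_s(g)\,\cdot\,A_s(\overline{g})$ over product unitaries, and then use that $\Tr_R$ is invariant under conjugation by unitaries supported in $R$.

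Your route handles all of $R$ at once and treats boundary stars and plaquettes uniformly through the tensor splitting, with no orientation cases. It also makes clear that $\PP^\str_s$ preserves $\identity_R\otimes\BB(\HH_{\lsym\setminus R})$. The paper's route stays local to single edges and reuses the covariance relations of Section 2.1, without needing the twirl form of the pinching.

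One correction: the opening sentence of your Step 2 asks for $\Tr_R$ to commute with each single pinching $\PP(A_s(\chi))$, and that is false in general. Take a star $s$ with $e\in\cobdy s\cap R$ and $e'\in\cobdy s\setminus R$, a character $\xi\neq\mathbf{1}_G$, and $|G|\geq 3$. Then
\[
\PP(A_s(\chi))(M^\xi_{e'})=\bigl(\identity-A_s(\chi)-A_s(\chi')\bigr)M^\xi_{e'}, \qquad \chi'=\chi\xi^{\pm1},
\]
and this is not of the form $\identity_R\otimes Y$, because $A_s(\chi'')$ acts nontrivially on $e$. Since $\Tr_R(M^\xi_{e'})=d_R M^\xi_{e'}$, this gives $\PP(A_s(\chi))\circ\Tr_R\neq\Tr_R\circ\PP(A_s(\chi))$.

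Only the full products $\PP^\str_s$ and $\PP^\plq_p$ commute with $\Tr_R$. That is exactly what your twirl computation proves, and it is all the argument needs, so state the hypothesis for $\PP^\str_s$ and $\PP^\plq_p$. This is also why the paper reindexes $\chi\mapsto\chi\xi$ over the whole product.
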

We give the proof at the end of the section. The key idea is to show that both sides of the equality are orthogonal projections with the same image and hence equal. 
To extend the conditional expectation to finite temperatures, one replaces the hidden infinite-temperature Gibbs state $\identity/d_R$ with the marginal Gibbs state $\hat{\rho}_R$:
\begin{corollary}\label{cor:finiteTcondexp}
    The finite-temperature conditional expectation is given by
    \begin{equation}\label{eq:finiteTcondexp}
        \E^\beta_R(O) = d_R \E_R^0(\hat{\rho}_R^{1-s} O \hat{\rho}^{s}_R)
    \end{equation}
    for any $O\in \BB(\HH_\Lambda)$ and $s\in [0,1]$, where 
    \begin{equation}
        \hat{\rho}_R = e^{-\beta H_\lsym}(\Tr_R(e^{-\beta H_\lsym}))^{-1} = e^{-\beta H_\lsym}(d_R\E_R^0(e^{-\beta H_\lsym}))^{-1}
    \end{equation}
    is the marginal Gibbs state on $R$.
\end{corollary}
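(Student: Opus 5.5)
We plan to show that the map $\Phi_s(O) \coloneqq d_R \E_R^0(\hat{\rho}_R^{1-s} O \hat{\rho}^{s}_R)$ is the orthogonal projection onto $\ker(\LL_R^{(\beta)})$ with respect to $\langle\cdot,\cdot\rangle_{\rho,s}$. By \cref{prop:lindbladianprojector}, $\E^\beta_R$ is exactly that projection, so the two maps must coincide. Throughout we use the kernel description of \cref{cor:kernelNice2}. This description does not depend on $\beta$, so $\ker\LL_R^{(\beta)}=\ker\LL_R^{(0)}=:\NN$. By \cref{theorem:kernelcondexp}, $\E_R^0$ is a conditional expectation onto $\NN$ and satisfies $\E_R^0 = \E_R^0\circ \tfrac{\Tr_R}{d_R}$.

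\textbf{Step 1: identify $\hat\rho_R$ and its commutation properties.} By \cref{prop:marginaldoublecommutant} (applied with $R'=\lsym$), $\Tr_R e^{-\beta H_\lsym}$ lies in the bicommutant of $\{A_s(\chi),B_p(h)\}$. It is also supported on $\lsym\setminus R$, so it lies in $\NN$. Hence $\tfrac{1}{d_R}\Tr_R(e^{-\beta H_\lsym}) = \E_R^0(e^{-\beta H_\lsym})$, since the pinchings act trivially on it. This gives the second expression for $\hat\rho_R$.

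We also note that $e^{-\beta H_\lsym}$ commutes with $\Tr_R e^{-\beta H_\lsym}$. Both lie in the commutative algebra generated by stars and plaquettes: the marginal by the proposition, and $e^{-\beta H_\lsym}$ trivially. So $\hat\rho_R$ is positive and $\hat\rho_R^{t}$ is well defined. Moreover, every element of $\NN$ commutes with all $A_s,B_p$ for $s\in\strset_R$, $p\in\plqset_R$. Such an element is supported off $R$, so it also commutes with all stars and plaquettes away from $R$. Hence every $n\in\NN$ commutes with $H_\lsym$, and therefore with $\hat\rho_R^t$.

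\textbf{Step 2: check the projection properties.} Unitality: $\Phi_s(\identity)=d_R\E^0_R(\hat\rho_R)=\E^0_R(e^{-\beta H_\lsym})\,(\E^0_R(e^{-\beta H_\lsym}))^{-1}=\identity$. Here we pulled the factor $(\Tr_R e^{-\beta H_\lsym})^{-1}\in\NN$ out of $\E_R^0$ using the bimodule property.

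Bimodule property and fixing $\NN$: for $n\in\NN$, $\Phi_s(n\, O)=n\,\Phi_s(O)$, because $n$ commutes with $\hat\rho_R^{1-s}$ by Step 1 and can then be pulled out of $\E_R^0$. The same holds on the right. In particular $\Phi_s(n)=n\Phi_s(\identity)=n$. The range of $\Phi_s$ lies in $\NN$ because the range of $\E^0_R$ does. Thus $\Phi_s$ is an idempotent onto $\NN$.

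\textbf{Step 3: self-adjointness with respect to $\langle\cdot,\cdot\rangle_{\rho,s}$.} This is the main computational step. We need $\langle n, O\rangle_{\rho,s} = \langle n,\Phi_s(O)\rangle_{\rho,s}$ for all $n\in\NN$; orthogonality of the idempotent then follows. Since $n$ commutes with $\rho$, we have $\langle n,O\rangle_{\rho,s}=\Tr(n^\dagger\rho^{1-s} O\rho^{s})$. We write $\rho = \hat\rho_R\, \tfrac{\Tr_R e^{-\beta H}}{Z}$ with the second factor $\sigma\in\NN$, and all factors commute. Then
\[
\Tr(n^\dagger\sigma\,\hat\rho_R^{1-s}O\hat\rho_R^{s}) = \Tr\bigl(n^\dagger\sigma\, \Tr_R(\hat\rho_R^{1-s}O\hat\rho_R^{s})\bigr)/1 ,
\]
because $n^\dagger\sigma$ is supported off $R$. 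Next we must replace $\Tr_R$ by $d_R\E^0_R$ inside a trace against $n^\dagger\sigma$. This uses that the pinchings are trace-preserving and bimodular over $\NN$: $\Tr(m\,\PP(X))=\Tr(m X)$ for $m$ commuting with the pinching projections. This yields $\Tr(n^\dagger\sigma\,\Phi_s(O))$. Running the same manipulation backwards, with $\Phi_s(O)\in\NN$, rewrites this as $\langle n,\Phi_s(O)\rangle_{\rho,s}$.

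Combining Steps 2 and 3, $\Phi_s$ is the $\langle\cdot,\cdot\rangle_{\rho,s}$-orthogonal projection onto $\NN$. It therefore equals $\E^\beta_R$, and the right-hand side is independent of $s$. The main obstacle is Step 1: that $\NN$ commutes with all of $H_\lsym$, and the resulting commutation of the various marginals. Everything else is bimodule bookkeeping.
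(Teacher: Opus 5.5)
Your overall plan matches the paper's: show that $\Phi_s$ is the $\langle\cdot,\cdot\rangle_{\rho,s}$-orthogonal projection onto $\NN=\ker\LL_R^{(\beta)}$, then invoke uniqueness. Getting idempotence from the bimodule property, and checking $O-\Phi_s(O)\perp\NN$ instead of full self-adjointness, are harmless variants. However, the claim you single out as the key point of Step 1 is false. An operator supported off $R$ does \emph{not} automatically commute with stars and plaquettes away from $R$, since those are also supported off $R$. Take an edge $e'$ sharing no vertex with any edge of $R$ and $\chi\neq\mathbf{1}_G$. Then $M^\chi_{e'}$ is supported off $R$ and commutes with every $B_p(h)$ and every $A_s(\chi')$ with $s\in\strset_R$, so $M^\chi_{e'}\in\NN$. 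Yet $M^\chi_{e'}A_sM^{\overline{\chi}}_{e'}$ equals $A_s(\chi)$ or $A_s(\overline{\chi})$, not $A_s$, for the stars $s$ with $e'\in\cobdy s$. So elements of $\NN$ need not commute with $H_\lsym$, nor with $\rho$.

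In Step 2 this only breaks the justification. What you need, $[n,\hat\rho_R^{t}]=0$ for $n\in\NN$, is still true, but for a different reason. The factor $e^{-\beta(H_\lsym-H_R)}$ cancels in $\hat\rho_R$, giving $\hat\rho_R=e^{-\beta H_R}(\Tr_R e^{-\beta H_R})^{-1}$. By \cref{prop:marginaldoublecommutant} with $R'=R$, this lies in the algebra generated by the $A_s(\chi),B_p(h)$ with $s\in\strset_R$, $p\in\plqset_R$ only; this is exactly the reduction the paper makes.

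In Step 3, however, you use $[n,\rho]=0$, which genuinely fails. To repair it, let $\sigma=\Tr_R(e^{-\beta H_\lsym})/Z\in\NN$, so that $\rho^s=\hat\rho_R^s\sigma^s$. Then $\rho^s n^\dagger\rho^{1-s}=\hat\rho_R^s\, m\,\hat\rho_R^{1-s}$ with $m:=\sigma^s n^\dagger\sigma^{1-s}$, and $m\in\NN$ because $\NN$ is a von Neumann algebra. Run your trace argument with $m$ in place of $n^\dagger\sigma$:
\[
\langle n,O\rangle_{\rho,s}=\Tr\bigl(m\,\hat\rho_R^{1-s}O\hat\rho_R^{s}\bigr), \qquad \langle n,\Phi_s(O)\rangle_{\rho,s}=\Tr\bigl(m\,\Phi_s(O)\,\hat\rho_R\bigr).
\]
Both reduce to the same trace over $\lsym\setminus R$. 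This uses three facts: $m$ is supported off $R$; the pinchings are Hilbert--Schmidt self-adjoint and fix $m$; and $\Tr_R\hat\rho_R=\identity$, which also absorbs the factor $d_R$. The paper does the equivalent bookkeeping by keeping $\sigma^s$ and $\sigma^{1-s}$ on opposite sides of $X^\dagger$ and moving them in and out of $\E_R^0$, never commuting them past a general element of $\NN$.
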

\begin{proof}%[Proof of \cref{cor:finiteTcondexp}]
    Recall that the finite-temperature conditional expectation is a projection onto $\ker(\LL_R^{(\beta)})$ and self-adjoint with respect to $\langle \cdot, \cdot\rangle_{s,\rho}$ for any $s\in [0,1]$. 
    We fix $s\in [0,1]$ and abbreviate by $\hat\E $ the right-hand side of \eqref{eq:finiteTcondexp} with this $s$. We will show that $\hat \E$ is also a projection onto $\ker(\LL_R)$ and self-adjoint with respect to $\langle \cdot, \cdot\rangle_{s,\rho}$. Thus, by the uniqueness of orthogonal projection, it is equal to $ \E_R^\beta $, and hence independent of the choice of $s$.
    
    We start with the description of the image of $\hat\E$. Recall that the kernel of $\LL_R^{(\beta)}$ for $\beta \in [0,\infty)$ is given by the commutant of the jump operators and hence independent of the temperature. It then follows from \cref{theorem:kernelcondexp} that
    \begin{equation*}
        \Img{\hat\E} = \Img{\E_R^0(d_R \hat{\rho}_R^{1-s} \cdot  \hat{\rho}^{s}_R) } \subseteq \Img{\E_R^0}=\ker{\LL_R^{(0)}} = \ker{\LL_R^{(\beta)}} \ .
    \end{equation*}  To prove the reverse inclusion, we pick $O\in\ker{\LL_R^{(\beta)}}$. By \cref{theorem:kernelcondexp}, $O$ is a fixed point of $\E_R^0$, such that
    \begin{equation}
        O = \E_R^0(O) = \hat\E(d_R^{-1}\hat{\rho}_R^{s-1} O \hat{\rho}^{-s}_R) \in \Img\hat\E \ .
    \end{equation}
    
    To show that $\hat \E$ is a projection, we use that by Proposition~\ref{prop:marginaldoublecommutant} for any $O\in \BB(\HH_\Lambda)$ 
    \begin{equation*}
        [\E_R^0(O),\hat{\rho}_R]=0 .
    \end{equation*}
    Together with \eqref{eq:infiniteTcondexp} and the fact that $\E_R^0$ is a conditional expectation, this implies that
    \begin{align*}
        d_R \E_R^0(\hat{\rho}_R^{1-s} d_R \E_R^0(\hat{\rho}_R^{1-s} O \hat{\rho}^{s}_R) \hat{\rho}^{s}_R) 
        & = d_R \E_R^0(\hat{\rho}_R d_R \E_R^0 (\hat{\rho}_R^{1-s}O \hat{\rho}^{s}_R)) \\
        & = d_R \E_R^0(\hat{\rho}_R) d_R \E_R^0 (\hat{\rho}_R^{1-s}O \hat{\rho}^{s}_R) \\
        & = d_R \E_R^0 (\hat{\rho}_R^{1-s}O \hat{\rho}^{s}_R) ,
    \end{align*}
    where the last step follows from  $\E_R^0(\hat{\rho}_R)= \Tr_R(\hat{\rho}_R)/d_R = 1/d_R$.

    To show self-adjointness of $\hat \E$, we compute for  $X,Y\in \BB(\HH_\Lambda)$ the scalar product
    \begin{align*}
        \langle X , \hat{\E}(Y)\rangle_{s, \rho} 
        %= \Tr(\hat\E(X)\rho^{1-s} Y^\dag \rho^{s}) 
        &= d_R\Tr(\rho^{s} X^\dag \rho^{1-s}\E_R^0(\hat{\rho}_R^{1-s} Y \hat{\rho}_R^{s}) ) \\
        &= d_R\Tr( \E_R^0(\rho^{s} X^\dag \rho^{1-s} ) \hat{\rho}_R^{1-s} Y \hat{\rho}_R^{s}) \ ,
    \end{align*}
    where we used the self-adjointness of $\E_R^0$ with respect to the Hilbert-Schmidt inner product.
    For the next step, we expand the Gibbs states
    \begin{equation*}
        \rho^{s}= \frac{e^{-s \beta (H_\lsym-H_R)}e^{- s\beta H_R}}{Z^s} 
        \quad \mathrm{and} \quad 
        \hat{\rho}_R^s = \frac{e^{-s\beta H_\lsym}}{(d_R\E_R^0(e^{-\beta H_\lsym}))^s}  = \frac{e^{-s\beta H_R}}{(d_R\E_R^0(e^{-\beta H_R}))^s} .
    \end{equation*}
    Here the second equality holds since $e^{- s \beta(H_\lsym-H_R)}$ is not supported on $R$ and commutes with all star and plaquette operators. The two kinds of Gibbs states can then be exchanged through the action of $\E_R^0$:
    \begin{equation*}
         \Tr( \hat{\rho}_R^{s}\E_R^0(\rho^{s} X^\dag \rho^{1-s})\hat{\rho}_R^{1-s} Y) = 
          \Tr( \rho^{s}\E_R^0(\hat{\rho}_R^{s} X^\dag \hat{\rho}_R^{1-s})\rho^{1-s} Y) \ .
    \end{equation*}
    Finally, since the Gibbs states are self-adjoint, we find 
    \begin{equation*}
        d_R\E_R^0(\hat{\rho}_R^{s} X^\dag \hat{\rho}_R^{1-s}) = \hat\E(X)^\dag
    \end{equation*}
    concluding the proof.
\end{proof}

%\LG{A very minor comment: a $\rho$-preserving conditional expectation is automatically $\rho$-GNS-symmetric,which can simplifies the self-adjoint argument a little bit. See Takesaki 1972 paper on conditional expectation. SST+SW: I would not change the proof, but feel free to add a remark that one could also use Takesaki, if you know which theorem etc. it is. }

Before presenting the proof of \cref{theorem:kernelcondexp}, we spell out some properties of the star and plaquette pinchings.
\begin{lemma}\label{lemma:pinchingcondexp}
    The star and plaquette pinchings $\PP^\str_s$ and $\PP^\plq_p$ are conditional expectations onto $\{A_s(\chi)\}'_{\chi\in \hat{G}}$ and $\{B_p(h)\}'_{h\in G}$, respectively. Furthermore, they are self-adjoint with respect to the Hilbert-Schmidt inner product.
\end{lemma}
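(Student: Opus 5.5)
The plan is to rewrite each star pinching as a single ``block-diagonal'' projection. For a fixed star $s$, the family $(A_s(\chi))_{\chi\in\hat G}$ consists of mutually orthogonal projections summing to $\identity$. So the composition $\prod_{\chi}\PP(A_s(\chi))$ should collapse to
\[
\PP^\str_s(O)=\sum_{\chi\in\hat G} A_s(\chi)\,O\,A_s(\chi).
\]
To verify this, I will apply the pinchings one at a time. Applying $\PP(A_s(\chi_1))$ splits $O$ into the block $A_s(\chi_1) O A_s(\chi_1)$ and the block $(\identity-A_s(\chi_1))O(\identity-A_s(\chi_1))$. The first block is fixed by all later pinchings $\PP(A_s(\chi))$ with $\chi\neq\chi_1$, because $A_s(\chi)A_s(\chi_1)=0$. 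The second block is then refined further by those later pinchings. By induction on the number of characters already used, and using $\sum_\chi A_s(\chi)=\identity$, the product becomes the sum above. The same argument works verbatim for plaquettes with $(B_p(h))_{h\in G}$. One consequence is that the order of the product does not matter.

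From this formula the remaining properties follow directly. \emph{Complete positivity}: the map is a sum of maps of the form $X\mapsto PXP$, each of which is completely positive. \emph{Unitality}: $\sum_\chi A_s(\chi)^2=\identity$. \emph{Hilbert--Schmidt self-adjointness}: by cyclicity of the trace and $A_s(\chi)^\dagger=A_s(\chi)$,
\[
\Tr\bigl(X^\dagger A_s(\chi)YA_s(\chi)\bigr)=\Tr\bigl((A_s(\chi)XA_s(\chi))^\dagger Y\bigr).
\]

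Next I identify the image with $\NN_s:=\{A_s(\chi)\}'_{\chi\in\hat G}$. For any $O$, the operator $\PP^\str_s(O)$ commutes with every $A_{s}(\chi')$: by orthogonality, multiplying the sum by $A_s(\chi')$ on the left or on the right leaves only the term $A_s(\chi')OA_s(\chi')$. Conversely, if $O\in\NN_s$, then
\[
\sum_\chi A_s(\chi)OA_s(\chi)=\sum_\chi A_s(\chi)O=O .
\]
So $\PP^\str_s$ maps onto $\NN_s$ and is idempotent. The set $\NN_s$ is a commutant of a self-adjoint set, hence a von Neumann algebra.

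Finally, I check the bimodule property \eqref{eq:condexpdef}. For $l,r\in\NN_s$,
\[
A_s(\chi)\,lOr\,A_s(\chi)=l\,A_s(\chi)OA_s(\chi)\,r,
\]
because $l$ and $r$ commute with $A_s(\chi)$. Summing over $\chi$ gives $\PP^\str_s(lOr)=l\,\PP^\str_s(O)\,r$.

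The only mildly delicate step is the first one, collapsing the ordered product of single-operator pinchings into the block sum. It relies on the mutual orthogonality $A_s(\chi)A_s(\chi')=\delta_{\chi,\chi'}A_s(\chi)$ established earlier; everything after that is routine.
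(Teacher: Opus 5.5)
Your proof is correct, but it takes a different route from the paper's.

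You first use the orthogonality relation $A_s(\chi)A_s(\chi')=\delta_{\chi,\chi'}A_s(\chi)$ and the completeness relation $\sum_\chi A_s(\chi)=\identity$ to collapse the ordered product of single-operator pinchings into the block-diagonal map $O\mapsto\sum_\chi A_s(\chi)OA_s(\chi)$. You then read complete positivity, unitality, idempotency, the image, the bimodule property and Hilbert--Schmidt self-adjointness directly off this closed form.

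The paper never computes the product. It checks complete positivity, unitality and self-adjointness for each single pinching $\PP(A_s(\chi))$, and transfers them to $\PP_s^\str$ using only that these pinchings commute. It identifies the image in two steps:
\begin{itemize}
\item the bimodule property gives $\PP_s^\str(l)=l\,\PP_s^\str(\identity)=l$, so $\{A_s(\chi)\}'\subseteq\Img\PP_s^\str$;
\item the identity $A\,\PP(A)(O)=AOA=\PP(A)(O)\,A$, applied after reordering the commuting factors so that $\PP(A)$ acts last, shows that every output commutes with each $A_s(\chi)$.
\end{itemize}

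Your version gives an explicit formula, and it makes order-independence and idempotency immediate. The paper's version needs only commutativity of the projections, not that they form a resolution of the identity. So the same reasoning covers products of pinchings by commuting but non-orthogonal projections, such as those from different stars and plaquettes in $\PP_R^\str\circ\PP_R^\plq$.
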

\begin{proof}
    We present the star case, the plaquette case is analogous. 
    The pinching $\PP(A_s(\chi))$ is completely positive, since it is in Kraus form, and unital  since $A_s(\chi)$ is a projection. This implies the unitality and positivity of $\PP_s^\str$. 

    Let $l,r\in \{A_s(\chi)\}'_{\chi\in \hat{G}}$ and $O\in \BB(\HH_\lsym)$. Since $l,r$ commute with all $A_s(\chi)$, we have
    \begin{equation*}
        \PP^\str_s(lOr) = l\PP^\str_s(O)r .
    \end{equation*}
    In particular, $\{A_s(\chi)\}'_{\chi\in \hat{G}} \subseteq \Img \PP_s^\str$.
    For the reverse direction $\supseteq$, note that for all $O \in \BB(\HH_\lsym)$ and all $A=A_s(\chi)$ we have:
    \begin{align*}
        A\PP(A)(O) = AOA = \PP(A)(O)A \ .
    \end{align*}
    Since all $A_s(\chi)$ commute, and so do their single operator-pinchings $\PP(A_s(\chi))$, the above relation carries over to $\PP_s^\str $. Thus  $\Img \PP_s^\str \subseteq \{A_s(\chi)\}'_{\chi\in \hat{G}} $.
    
    Self-adjointness of $ \PP_s^\str $ can be easily checked for the single operator pinching. Let $P=P^\dag=P^2$ be an orthogonal projection in $\BB(\HH)$ (i.e.\ a star or a plaquette operator). Abbreviating $P^\perp=\identity-P$, we may write for any $X,Y\in \BB(\HH)$
    \begin{equation*}
        \Tr(\PP(P)(X)^\dag Y) = \Tr((PX^\dag P + P^\perp X^\dag P^\perp)Y) = \Tr(X^\dag \PP(P)(Y))
    \end{equation*}
    by the cyclicity of the trace.
\end{proof}

\begin{proof}[Proof of \cref{theorem:kernelcondexp}]
    The core of the argument is again the uniqueness of orthogonal projections. By \cref{prop:lindbladianprojector}, the left-hand side of \eqref{eq:infiniteTcondexp} is an orthogonal projection with respect to the Hilbert-Schmidt inner product onto $\ker(\LL_R^{(0)})$. Showing that the right-hand side is also an orthogonal projection with respect to the same inner product and onto the same image then shows the claim. 
    Since any rectangle is star- and plaquette-connected and $|R|>1$ by assumption, $\ker(\LL_R^{(0)})$ is given explicitly by \cref{theorem:kernelNice} and \cref{cor:kernelNice2}. That is, it contains all operators that have no support on $R$ and commute with all stars and plaquettes connected to $R$. 
    
    Note that the product of commuting conditional expectations with respect to the Hilbert-Schmidt inner product is a conditional expectation onto the intersection of their images. 
    
    Also recall that $d_e^{-1} \Tr_e$ is a conditional expectation onto $\identity_e\otimes \BB(\HH_{R\setminus \{e\}})$ and that it is self-adjoint with respect to the Hilbert-Schmidt inner product. The same holds for the pinchings $\PP^\str_s$ and $\PP^\plq_p$ by \cref{lemma:pinchingcondexp}.
    The intersection of all their images is equal to $\ker(\LL_R^{(0)})$.
    
    It remains to be shown that $\Tr_e$, $\PP^\str_s$ and $\PP^\plq_p$ commute pairwise for all $e$,$s$ and $p$.
    For two traces, the commutation is trivial.
    For any pair of star and/or plaquette pinchings, the commutation follows from the fact that stars and plaquettes all commute pairwise.
    The non-trivial pairings are thus a trace with either a star or a plaquette pinching. Since both are equivalent, we will present the star case.
    Let $O\in \BB(\HH_\Lambda)$ and pick $s\in \strset$ and $e\in \cobdy s$ (the case $e\notin\cobdy s$ is trivial).
    Note that the partial trace can be written as the product of two twirls:
    \begin{equation*}
        \frac{1}{d_e}\Tr_e(O) = \frac{1}{|G||\hat{G}|}\sum_{g\in G}\sum_{\chi\in\hat{G}} L_e^g M_e^\chi O M_e^{\overline{\chi}} L_e^{\overline{g}}
    \end{equation*}
    Since $A_s(\chi)$ commutes with any $L_e^g$, only the $M^\chi_e$ twirl needs checking.
    We assume $e\in \cobdy_+s$, the other orientation follows by exchanging $\xi\leftrightarrow\overline{\xi}$ in the argument of $A_s$.
    Fix $\chi,\xi\in \hat{G}$, conjugating $\PP(A_s(\chi))$ with $M_e^\xi$ yields:
    \begin{align*}
        & M_e^{\xi} A_s(\chi)OA_s(\chi) M_e^{\overline{\xi}} + M_e^{\xi} (\identity-A_s(\chi))O (\identity-A_s(\chi)) M_e^{\overline{\xi}} \\
        &= A_s(\chi\xi) M_e^{\xi} O M_e^{\overline{\xi}} A_s(\chi\xi) + (\identity-A_s(\chi\xi)) M_e^{\xi} O M_e^{\overline{\xi}} (\identity-A_s(\chi\xi)) \ .
    \end{align*}
    The star pinching $\PP_s^\str$ commutes with conjugation by $M_e^\chi$, since all single star operator pinchings commute and $\chi\mapsto \xi\chi$ is an isomorphism of $\hat{G}$. In particular, it commutes with the $M^\chi_e$ twirl, concluding the proof.
\end{proof}

\section{Strong martingale condition}\label{sec:strongmartingale}
In this section, we assemble the DS-condition and the expressions derived in the previous sections to show that the conditional expectations on overlapping rectangles factorize approximately. Since we want to find a bound on the MLSI constant that is uniform in system size, we also require the same from the approximate factorization. We call the resulting condition the strong martingale condition.
\begin{definition}\label{def:martingaleCondition}
%The conditional expectations of a family $\FF$ of double models at inverse temperature $ \beta \in [0,\infty) $ is said to satisfy the \emph{strong martingale condition} if there exist constants $K, \xi, L_0>0$  independent of the system size, $\lsym_0\in\FF$, \todo{this sentence does not make sense: clarify 1. what is independent of $\Lambda_0 $ 2. $\geq $ should probably be subset relation...} such that for any $\lsym \geq \lsym_0$ and any pair of overlapping rectangles $UV \subseteq \lsym$ and $VW \subseteq \lsym$ with $L_0\leq\diam(W) \leq 2 \dist(U,W)^2$ and $\diam_{-}(U), \diam_{-}(W)\geq1$ it holds that
The conditional expectations of a family $\FF$ of double models at inverse temperature $ \beta \in [0,\infty) $ are said to satisfy the \emph{strong martingale condition} if there exist constants $K, \xi, L_0>0$ and a minimal $\lsym_0\in\FF$, such that for any system $\lsym\in \FF$ with $ \lsym\supseteq \lsym_0$ and any pair of overlapping rectangles $UV \subseteq \lsym$ and $VW \subseteq \lsym$ with $L_0\leq\diam(W) \leq 2 \dist(U,W)^2$ and $\diam_{-}(U), \diam_{-}(W)\geq1$ it holds that
\begin{equation}\label{equa:martingaleCondition} 
\left(1+K e^{-\xi \dist(U,W)}\right)^{-1}  \E_{UV}^{\beta}\circ  \E_{VW}^{\beta}\leq  \E_{UVW}^{\beta} \leq \left(1+K e^{-\xi \dist(U,W)}\right) \E_{UV}^{\beta}\circ\E_{VW}^{\beta}
\end{equation}  
in the sense of complete positivity order.
\end{definition}
Note that the conditions on the diameters of the sets $U$, $V$, and $W$ make for a weaker statement, which is sufficient for the multi-scale analysis done in \cite{stengele_ModifiedlogarithmicSobolev_2025}.
At infinite temperature, the conditional expectations satisfy an even stronger result. They factorize exactly:
\begin{lemma}\label{lemma:factorization}
    Let $R, R'\subseteq\lsym$ be two (not necessarily disjoint) rectangles.
    Then, the infinite-temperature conditional expectations, as well as the star and plaquette pinchings, factorize exactly:
    \begin{equation}
        \E_{RR'}^0 = \E_{R}^0\circ \E_{R'}^0 \ , \quad \PP_{RR'}^\str = \PP^\str_{R}\circ \PP_{R'}^\str \quad  \mathrm{and} \quad \PP_{RR'}^\plq = \PP^\plq_{R}\circ \PP_{R'}^\plq \ ,
    \end{equation}
    where $RR'$ represents the union of $R$ and $R'$.
\end{lemma}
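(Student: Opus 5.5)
The pinching identities are easy, and the statement for $\E^0$ will be reduced to them. First, by definition $\PP^\str_{R}=\prod_{s\in\strset_R}\PP^\str_s$, and all single-star pinchings $\PP(A_s(\chi))$ commute pairwise because the projections $A_s(\chi)$ commute. Each $\PP^\str_s$ is idempotent, being a conditional expectation by \cref{lemma:pinchingcondexp}. Since $\strset_{RR'}=\strset_R\cup\strset_{R'}$, the product $\PP^\str_R\circ\PP^\str_{R'}$ equals $\prod_{s\in\strset_{RR'}}\PP^\str_s$, with the factors for $s\in\strset_R\cap\strset_{R'}$ appearing twice. By commutativity and idempotency these double occurrences collapse, giving exactly $\PP^\str_{RR'}$. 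The plaquette case is identical.

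For the conditional expectations, I would use the representation of \cref{theorem:kernelcondexp}: $\E^0_R=\PP^\str_R\circ\PP^\plq_R\circ\frac{\Tr_R}{d_R}$, and note that $\frac{\Tr_R}{d_R}=\prod_{e\in R}\frac{\Tr_e}{d_e}$. The proof of \cref{theorem:kernelcondexp} shows that all the maps $\frac{\Tr_e}{d_e}$, $\PP^\str_s$ and $\PP^\plq_p$ commute pairwise, for arbitrary $e$, $s$ and $p$, not only those attached to a single rectangle. All of them are idempotent. Hence
\[
\E^0_R\circ\E^0_{R'}=\prod_{s\in\strset_R\cup\strset_{R'}}\PP^\str_s\prod_{p\in\plqset_R\cup\plqset_{R'}}\PP^\plq_p\prod_{e\in R\cup R'}\frac{\Tr_e}{d_e}
=\PP^\str_{RR'}\circ\PP^\plq_{RR'}\circ\frac{\Tr_{RR'}}{d_{RR'}},
\]
using $\strset_{RR'}=\strset_R\cup\strset_{R'}$ and $\plqset_{RR'}=\plqset_R\cup\plqset_{R'}$.

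It remains to identify the right-hand side with $\E^0_{RR'}$. This is the main obstacle, since $RR'$ need not be a rectangle and \cref{theorem:kernelcondexp} is only stated for rectangles. I would handle it as in the proof of \cref{theorem:kernelcondexp}: $\E^0_{RR'}$ is the Hilbert--Schmidt orthogonal projection onto $\ker\LL^{(0)}_{RR'}$. The composite map above is a product of commuting, Hilbert--Schmidt self-adjoint conditional expectations, so it is the orthogonal projection onto the intersection of their images, namely
\[
\identity_{RR'}\otimes\BB(\HH_{\lsym\setminus RR'})\cap\{A_s(\chi),B_p(h)\}'_{s\in\strset_{RR'},\,p\in\plqset_{RR'}} .
\]
Thus one needs \cref{theorem:kernelNice} and \cref{cor:kernelNice2} for $RR'$. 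Those results require only that $RR'$ be star- and plaquette-connected and of size at least $2$. If $R$ and $R'$ overlap, the union of two such connected sets sharing an edge is again connected, so this holds.

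If the rectangles are disjoint, I would argue directly. Since $\LL^{(0)}_{RR'}=\LL^{(0)}_R+\LL^{(0)}_{R'}$ with both summands negative semidefinite in the Hilbert--Schmidt product, we get $\ker\LL^{(0)}_{RR'}=\ker\LL^{(0)}_R\cap\ker\LL^{(0)}_{R'}$. Each of these kernels is described by \cref{theorem:kernelNice} and \cref{cor:kernelNice2} because $R$ and $R'$ are rectangles, and their intersection is exactly the displayed image. This argument in fact covers the overlapping case as well, so it avoids the connectivity check altogether. Uniqueness of orthogonal projections then gives $\E^0_{RR'}=\E^0_R\circ\E^0_{R'}$.
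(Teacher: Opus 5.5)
Your proof is correct, and its core is the paper's. You write everything as a product of the commuting idempotents $\PP^\str_s$, $\PP^\plq_p$, $\Tr_e/d_e$, use $\strset_{RR'}=\strset_R\cup\strset_{R'}$ (and likewise for plaquettes), and let repeated factors collapse. The paper collapses the traces in one step via $\frac{\Tr_{R\cap R'}}{d_{R\cap R'}}\circ\frac{\Tr_{RR'}}{d_{RR'}}=\frac{\Tr_{RR'}}{d_{RR'}}$ rather than edge by edge.

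Where you go beyond the paper is the final identification. The paper simply reads $\PP^\str_{RR'}\circ\PP^\plq_{RR'}\circ\frac{\Tr_{RR'}}{d_{RR'}}$ as $\E^0_{RR'}$. In doing so it tacitly applies \cref{theorem:kernelcondexp} to $RR'$, although that theorem is stated only for rectangles and $RR'$ need not be one (e.g.\ the whole torus in Figure~\ref{fig:decompositionTorusQDM}(i)). You instead use uniqueness of the Hilbert--Schmidt orthogonal projection together with $\ker\LL^{(0)}_{RR'}=\ker\LL^{(0)}_R\cap\ker\LL^{(0)}_{R'}$. This makes the step rigorous for arbitrary pairs of rectangles, at the cost of a few extra lines.

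One small correction there: $\LL^{(0)}_{RR'}=\LL^{(0)}_R+\LL^{(0)}_{R'}$ holds only for disjoint $R,R'$, since otherwise $\mathcal{D}_e$ is counted twice for $e\in R\cap R'$. The kernel identity still holds in general. Each $\mathcal{D}_e=\LL_{\{e\}}$ is itself negative semidefinite, so $\ker\LL_V=\bigcap_{e\in V}\ker\mathcal{D}_e$ for every $V$. Your connectivity argument also covers the overlapping case independently.
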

Since $R$ and $R'$ are arbitrary, this also implies that $\E_R^0$ and $\E_{R'}^0$ commute.
\begin{proof}
    We start by showing that the pinchings factorize. Without loss of generality, we consider $\PP^\str$. Since $\PP^\str_s$ are commuting projections and 
   $ 
        \PP_{R}^\str = \prod_{s\in \strset_R} \PP_{s}^\str
   $, 
    it suffices to check that $\strset_{R}\cup\strset_{R'} = \strset_{RR'}$. This, however, follows by definition. A star $s\in \strset$ is supported on $R$ or $R'$ if and only if it is supported on $R\cup R' = RR'$.

    Using that all pinchings, and the trace, commute, we find:
    \begin{align*}
        \E_{R}^0\circ\E_{R'}^0 &= \PP^\str_{R} \circ \PP_{R}^\plq\circ \frac{\Tr_{R}}{d_{R}} \circ \PP^\str_{R'} \circ \PP_{R'}^\plq\circ \frac{\Tr_{R'}}{d_{R'}}\\
        &= \PP^\str_{RR'} \circ \PP^\plq_{RR'} \circ \frac{\Tr_{R\cap R'}}{d_{R\cap R'}}\circ \frac{\Tr_{RR'}}{d_{RR'}} \ .
    \end{align*}
    Using that the partial trace is a conditional expectation, we write \begin{equation*}
    \Tr_{R\cap R'}(\Tr_{RR'}(O))=\Tr_{R\cap R'}(\identity)\Tr_{RR'}(O)\ ,\  \frac{\Tr_{R\cap R'}}{d_{R\cap R'}}\circ \frac{\Tr_{RR'}}{d_{RR'}}=\frac{\Tr_{RR'}}{d_{RR'}}
    \end{equation*}
    which completes the proof.
    
\end{proof}

For finite temperatures, the conditional expectations then satisfy the strong martingale condition. The main assumption, besides the exact factorization at infinite temperature, is the DS-condition. This can be understood as the approximate factorization of Gibbs states which are associated with $\E_{UV}^\beta\circ \E_{VW}^\beta$ and $\E_{UVW}^\beta$.

\begin{theorem}\label{lemma:condexpordering}
    Assume that a family $\FF$ of double models satisfies the DS-condition at some inverse temperature $\beta$. Then, the conditional expectations of this family satisfy the strong martingale condition.
\end{theorem}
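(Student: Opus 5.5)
Throughout, write $\E^0 := \E^0_{UVW}$ and
\[
\hat\rho_{UV}^{(V)} := \frac{\Tr_V \hat{\rho}_{UV}}{d_V}, \qquad \hat\rho_{UVW}^{(VW)} := \frac{\Tr_{VW}\hat{\rho}_{UVW}}{d_{VW}} .
\]
The plan is to compute both sides of \eqref{equa:martingaleCondition} explicitly via \cref{cor:finiteTcondexp}, with the choice $s=1/2$ (any $s$ would do, but $s=1/2$ gives Kraus forms). The exact factorization of \cref{lemma:factorization} should reduce the composition $\E^\beta_{UV}\circ\E^\beta_{VW}$ to a single application of $\E^0_{UVW}$ sandwiched by commuting Gibbs marginals. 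The DS-condition, \cref{def:DS}, then compares these marginals with the one appearing in $\E^\beta_{UVW}$.

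\textbf{Step 1: the composition formula.} Write
\[
\E^\beta_{UV}\circ\E^\beta_{VW}(O) = d_{UV}\,\E^0_{UV}\big(\hat\rho_{UV}^{1/2}\, d_{VW}\,\E^0_{VW}(\hat\rho_{VW}^{1/2} O \hat\rho_{VW}^{1/2})\,\hat\rho_{UV}^{1/2}\big).
\]
The inner expression lies in $\ker\LL_{VW}$. The outer sandwich by $\hat\rho_{UV}^{1/2}$ has to be moved past $\E^0_{VW}$. By \cref{prop:marginaldoublecommutant}, $\hat\rho_{UV}$ factorizes as $e^{-\beta H_{UV}}$ times an inverse marginal, both in the star/plaquette algebra and commuting with everything relevant. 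Moreover $e^{-\beta H_\lsym}$ differs from $e^{-\beta H_{VW}}$ only by factors off $VW$. Hence I expect to rewrite
\[
\hat\rho_{UV}^{1/2}\, \E^0_{VW}(X)\, \hat\rho_{UV}^{1/2} = \E^0_{VW}\big(\sigma^{1/2} X \sigma^{1/2}\big)\cdot \tau ,
\]
where $\sigma$ and $\tau$ are explicit ratios of partial traces of $e^{-\beta H}$, all mutually commuting, and $\tau$ lies in the fixed-point algebra of $\E^0_{VW}$. Using $\E^0_{UV}\circ\E^0_{VW}=\E^0_{UVW}$, the aim is to arrive at
\[
\E^\beta_{UV}\circ\E^\beta_{VW}(O) = d_{UVW}\,\E^0\!\left(\Big(\hat\rho_{UVW}\,\tfrac{\Tr_V \hat\rho_{UV}}{\Tr_{VW}\hat\rho_{UVW}}\Big)^{1/2} O \Big(\cdots\Big)^{1/2}\right),
\]
that is, $\E^\beta_{UVW}$ with $\hat\rho_{UVW}$ replaced by $\hat\rho_{UVW}\cdot\hat\rho_{UV}^{(V)}/\hat\rho_{UVW}^{(VW)}$, up to normalisation. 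Establishing this identity requires the bookkeeping $H_{UV}+H_{VW}=H_{UVW}+H_V$, valid since $\dist(U,W)\ge 2$ as in the proof of \cref{thm:DS}, together with the commutativity of all marginals.

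\textbf{Step 2: comparison.} Set
\[
\Delta := \frac{\Tr_{VW}\hat\rho_{UVW}}{\Tr_V\hat\rho_{UV}} ,
\]
which is positive, commutes with $\hat\rho_{UVW}$, and, by \cref{def:DS}, satisfies
\[
(1+Ke^{-\xi\dist(U,W)})^{-1}\identity \le \Delta \le (1+Ke^{-\xi\dist(U,W)})\,\identity .
\]
The difference between the two maps is then
\[
O\mapsto d\,\E^0(\hat\rho_{UVW}^{1/2}\, \Delta^{-1/2}\, O\,\Delta^{-1/2}\, \hat\rho_{UVW}^{1/2}) \quad\text{versus}\quad O\mapsto d\,\E^0(\hat\rho_{UVW}^{1/2}\, O\, \hat\rho_{UVW}^{1/2}).
\]
Complete-positivity order does not follow immediately from operator bounds on $\Delta$, since $O\mapsto \Delta^{-1/2}O\Delta^{-1/2}$ is not comparable to the identity map in CP order in general. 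The way out is to use that $\Delta$ lies in the range of $\E^0$: it has no support on $UVW$ and lies in the bicommutant of the relevant stars and plaquettes, by \cref{prop:marginaldoublecommutant} (possibly after reorganising via $\Tr_V$). If so, $\Delta^{\pm1/2}$ can be pulled outside $\E^0$ by the module property \eqref{eq:condexpdef}, giving
\[
\E^\beta_{UV}\circ\E^\beta_{VW}(O) = \Delta^{-1/2}\,\E^\beta_{UVW}(O)\,\Delta^{-1/2}.
\]
For $X\ge0$ and commuting $\Delta$, operator bounds do give $c^{-1}X\le \Delta^{-1/2}X\Delta^{-1/2}$ only if $\Delta$ commutes with $X$. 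This holds here because $\E^\beta_{UVW}(O)$ lies in $\ker\LL_{UVW}$, which commutes with the star/plaquette algebra containing $\Delta$. Tensoring with $\identity_n$ preserves all of this, which yields the CP-order inequalities \eqref{equa:martingaleCondition} with the same $K,\xi$.

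\textbf{Main obstacle.} The delicate step is Step 1, together with verifying that $\Delta$ is genuinely fixed by $\E^0_{UVW}$. I would first check whether $\Delta$ is supported only on stars and plaquettes touching $UVW$ and off $UVW$ itself, using the explicit factorization \eqref{equa:firstdecompDS}. There, each factor $\Tr_R(e^{-\beta H_R})$ lives in the algebra of stars and plaquettes of $\strset_R,\plqset_R$. The factor $\Tr_V(e^{-\beta H_V})^{-1}$ involves $V$-stars, which commute with $\ker\LL_{UVW}$. If the support on $UVW$ is an issue, I would instead keep $\Delta$ inside $\E^0$ and use that it commutes with all pinchings and twirls involved. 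From that commutation, $\E^0(\Delta^{-1/2}Y\Delta^{-1/2})$ can be compared with $\E^0(Y)$ in CP order via $\Delta \le c\identity$ applied inside a commuting abelian decomposition.
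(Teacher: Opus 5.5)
Your proposal has a genuine gap. Step~2 relies on $\Delta$ lying in the range of $\E^0_{UVW}$, i.e.\ having no support on $UVW$, and this is false.

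Both marginals are normalized conditional densities on $U$: $\Tr_U\Tr_{VW}\hat{\rho}_{UVW}=\Tr_U\Tr_V\hat{\rho}_{UV}=\identity$. So if $\Delta$ had no support on $U$, applying $\Tr_U$ to $\Tr_{VW}\hat{\rho}_{UVW}=\Delta\,\Tr_V\hat{\rho}_{UV}$ would force $\Delta=\identity$, i.e.\ the DS bounds with $K=0$. Equivalently, your final identity $\E^\beta_{UV}\circ\E^\beta_{VW}(O)=\Delta^{-1/2}\E^\beta_{UVW}(O)\Delta^{-1/2}$ at $O=\identity$ reads $\identity=\Delta^{-1}$. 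In \eqref{equa:firstdecompDS} you can see the support directly: the boundary loops $A_{\strset_{VW}}, A_{\strset_V}$ act on edges of $U$, and $A_{\strset_V}, A_{\strset_{UV}}$ act on edges of $W$, all with nonzero weights.

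The Step~1 formula fails for the same reason. Its right-hand side always lies in $\Img\E^0_{UVW}=\ker\LL_{UVW}$. But $\E^\beta_{UV}\circ\E^\beta_{VW}(O)$ generally has support on $W$, because $\Tr_V\hat{\rho}_{UV}$ does, through the loop around $UV$. If the composition did map into $\ker\LL_{UVW}$, it would be a contractive idempotent for $\langle\cdot,\cdot\rangle_{\rho,s}$, hence equal to $\E^\beta_{UVW}$, which is exact factorization. The illegitimate move is pushing $\hat{\rho}_{UV}$ through $\E^0_{VW}$ by the module property, since $\hat{\rho}_{UV}$ is supported on $VW$. Your fallback in the last paragraph inherits this formula, and $\Delta$ does not commute with the twirls that realize $\Tr_U$ and $\Tr_W$ in any case.

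The missing idea is to keep the marginals inside the partial trace over $U$ and compare them against an operator that is already conditioned. The paper uses the $s=0$ form of \cref{cor:finiteTcondexp} and \cref{lemma:factorization}, together with three facts: $\E^\beta_{VW}(O)$ is supported off $VW$; $\Tr_{VW}\hat{\rho}_{UVW}$ is supported off $VW$; and both marginals commute with all stars and plaquettes. From these it obtains
\[
\E^\beta_{UVW}(O)=\Tr_U\bigl(\Tr_{VW}(\hat{\rho}_{UVW})\,Y\bigr),\qquad
\E^\beta_{UV}\circ\E^\beta_{VW}(O)=\Tr_U\bigl(\Tr_{V}(\hat{\rho}_{UV})\,Y\bigr),\qquad
Y:=\PP_U\circ\E^\beta_{VW}(O),
\]
with $\PP_U=\PP^\str_U\circ\PP^\plq_U$. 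In the second identity the $V$-pinchings of $\PP_{UV}$ have been absorbed into $\E^\beta_{VW}$. Now $Y$ commutes with both marginals, so for $Y\geq 0$ the DS ordering gives $Y^{1/2}\Tr_{VW}(\hat{\rho}_{UVW})Y^{1/2}\leq (1+Ke^{-\xi\dist(U,W)})\,Y^{1/2}\Tr_V(\hat{\rho}_{UV})Y^{1/2}$, and the analogous lower bound. $\Tr_U$ preserves both inequalities, and everything survives tensoring with $\identity_n$. Your commutation idea is the right mechanism, but it must act on the conditioned operator $Y$ under $\Tr_U$, not on the arbitrary $O$ under a single $\E^0_{UVW}$.
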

\begin{proof}
    Let $d_0$, $\lsym_0'$ be the minimal distance and system, respectively, given by the DS-condition. Recall that $d_0\geq 2$ and  
    fix $L_0\geq 2d_0^2$ and $\lsym\supseteq \lsym_0'$. Pick rectangles $UV \subseteq \lsym$ and $VW \subseteq \lsym$ as in \cref{def:martingaleCondition}.
    
    To simplify the proof, we will abbreviate $\PP_R \coloneqq \PP_{R}^\str\circ\PP_R^\plq$.
    We make use of the following two properties of the Gibbs state  $\hat{\rho}_R$ on a rectangle. For any star- and plaquette-connected sets $R, R' \subseteq \lsym$
    \begin{enumerate}
        %\item $\Tr_{R}\hat{\rho}_{RR'} = \frac{\hat{\rho}_{RR'}}{\hat{\rho}_{R}} =  \frac{{\color{red} d_{R}}\E_{R}^0e^{-\beta H_{RR'}}}{{\color{red} d_{RR'}}\E_{RR'}^0e^{-\beta H_{RR'}}}$
        \item $\Tr_{R}\hat{\rho}_{RR'} = \frac{\hat{\rho}_{RR'}}{\hat{\rho}_{R}} =  \frac{\Tr_R e^{-\beta H_{RR'}}}{\Tr_{RR'}e^{-\beta H_{RR'}}}$.
        \item $\hat{\rho}_R$, any of its partial traces over star- and plaquette-connected regions, and thus also their inverses, commute with all star and plaquette operators.
    \end{enumerate}
    The first item follows from a simple computation, which uses~\eqref{eq:rhohatdef}. 
    For the second item, note that by the first item, it is sufficient to check it for $\hat{\rho}_R$. Expanding it, we find
    \begin{equation*}
        \hat{\rho}_R = e^{-\beta H_R} \left(\Tr_R (e^{-\beta H_R})\right)^{-1} \ .
    \end{equation*}
    The first factor commutes with all stars and plaquettes. 
    For the second factor, we use \cref{prop:marginaldoublecommutant}, which ensures that the second factor lies in the bicommutant of all stars and plaquettes. However, since all stars and plaquettes commute, they lie in their own commutant. Thus, they commute with any element of their bicommutant.

    Note that $V$ is not a rectangle, but the overlap of two rectangles. Nonetheless, it is either star- and plaquette-connected, or consists of two star- and plaquette-connected components, $V=V_1V_2$. Each connected component of $V$ has a width of at least $2$, by assumption.
    Since the inner diameter of $U$ and $W$ is at least $1$, we have $\dist(V_1,V_2)\geq2$ and thus $\strset_{V_1}\cap \strset_{V_2}=\plqset_{V_1}\cap \plqset_{V_2}=\emptyset$. 
    For traces of Gibbs states, we can apply the above fact twice. Since $\hat{\rho}_{V_1}$ has no support on $V_2$ and vice versa, $\hat{\rho}_{V}$ factorizes, and we find
    \begin{equation*}
        \Tr_{V_1V_2} \hat{\rho}_{UV} = \Tr_{V_1}\left( \frac{\hat{\rho}_{UV}}{\hat{\rho}_{V_1}}\right) = \Tr_{V_1}\left( \hat{\rho}_{UV} \right)\hat{\rho}_{V_1}^{-1} = \frac{\hat{\rho}_{UV}}{\hat{\rho}_{V}} \ ,
    \end{equation*}
    which commutes with all stars and plaquettes by the second fact above.
    
    We can use the facts above and that the conditional expectations and their constituents are projections as well as \cref{lemma:factorization} and \cref{cor:finiteTcondexp} to rewrite     
    \begin{align*}
        \E^\beta_{UVW} (O)
        &= d_{UVW} \E_{UVW}^0 ( \hat{\rho}_{UVW}O) \\
        &= d_{VW} \Tr_{U} \circ\PP_U  \circ\E_{VW}^0 \left( \frac{\hat{\rho}_{UVW}}{\hat{\rho}_{VW}}\hat{\rho}_{VW}O\right) \\
        &= d_{VW} \Tr_{U} \circ\PP_U  \circ\E_{VW}^0 \left(\Tr_{VW}(\hat{\rho}_{UVW})\hat{\rho}_{VW}O\right) \\
        &= \Tr_{U} \left( \Tr_{VW}(\hat{\rho}_{UVW}) \PP_U  \circ\E^{\beta}_{VW} \left(O\right)\right) \ .
    \end{align*}
    Here we used that $ d_{UVW} = d_U d_{VW} $ in the second step, and that $\Tr_{VW}(\hat{\rho}_{UVW})$ is not supported on $VW$ in the fourth step.
    Similarly, we can rewrite
    \begin{align*}
         \E^\beta_{UV} \circ \E^\beta_{VW}(O) 
         &= d_{UV} \E_{UV}^0\left(\hat{\rho}_{UV} \E^\beta_{VW}(O) \right)\\
         &=\Tr_{U}\circ\PP_{UV}\circ\Tr_{V}\left(\hat{\rho}_{UV} \E^\beta_{VW}(O) \right)\\
         &= \Tr_U\left(\Tr_{V}(\hat{\rho}_{UV}) \PP_{UV}\circ\E^\beta_{VW}(O) \right)\\    
         &= \Tr_U\left(\Tr_{V}(\hat{\rho}_{UV}) \PP_{U}\circ\E^\beta_{VW}(O) \right)\ .
    \end{align*}
    In the last step, we used the fact that $\PP_{UV} \circ \PP_{VW}= \PP_{U}\circ \PP_{VW}$ to absorb the pinching on $V$ in $\E_{VW}^{\beta}$.
    %\LG{ in the last equality, $\PP_{UV}$ becomes $\PP_{U}$ because...SST: I added a comment}
    To conclude the proof note that $\PP_{U}\circ\E^\beta_{VW}$ and $\Tr_{U}$ are completely positive. By \cref{prop:marginaldoublecommutant}, the states $\Tr_{VW}(\hat{\rho}_{UVW})$ and $\Tr_{V}(\hat{\rho}_{UV})$ commute with the image of $\PP_{U}\circ\E^\beta_{VW}$. 
    Using, in addition, that by assumption
    \begin{equation*}
        (1-Ke^{-\xi\dist(U,W)})\Tr_{V}\hat{\rho}_{UV} \leq \Tr_{VW}\hat{\rho}_{UVW} \leq (1+Ke^{-\xi\dist(U,W)})\Tr_{V}\hat{\rho}_{UV}
    \end{equation*}
    we find the same ordering of the conditional expectations
    \begin{equation*}
        (1-Ke^{-\xi\dist(U,W)} ) \E^\beta_{UV} \circ \E^\beta_{VW} \leq \E^\beta_{UVW} \leq (1+Ke^{-\xi\dist(U,W)} ) \E^\beta_{UV} \circ \E^\beta_{VW} \ ,
    \end{equation*}
    as claimed.
\end{proof}

\section{MLSI through multiscale analysis}\label{sec:MLSI}
This section contains the lines of reasoning in proving that the strong martingale condition implies a uniformly bounded MLSI constant. We highlight the important steps and ideas and refer to~\cite{stengele_ModifiedlogarithmicSobolev_2025} for details.
The proof is based on a widely used tactic both for classical \cite{stroock_equivalencelogarithmicSobolev_1992,martinelli_ApproachEquilibriumGlauber_1994, martinelli_ApproachEquilibriumGlauber_1994a} and quantum \cite{kastoryano_QuantumGibbsSamplers_2016} semigroups. 
The key idea is to define an MLSI constant on scale $L$ by
\begin{equation*}
    \alpha(L) := \min_{R \in \RR_L} \alpha(\LL_{R})
\end{equation*}
where $\RR_L$ is the set of all rectangles of size at most $L$.
Under suitable conditions, we can lower bound $\alpha(2L)$ in terms of $\alpha(L)$. Iterating this procedure yields a uniform lower bound on $\alpha(L)$ in terms of $\alpha(L_0)$ for some value $L_0$ that is independent of system size.

This can be formalized as follows:
\begin{theorem}
If the family $\FF$ satisfies the strong martingale condition, then there exists $L_{0} = L_{0}(K, \xi)>0$ independent of the system size such that for every $L \geq L_{0}$
\[ \alpha(L) \geq \exp\left( \frac{-120}{L_{0}^{1/3}}\right) \alpha(L_{0}) \ . \]
\end{theorem}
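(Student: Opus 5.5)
The plan is the classical Martinelli--Olivieri/Cesi multiscale recursion, transcribed into the entropic setting as in \cite{stengele_ModifiedlogarithmicSobolev_2025}. Fix $L$ and a rectangle $R\in\RR_{2L}$, say with longer side of length between $L$ and $2L$. Cut $R$ along its long side into two overlapping rectangles $A=UV$ and $B=VW$, each in $\RR_{L'}$ with $L' \approx \tfrac{3}{2}L$ or smaller. The overlap $V$ has width $\ell \approx L^{1/3}$. Because there are roughly $L^{2/3}$ admissible positions for the strip $V$, I will average over $n\sim L^{1/3}$ disjoint placements $V_1,\dots,V_n$.

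\textbf{Step 1: approximate tensorization.} First I will derive an approximate tensorization of the relative entropy from the strong martingale condition \eqref{equa:martingaleCondition}. The complete-positivity order between $\E^\beta_{UVW}$ and $\E^\beta_{UV}\circ\E^\beta_{VW}$ controls the relative entropy of the output states. The standard lemma from \cite{stengele_ModifiedlogarithmicSobolev_2025} then gives, for any state $\sigma$,
\[
D(\sigma\|\E^{*}_{R}\sigma)\le \bigl(1+c\,Ke^{-\xi \ell}\bigr)\bigl(D(\sigma\|\E^{*}_{A}\sigma)+D(\sigma\|\E^{*}_{B}\sigma)\bigr) .
\]
Here $c$ is a universal constant. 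The admissibility conditions $L_0\le \diam(W)\le 2\dist(U,W)^2$ and $\diam_-\ge1$ hold because $\ell\sim L^{1/3}$ and $\diam(W)\lesssim L$.

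\textbf{Step 2: apply the MLSI on each half and average.} Next I will insert the MLSI for $\LL_A$ and $\LL_B$, which have constant at least $\alpha(L')$. This bounds the right-hand side by $\alpha(L')^{-1}$ times the entropy production $\mathrm{EP}_A(\sigma)+\mathrm{EP}_B(\sigma)$. Since $\mathrm{EP}$ is additive over the edges $e$, the two terms together give $\mathrm{EP}_R(\sigma)$ plus the edges in $V$, which are counted twice. Averaging over the $n$ disjoint strip positions $V_i$ turns that double-counted term into $\frac1n\,\mathrm{EP}_R(\sigma)$. This yields
\[
\alpha(2L)\ge \frac{\alpha(\tfrac32 L)}{(1+cKe^{-\xi L^{1/3}})(1+L^{-1/3})} .
\]
A second cut in the other direction handles squares. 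Thus going from scale $L$ to $2L$ costs at most two such factors.

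\textbf{Step 3: iterate.} Set $L_k=(3/2)^kL_0$. Then $\alpha(L_{k+1})\ge \alpha(L_k)\prod(1+\dots)^{-1}$, and summing the logarithms gives
\[
\sum_k \bigl(cKe^{-\xi L_k^{1/3}}+2L_k^{-1/3}\bigr)\le C L_0^{-1/3} ,
\]
because this is a geometric series in $L_k^{-1/3}$. I will choose $L_0$ large enough, depending on $K$ and $\xi$, that the exponential terms are dominated by $L_0^{-1/3}$. Tracking the constants then gives a bound like $120/L_0^{1/3}$. Monotonicity between the scales $L_k$ handles arbitrary $L\ge L_0$. Finally, the torus $\lsym$ itself is covered by two overlapping cylinders, or by rectangles via the same argument.

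\textbf{Main obstacle.} I expect the hardest step to be Step 1. Deducing entropic approximate tensorization from the operator CP-order \eqref{equa:martingaleCondition} needs care in the quantum setting, and there are two issues. First, $\E_{UV}^\beta$ and $\E_{VW}^\beta$ do not commute, so one must use the version with a multiplicative constant $(1+\epsilon)$ rather than an additive error. Second, the edge/wrap-around geometry on the torus must be handled. I would follow the corresponding lemma in \cite{stengele_ModifiedlogarithmicSobolev_2025} verbatim, since only the CP-order input changes.
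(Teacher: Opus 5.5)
Your architecture matches the paper's: approximate tensorization from the CP order via \cite[Sec.~5.2]{stengele_ModifiedlogarithmicSobolev_2025}, divide-and-conquer with averaging of the double-counted entropy production over disjoint strip positions, and a geometric iteration producing an $L_0^{-1/3}$ series. However, Step~1 fails as written, because your overlap width is too small for the hypothesis you invoke.

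Definition~\ref{def:martingaleCondition} only grants \eqref{equa:martingaleCondition} when $\diam(W)\le 2\dist(U,W)^2$, i.e.\ when $\dist(U,W)\geq\sqrt{\diam(W)/2}$. In your cut, $UV$ and $VW$ must both lie at a strictly smaller scale than $R$. So $W$ has extent of order $L$ along the cut direction (about $L/2$ if the halves have length at most $\tfrac32 L$), and $\diam(W)$ is of order $L$. Meanwhile $2\dist(U,W)^2\approx 2\ell^2\sim 2L^{2/3}$.

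Your admissibility check therefore points the wrong way. Knowing $\diam(W)$ is at most of order $L$ does not help, since the requirement is $\diam(W)\le 2\ell^2$, and $L > 2L^{2/3}$ for large $L$. Hence the strong martingale condition does not apply to your triple $(U,V,W)$. Swapping the roles of $U$ and $W$ does not rescue this, since both pieces have size of order $L$.

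The fix is to work on scale $\sqrt L$, as the paper does. It fixes $L_0$ so that for all $L\geq L_0$ the strong martingale condition holds at distance of order $\sqrt L$ with $Ke^{-\xi\sqrt L}<1/28$. Some such smallness is also needed by the approximate-tensorization lemma, so your ``universal $c$'' is only valid once $\epsilon$ is small.

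You can still place $n\sim L^{1/3}$ disjoint strips of width at least $\sqrt{\diam(W)/2}$, i.e.\ of order $\sqrt L$, because $n\sqrt L\sim L^{5/6}\ll L$. The averaging cost then stays $O(L^{-1/3})$, while the martingale error becomes $O(Ke^{-\xi\sqrt L})$, which is dominated. This yields the paper's recursion $\alpha(2L)\geq(1+24L^{-1/3})^{-1}\alpha(L)$. Iterating along $L_j=2^jL_0$ gives the factor $\exp\bigl(-24L_0^{-1/3}/(1-2^{-1/3})\bigr)\geq\exp(-120L_0^{-1/3})$, since $24/(1-2^{-1/3})\approx 116$.

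With this change of overlap width, the rest of your plan goes through: the entropy-production bookkeeping, the monotonicity of $\alpha(L)$ in $L$, and the wrap-around splits of the torus.
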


The main ideas of the proof are as follows.  First, we fix some $L_{0} \geq 1$ such that for every $L \geq L_{0}$ the strong martingale condition holds on scale $\sqrt{L}$ with
\[ Ke^{-\xi \sqrt{L}} < \frac{1}{28}. \]
Using the strong martingale condition and arguing as in \cite[Sec. 5.2]{stengele_ModifiedlogarithmicSobolev_2025}, we can find that
\[ D(\sigma \| \E_{UVW}^*(\sigma)) \leq (1+\varepsilon_{L}) \left( D(\sigma \| \E_{VW}^*(\sigma) ) + D(\sigma \| \E_{UV}^*(\sigma) )\right). \]
This approximate factorization of the conditional expectation can then be used as in \cite[Prop. 6.1]{stengele_ModifiedlogarithmicSobolev_2025} to conclude that for all $L \geq L_{0}$
\[ \alpha(2L) \geq \left(1+\frac{24}{L^{1/3}} \right)^{-1} \alpha(L). \]
This is achieved using a \emph{divide and conquer method}. The key step is to decompose each rectangle $R$ into two overlapping  subrectangles $UV$ and $VW$, with intersecting rectangle $V$, and to estimate from below $\alpha(\LL_{R})$ in terms of $\alpha(\LL_{UV})$ and $\alpha(\LL_{VW})$. 
Iterating this process as in \cite[Eqs. (6.2) and (6.3)]{stengele_ModifiedlogarithmicSobolev_2025} yields the desired inequality.

\begin{figure}[ht]
\centering
\begin{tikzpicture}[equation,scale=0.3]
    \node at (-1.5, 8) {$(i)$};
    %\filldraw[red!20] (0,0) rectangle (2,8);
    %\filldraw[blue!20] (4,0) rectangle (6,8);
    \draw[gray, thin] (0,0) grid (8,8);
    \draw[gray,thick, postaction=torus horizontal] (0,0) -- (8,0);
    \draw[gray,thick, postaction=torus horizontal] (0,8) -- (8,8);
    \draw[gray,thick, postaction=torus vertical] (0,0) -- (0,8);
    \draw[gray,thick, postaction=torus vertical] (8,0) -- (8,8);
    \draw[black, ultra thick] (0,0) rectangle (2,8);
    \node at (1, 4) {$U$};
    \draw[black, ultra thick] (2,0) rectangle (4,8);
    \node at (3, 4) {$V_{1}$};
    \draw[black, ultra thick] (4,0) rectangle (6,8);
    \node at (5, 4) {$W$};
    \draw[black, ultra thick] (6,0) rectangle (8,8);
    \node at (7, 4) {$V_{2}$};
    \end{tikzpicture}
    %%%%%%%%%%%%%%%%%%%%%%%%%%%
    \hspace{1cm} %%%%%%%%%%%%%%%%%%%%%%%%%%%%%%
\begin{tikzpicture}[equation,scale=0.3]
    \node at (-1.5, 8) {$(ii)$};
    %\filldraw[yellow!20] (0,0) rectangle (8,2);
    %\filldraw[yellow!20] (0,6) rectangle (8,8);
    %\filldraw[red!20] (0,2) rectangle (2,6);
    %\filldraw[blue!20] (4,2) rectangle (6,6);
    % \draw[gray, thin] (0,0) grid (8,8);
    % \draw[gray,thick, postaction=torus horizontal] (0,0) -- (8,0);
    % \draw[gray,thick, postaction=torus horizontal] (0,8) -- (8,8);
    % \draw[gray,thick, postaction=torus vertical] (0,0) -- (0,8);
    % \draw[gray,thick, postaction=torus vertical] (8,0) -- (8,8);
    % \draw[black, ultra thick] (0,2) rectangle (2,6);
    % \node at (1, 4) {$U$};
    % \draw[black, ultra thick] (2,2) rectangle (4,6);
    % \node at (3, 4) {$V_{1}$};
    % \draw[black, ultra thick] (4,2) rectangle (6,6);
    % \node at (5, 4) {$W$};
    % \draw[black, ultra thick] (6,2) rectangle (8,6);
    % \node at (7, 4) {$V_{2}$};
    %\node at (4, 7) {$D$};
    \draw[gray, thin] (0,0) grid (8,8);
    \draw[gray,thick, postaction=torus horizontal] (0,0) -- (8,0);
    \draw[gray,thick, postaction=torus horizontal] (0,8) -- (8,8);
    \draw[gray,thick, postaction=torus vertical] (0,0) -- (0,8);
    \draw[gray,thick, postaction=torus vertical] (8,0) -- (8,8);
    \draw[black, ultra thick] (2,0) rectangle (6,2);
    \node at (4, 1) {$U$};
    \draw[black, ultra thick] (2,2) rectangle (6,4);
    \node at (4, 3) {$V_{1}$};
    \draw[black, ultra thick] (2,4) rectangle (6,6);
    \node at (4, 5) {$W$};
    \draw[black, ultra thick] (2,6) rectangle (6,8);
    \node at (4, 7) {$V_{2}$};
\end{tikzpicture}
    %%%%%%%%%%%%%%%%%%%%%%%%%%%
    \hspace{1cm} %%%%%%%%%%%%%%%%%%%%%%%%%%%%%%
\begin{tikzpicture}[equation,scale=0.3]
    \node at (-1.5, 8) {$(iii)$};
    %\filldraw[yellow!20] (0,0) rectangle (8,8);
    %\filldraw[white] (1,2) rectangle (7,6);
    %\filldraw[red!20] (1,2) rectangle (3,6);
    %\filldraw[blue!20] (5,2) rectangle (7,6);
    \draw[gray, thin] (0,0) grid (8,8);
    \draw[gray,thick, postaction=torus horizontal] (0,0) -- (8,0);
    \draw[gray,thick, postaction=torus horizontal] (0,8) -- (8,8);
    \draw[gray,thick, postaction=torus vertical] (0,0) -- (0,8);
    \draw[gray,thick, postaction=torus vertical] (8,0) -- (8,8);
    \draw[black, ultra thick] (1,2) rectangle (3,6);
    \node at (2, 4) {$U$};
    \draw[black, ultra thick] (3,2) rectangle (5,6);
    \node at (4, 4) {$V$};
    \draw[black, ultra thick] (5,2) rectangle (7,6);
    \node at (6, 4) {$W$};
    %\node at (4, 7) {$D$};
    \end{tikzpicture}
    \caption{Three sets of overlapping rectangles. (i) The full torus is split into $V_2UV_1$ and $V_1WV_2$. Both rectangles still wrap around the torus. (ii) Splitting a rectangle that wraps around in one direction yields two new rectangles, which overlap on both ends. (iii) A rectangle that does not wrap around splits into two smaller ones with only one connected component in the overlap. }
    \label{fig:decompositionTorusQDM}
\end{figure}

It remains to show that $\alpha(L_0)$ admits a lower bound $\alpha_0>0$ that is independent of the system size. 
This part is spelled out in \cite[Sec. 6.1]{stengele_ModifiedlogarithmicSobolev_2025} and boils down to:
\begin{proposition}
    Let $R\subset\lsym$ be a rectangle of size at most $L_0$. Let $\LL_R$ be a Davies Lindbladian for $H_R$ with local, translation invariant,  jump operators. Then, there exists a constant $C>0$ independent of the system size such that the MLSI constant of $\LL_R$ satisfies $\alpha(\LL_R)>C$.
\end{proposition}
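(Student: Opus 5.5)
The plan is to reduce the statement to a finite-dimensional problem that does not depend on $\lsym$, and then to invoke the general fact that every finite-dimensional GNS-symmetric quantum Markov semigroup satisfies a \emph{complete} MLSI with a strictly positive constant. The reduction is needed because $\alpha(\LL_R)$ is the MLSI constant relative to the conditional expectation $\E_R^\beta$, and $\E_R^\beta$ acts nontrivially on the part of $\lsym$ outside $R$ only through commutation constraints.

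\textbf{Step 1: localisation.} Let $\bar R$ be the union of $\cobdy s$ for $s\in\strset_R$ and of $\bdy p$ for $p\in\plqset_R$. Fix $e\in R$ and write $H_\lsym = H_{\{e\}} + (H_\lsym - H_{\{e\}})$. The second term commutes with every operator supported on $e$, and all terms commute with each other. It follows that, for every $S\in\SSS_e$,
\[
e^{itH_\lsym}Se^{-itH_\lsym}=e^{itH_{\{e\}}}Se^{-itH_{\{e\}}} .
\]
Consequently the Fourier components $S(\omega)$ are determined by the finitely many stars and plaquettes touching $e$. They are supported in $\bar R$, and they are nonzero only for the Bohr frequencies of $H_{\{e\}}$. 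Hence $\LL_R=\LL_R^{\mathrm{loc}}\otimes\mathrm{id}_{\lsym\setminus\bar R}$, where $\LL_R^{\mathrm{loc}}$ is a Davies generator on $\BB(\HH_{\bar R})$. It is GNS-symmetric with respect to the local Gibbs state $e^{-\beta H_R}/\Tr e^{-\beta H_R}$, which is also a full-rank invariant state. I will check this symmetry with \eqref{eq:DB}, exactly as in the global case.

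\textbf{Step 2: finitely many local models.} A rectangle of size at most $L_0$ has $|\bar R|$ bounded in terms of $L_0$. Once $N>L_0+2$, no such rectangle or its neighbourhood wraps around the torus, and by the assumed translation invariance of the $\SSS_e$ the pair $(\LL_R^{\mathrm{loc}},\bar R)$ depends, up to unitary translation, only on the shape of $R$. There are finitely many shapes, so only finitely many local generators occur. The finitely many systems in $\FF$ with $N\le L_0+2$ contribute finitely many further generators. Together these give a finite list that is independent of $\lsym$.

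\textbf{Step 3: complete MLSI and minimisation.} The MLSI constant of $\LL_R^{\mathrm{loc}}\otimes\mathrm{id}$ relative to its fixed-point conditional expectation is bounded below by the complete MLSI constant of $\LL_R^{\mathrm{loc}}$. For a GNS-symmetric semigroup on a finite-dimensional algebra, this complete MLSI constant is strictly positive. I will cite the Gao--Rouzé bound, which estimates it from below by a quantity depending only on the spectral gap of $\LL_R^{\mathrm{loc}}$ and the index of its fixed-point conditional expectation. Both of these quantities are finite and positive for each generator in the finite list. Taking $C$ to be the minimum over the list (minus any $\epsilon$ to get strict inequality) completes the argument.

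The main obstacle is Step 3, specifically the passage from $\LL_R^{\mathrm{loc}}$ to $\LL_R^{\mathrm{loc}}\otimes\mathrm{id}$. An ordinary MLSI is not tensor-stable, and $\ker\LL_R$ is a non-trivial subalgebra of the form described in \cref{theorem:kernelNice}, not simply $\BB(\HH_{\lsym\setminus R})$. The plan is to use exactly the completely bounded version (CMLSI) together with the finite-index estimate, which handles arbitrary ancillas. A secondary point to verify in Step 1 is that the restricted Davies generator defined with the global Bohr frequencies $\Omega$ coincides with the local one. This holds because $h_{e,i}^{(\beta)}$ is only ever evaluated on frequencies for which $S(\omega)\ne0$, and these are local.
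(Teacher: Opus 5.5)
\textbf{The gap is in Step 2.} Your overall route matches the paper's. You localise (the identity $\LL_R=\LL_R^{\mathrm{loc}}\otimes\mathrm{id}$ is correct). You use the Gao--Rouz\'e lower bound in terms of the spectral gap and the index of the fixed-point conditional expectation, passing through the complete MLSI, which is the right way to absorb $\lsym\setminus\bar R$. You then minimise over finitely many rectangle shapes. The problem is that only the bare jump operators $\SSS_e$ are assumed translation invariant. The rates $h^{(\beta)}_{e,i}(\omega)$ are arbitrary positive numbers obeying detailed balance, and they may depend on $e$ and on $\lsym\in\FF$. The only uniform information about them is $\inf_{e,i,\omega}h^{(\beta)}_{e,i}(\omega)e^{-\beta\omega/2}\geq g$. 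Hence $\LL_R^{\mathrm{loc}}$ is not determined by the shape of $R$. Infinitely many local generators occur, and since the rates are unbounded above they do not even form a compact family. So ``take the minimum over a finite list'' is not available.

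A symptom is that your argument never uses $g$, yet without $g$ the statement is false. Suppose the rates at one edge $e$ of a fixed small rectangle $R$ tend to $0$ along $\FF$. The kernel of $\LL_R$, and hence $\E_R^\beta$, does not change. But an element of $\ker\LL_{R\setminus\{e\}}\setminus\ker\LL_R$ has Dirichlet form of the order of the vanishing rates and a fixed positive variance. An example is $B_{p'}(\chi)$ with $p'$ the plaquette outside $R$ adjacent to a boundary edge $e$. Therefore $\gap(\LL_R)\to0$, and with it $\alpha(\LL_R)$.

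\textbf{The missing step is the comparison the paper performs before using finiteness.} Let $\tilde{\LL}_R$ be the Davies generator with the same bare jump operators and the minimal admissible, site-independent rates $g\,e^{\beta\omega/2}$. Split the Davies generator into blocks, each pairing a jump $S_{e,i}(\omega)$ with its adjoint partner. Each block is GNS-symmetric with respect to $\rho$ and has a nonnegative Dirichlet form. Because $h^{(\beta)}_{e,i}(\omega)e^{-\beta\omega/2}\geq g$, each block dominates the corresponding block of $\tilde{\LL}_R$. It follows that $-\langle X,\LL_R X\rangle_{\rho,1}\geq-\langle X,\tilde{\LL}_R X\rangle_{\rho,1}$ for all $X$.

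The kernel of $\LL_R$, and therefore $\E_R^\beta$ and its index, does not depend on the positive rates. This gives $\gap(\LL_R)\geq\gap(\tilde{\LL}_R)$ with the same index. So the Gao--Rouz\'e bound for $\LL_R$ is controlled by data of $\tilde{\LL}_R$ alone. Alternatively, comparing entropy productions directly gives $\alpha(\LL_R)\geq\alpha(\tilde{\LL}_R)$. Your Step 2 is valid only for $\tilde{\LL}_R$: translation invariance of $\SSS_e$ together with uniform rates makes $\tilde{\LL}_R^{\mathrm{loc}}$ depend only on the shape of $R$ (and on the finitely many small tori). Minimising over that finite list then yields $C$.
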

The proof is based on a general lower bound on the MLSI constant derived in \cite{gao_Completeentropicinequalities_2022}. 
The essential ingredient is a lower bound on the spectral gap of $\LL_R$.
As shown in \cite{stengele_ModifiedlogarithmicSobolev_2025}, the uniform lower bound $g$ on the jump rates implies a lower bound $\gap(\LL_R)\geq\gap(\tilde{\LL}_R)$, where $\tilde{\LL}_R$ is the Davies Lindbladian with jump rates $h_{e,i}(\omega) = g e^{\beta \omega}$, independent of $i$ and $e$.
Then, by translation invariance of the jump operators, the infimum $\inf_{R}\gap(\tilde{\LL}_R)$ simplifies to a minimum, since there are only finitely many sizes of rectangles below a given size.

{\small
\paragraph{Acknowledgements.} 
		This work was supported by the DFG under grant TRR 352--Project-ID 470903074 (A.\,C.,C.\,R.,S.\,S.,S.\,W.). S.\,W. was also supported under EXC-2111 -- 390814868.
 A.\,L.~acknowledges support from the Italian Ministry of University and Research (MUR), through ``Programma per Giovani Ricercatori Rita Levi Montalcini'', the grant ``Dipartimento di Eccellenza 2023-2027'' of Dipartimento di Matematica, Politecnico di Milano, as well as the National Group of Mathematical Physics (GNFM) of the Italian Institute for High Mathematics (INdAM).
 L.\,G.~acknowledges
  support from the National Natural Science Foundation of China (grant No.~12401163), the Department of Science and Technology of Hubei Province  (Project No. 2025EHA041, Project No. 2025AFA044).\\
 D.\,P-G. and A.\,P-H.~acknowledge financial support from the following grants PID2020-113523GB-I00 and PID2023-146758NB-I00  funded by MICIU/AEI/10.13039/501100011033. D.\,P-G. acknowledges support from grant  TEC-2024/COM-84-QUITEMAD-CM, funded by Comunidad de Madrid, and grant CEX2023-001347-S, funded by MICIU/AEI/10.13039/501100011033. This work has been financially supported by the Ministry for Digital Transformation and the Civil Service of the Spanish Government through the QUANTUM ENIA project call – Quantum Spain project, and by the European Union through the Recovery, Transformation and Resilience Plan – NextGenerationEU within the framework of the Digital Spain 2026 Agenda. This project was funded within the QuantERA II Programme which has received funding from the EU’s H2020 research and innovation programme under the GA No 101017733.
}

        \bibliographystyle{abbrvurl}
        \bibliography{lit}

        \end{document}